\documentclass[journal]{IEEEtran}

\usepackage{hyperref}
\hypersetup{hidelinks}
\usepackage{xcolor,soul,framed} 

\colorlet{shadecolor}{yellow}
\usepackage[pdftex]{graphicx}
\DeclareGraphicsExtensions{.pdf,.jpeg,.png}

\usepackage{amsfonts,amssymb}
\usepackage{mathrsfs}
\usepackage{amsthm}
\usepackage[cmex10]{amsmath}
\usepackage{mathtools}
\usepackage{bbm}
\DeclareMathAlphabet{\mathbbb}{U}{bbold}{m}{n}
\usepackage{array}
\usepackage{aligned-overset}
\usepackage{flushend}
\usepackage{physics}
\usepackage{cite}
\usepackage{makecell}
\usepackage{threeparttable}
\usepackage{comment}
\usepackage{colortbl}

\usepackage{enumitem}

\theoremstyle{definition}

\newtheorem{lemma}{Lemma}
\newtheorem{definition}{Definition}

\newtheorem{proposition}{Proposition}

\theoremstyle{remark}
\newtheorem{remark}{Remark}




\newcommand{\pha}[1]    {\underline{#1}}

\newcommand{\tabitem}{~~\llap{\textbullet}~~}
\newcommand{\tabitemm}{~~\llap{}~~}

\definecolor{color_voltage_forming}{RGB}{89,169,90}
\definecolor{color_current_forming}{RGB}{77,133,189}
\definecolor{color_cross_forming}{RGB}{247,144,71}


\begin{document}
\bstctlcite{IEEEexample:BSTcontrol}
    \title{Cross-Forming Control and Fault Current Limiting for Grid-Forming Inverters}
    \author{Xiuqiang~He,~\IEEEmembership{Member,~IEEE,}                         
            Maitraya~Avadhut~Desai,~\IEEEmembership{Graduate Student Member,~IEEE,} \\
            Linbin~Huang,~\IEEEmembership{Member,~IEEE,}
          and~Florian~Dörfler,~\IEEEmembership{Senior Member,~IEEE}
  \thanks{The authors have filed a patent associated with this work, “Method and a controller for controlling a grid forming converter”, pending to ETH Zurich.}
  \thanks{X. He, L. Huang, and F. Dörfler are with the Automatic Control Laboratory, ETH Zurich, 8092 Zurich, Switzerland. M.A. Desai is with the Power System Laboratory, ETH Zurich, 8092 Zurich, Switzerland. Emails: xiuqhe@ethz.ch, desai@eeh.ee.ethz.ch, linhuang@ethz.ch, dorfler@ethz.ch.}}

\maketitle


\begin{abstract}

This article proposes a ``cross-forming" control concept for grid-forming inverters operating against grid faults. Cross-forming refers to \textit{voltage angle forming} and \textit{current magnitude forming}. It differs from classical grid-forming and grid-following paradigms that feature voltage magnitude-and-angle forming and voltage magnitude-and-angle following (or current magnitude-and-angle forming), respectively. The cross-forming concept addresses the need for inverters to remain grid-forming (particularly \textit{voltage angle forming}, as required by grid codes) while managing \textit{fault current limitation}. Simple and feasible cross-forming control implementations are proposed, enabling inverters to quickly limit fault currents to a prescribed level while preserving voltage angle forming for grid-forming synchronization and providing dynamic ancillary services, during symmetrical or asymmetrical fault ride-through. Moreover, the cross-forming control yields an equivalent system featuring a \textit{constant} virtual impedance and a ``normal form" representation, allowing for the extension of previously established transient stability results to include scenarios involving current saturation. Simulations and experiments validate the efficacy of the proposed cross-forming control implementations.
\end{abstract}

\begin{IEEEkeywords}
Current limiting, fault ride-through (FRT), grid faults, grid-forming inverters, overcurrent, transient stability.
\end{IEEEkeywords}

\section{Introduction}

\IEEEPARstart{G}{rid}-forming inverters play a crucial role in future power systems in autonomously regulating grid frequency and voltage. While a grid-forming inverter operates like a voltage source, limiting its current during grid disturbances is critical to prevent potential overcurrent damage. Moreover, grid-forming inverters should sustain transient stability during grid faults, ensuring synchronization while transitioning from one operating state to another. Transient stability is crucial for successful fault ride-through (FRT) and ancillary services during FRT, such as fault current injection and phase jump power provision. These are requirements outlined in recent grid-forming specifications, e.g., the Great Britain, Australian, and European grid codes \cite{gb2024,aemo2023voluntary,entso2024}; see a survey in \cite{bahrani2024grid,ghimire2024functional}. To satisfy these requirements, grid-forming inverters should maintain grid-forming synchronization and provide FRT ancillary services as continuously as possible, even when the current reaches the limit \cite{gb2024,aemo2023voluntary,entso2024,bahrani2024grid,ghimire2024functional}. These requirements involve technical challenges in limiting fault current, maintaining transient stability (synchronization), and providing FRT ancillary services simultaneously \cite{baeckeland2024overcurrent,ordono2024current}.

\subsection{Related Work}

When grid-forming inverters operate under normal grid conditions, i.e., without current saturation, grid-forming synchronization and ancillary services provisions are well understood. More specifically, the transient stability of current-unsaturated grid-forming inverters has been widely investigated in the literature; see \cite{pan2020transient} for a comparative study and \cite{rosso2021gridforming} and \cite{zhang2021gridforming} for a review. In parallel, the provision of dynamic ancillary services for grid-forming inverters under normal operating conditions has also been extensively explored in the literature; see \cite{dorfler2023control} for a survey. In contrast to normal operating conditions, the critical challenge under grid fault conditions arises from current limiting. In the literature, current limiting for grid-forming inverters is addressed with three typical strategies: 1) adaptive/threshold virtual impedance \cite{paquette2015virtual,qoria2020current}; 2) current limiter cascaded with virtual admittance \cite{rosso2021implementation,kkuni2024effects,fan2022equivalent,zhang2023current,saffar2023impacts}; and 3) current-forming voltage-following control \cite{huang2019transient,rokrok2022transient,liu2023dynamic,li2023transient,wang2023transient,xin2021dual,schweizer2022grid}. Their different merits and shortcoming are described below.

\textit{1) Adaptive/Threshold Virtual Impedance:} This strategy uses current feedback to increase the virtual impedance magnitude adaptively, thus reducing the voltage reference and ultimately reducing the overcurrent \cite{paquette2015virtual,qoria2020current}. This strategy can preserve the original grid-forming synchronization and ancillary services provision. However, the parameter tuning thereof is quite complicated due to the design being tuned to ensure reliable current limiting in the face of the worse-case fault \cite{paquette2015virtual}. In most cases, the strategy underutilizes the overcurrent capability due to the inherent performance limitations of proportional feedback regulation. Moreover, the equivalent virtual impedance varies with state-dependent overcurrent severity, distorting the power output characteristics \cite{liu2022current,qoria2020critical} and thus deteriorating transient stability and complicating its analysis.

\textit{2) Current Limiter Cascaded With Virtual Admittance:} This strategy uses a current limiter along with a virtual admittance. The virtual admittance acts as a voltage proportional regulator \cite{fan2022equivalent,zhang2023current,saffar2023impacts}, avoiding limiter-induced windup issues inherent in integrator-included regulators (e.g., classical proportional-integral regulators), and meanwhile preserving grid-forming synchronization and ancillary services provision. This strategy is simple to implement, easy to tune, and fully utilizes the overcurrent capability. However, the current saturation still leads to a varying equivalent impedance \cite{fan2022equivalent,zhang2023current}, similar to the virtual impedance strategy. Likewise, the varying impedance distorts the power-angle characteristics (from an ideal sine function to a highly nonlinear one) \cite{fan2022equivalent,kkuni2024effects,zhang2023current,saffar2023impacts}, jeopardizing transient stability and complicating its analysis.

\textit{3) Saturated Current-Forming Control:} This strategy deactivates the voltage control during current saturation while preserving the current control \cite{huang2019transient,rokrok2022transient,liu2023dynamic,li2023transient,wang2023transient}. The reference angle of the current vector control is generated by a frequency droop control. The strategy falls into the category of current-forming (grid-following) controls \cite{li2022revisiting} since the controlled variable is current rather than voltage. Hence, while effective for current limiting, it cannot provide voltage-forming behaviors independently, not fulfilling the requirements of grid codes.

Concisely, the current-limiting prior arts are affected by multiple shortcomings, rendering them insufficient for fulfilling the requirements for current limiting, transient stability guarantees, and grid-forming ancillary services during FRT. A thorough review and comparison of current-limiting strategies can be found in recent review articles \cite{baeckeland2024overcurrent} and \cite{ordono2024current}.

\begin{figure*}
  \begin{center}
  \includegraphics{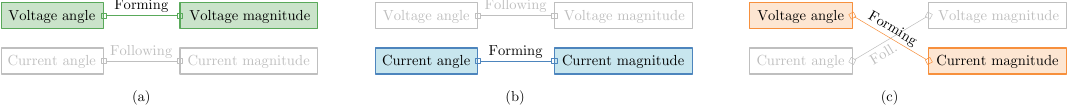}
  \caption{Illustrations of (a) grid-forming (voltage-forming current-following), (b) grid-following (voltage-following current-forming), and (c) cross-forming modes.}
  \label{fig:forming-illustration}
  \end{center}
\end{figure*}

\subsection{Motivations}

We aim to develop a new control concept, termed cross-forming, to address fault current limiting in grid-forming inverters. The concept of cross-forming integrates both voltage angle forming and current magnitude forming characteristics, driven by three critical observations. First, the preservation of the voltage angle forming is critical to ensure grid-forming synchronization and provide essential frequency- and angle-sensitive ancillary services, such as phase jump \textit{active power} delivery and fault \textit{reactive current} injection \cite{gb2024,aemo2023voluntary,entso2024}. Second, during periods of current saturation, the inverter terminal voltage magnitude has been observed to passively follow the grid operation, a phenomenon recently described as ``voltage decline" in \cite{zhang2024control}, thus preventing the need for explicit voltage magnitude forming under these conditions. Third, the current-limiting constraint suggests that the current magnitude forming is required. These facts/observations motivate us to enforce the \textit{voltage angle forming} and \textit{current magnitude forming} characteristics. Therefore, this type of control exhibits cross-forming behaviors during current saturation, as illustrated in Fig.~\ref{fig:forming-illustration}.

\subsection{Contributions}

The contributions of this article are summarized as follows:
\begin{itemize}
    \item We clarify that voltage angle forming is crucial for the inverters to satisfy grid code requirements, current magnitude limiting is also necessary, and voltage magnitude forming is lost under current saturation. Accordingly, we present the concept of \textit{cross-forming control} and show that it naturally achieves fault current limitation, grid-forming synchronization, FRT services provision, etc.
    \item We develop two simple and feasible cross-forming control implementations that apply to symmetrical and asymmetrical grid fault conditions. The cross-forming control yields an equivalent circuit featuring a \textit{constant virtual impedance} and a current-dependent virtual internal voltage magnitude, differing from prior results. 
    \item Based on the equivalent circuit, we present an equivalent normal form of the cross-forming system, which represents structurally identical synchronization dynamics to the normal form of current-unsaturated grid-forming systems. Therefore, the existing transient stability analysis approaches and results for current unsaturation are readily extended to current-saturated conditions.
    \item As a minor contribution, we provide a survey of prior arts in appendices, including grid-forming controls, negative-sequence controls, and typical current-limiting strategies, facilitating the understanding of our contributions, stimulating further research of grid-forming technologies, and offering useful references for practical applications.
\end{itemize}
We highlight that the proposed cross-forming control offers a fast response, fully utilizes overcurrent capabilities, adapts to various disturbances, is simple to implement, easy to tune, and robust in stability performance. Hence, it stands as a promising solution for future grid-forming product development.

\subsection{Organization and Notation}

The remainder of the article is organized as follows. Section~\ref{sec:grid-forming-modes} outlines grid-forming control objectives in the presence of grid faults, defines the cross-forming concept, and discusses its capabilities to serve grid code requirements. Section~\ref{sec:cross-forming-controls} presents the cross-forming control design and a comparison with the existing strategies. Section~\ref{sec:transient-stability} derives the equivalent normal form and extends the prior transient stability results. Simulation and experimental validations are provided in Section~\ref{sec:validations}. Section~\ref{sec:conclusion} concludes the article. Finally, a self-contained review of relevant prior arts is provided in Appendices~\ref{appendix:standard-gfm} to \ref{appendix:current-limiting}. 

\textit{Notation:} We use $\pha v \coloneqq v_{\alpha} +j v_{\beta}$ and $\pha i \coloneqq i_{\alpha} +j i_{\beta}$ to denote voltage and current vectors in the stationary reference frame, respectively, and $\pha v_{\mathrm{dq}} \coloneqq v_{\mathrm{d}} +j v_{\mathrm{q}}$ and $\pha i_{\mathrm{dq}} \coloneqq i_{\mathrm{d}} +j i_{\mathrm{q}}$ to denote voltage and current vectors in the rotational reference frame, respectively. The underlines throughout the article indicate complex variables. $\Re{\cdot}$, $\Im{\cdot}$, $\abs{\cdot}$, and $\angle(\cdot)$ denote the real part, the imaginary part, the modulus, and the angle of a complex variable, respectively, and $\mathrm{conj(\cdot)}$ denotes the associated complex conjugate. Superscripts $^+$ and $^-$ indicate positive- and negative-sequence components, respectively.

\section{Grid-Forming Control Objectives, Operation Modes, and FRT Services Provision}
\label{sec:grid-forming-modes}

The aim of this section is three-fold. We first outline grid-forming control objectives under grid faults from the perspective of stability and grid code requirements. Then, we define and categorize typical operation modes of power inverters, and finally, we investigate their FRT services provision capabilities under current saturation to meet grid code requirements.

\subsection{Grid-Forming Control Objectives Under Grid Faults}
\label{sec:control-objectives}

The control objectives of grid-forming inverters under grid faults should ideally remain the same as under normal conditions. However, grid-forming inverters must limit their output current to prevent overcurrent damage. Moreover, under asymmetrical grid faults, the control of negative-sequence components should also be of concern. We summarize the grid-forming control objectives under grid faults in the following.

\subsubsection{Current Limiting} The current limit of the inverters must be respected regardless of the control objectives or strategies used. This requires that the maximum phase current magnitude remains within or at the limit. Moreover, voltage limits preventing overmodulation are also of concern in practice, but mostly relevant for high-/over-voltage ride-through.

\subsubsection{Grid-Forming Synchronization} Synchronization of angular frequency is a necessary condition for closed-loop control and stability, typically achieved by grid-forming-type feedback control architecture (i.e., grid-forming synchronization or more specifically \textit{voltage angle forming} synchronization).

\subsubsection{Positive-Sequence FRT Services Provision} The regular dynamic ancillary services under normal grid conditions typically include frequency and voltage regulation. Concerning grid faults, grid-forming devices are required by grid codes to maintain a constant (positive-sequence) internal voltage phasor in a short time frame (before current saturation) and to be able to supply fast fault reactive current or phase jump active power after grid voltage dips or phase jumps, respectively \cite{gb2024,aemo2023voluntary,entso2024} (for instance, as specified in the GB grid code \cite{gb2024}, with an initial delay less than $5$ ms and full activation time less than $30$ ms).

\subsubsection{Negative-Sequence FRT Services Provision} In Fig.~\ref{fig:phase-sequence-circuit}, we show generic equivalent circuits in the phase domain and the sequence domain under asymmetrical grid faults. The FRT services in the negative-sequence domain are also of concern, and admit typically multiple options, depending on specific requirements; see Appendix~\ref{appendix:negative-sequence-current}. The options include: i) negative-sequence grid-forming (behaving as a voltage source), symmetric to positive-sequence grid-forming \cite{awal2023double}; ii) negative-sequence current injection (behaving as a current source) for power oscillation suppression \cite{zheng2018flexible}; and iii) negative-sequence voltage mitigation \cite{nasr2023controlling} (behaving as a virtual impedance), a requirement of grid codes; see IEEE Std. 2800 \cite{ieee2800}).

\begin{figure}
  \begin{center}
  \includegraphics{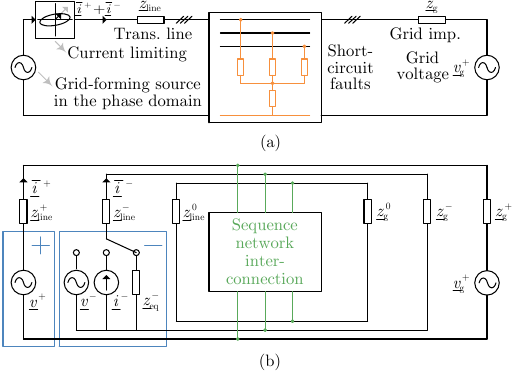}
  \caption{(a) Phase-domain and (b) sequence-domain generic equivalent circuits for grid-forming inverters under grid short-circuit faults, where the grid-forming operation is maintained in the positive sequence while the negative sequence admits different options among negative-sequence grid-forming \cite{awal2023double}, current injection \cite{zheng2018flexible}, or impedance emulation \cite{nasr2023controlling}.}
  \label{fig:phase-sequence-circuit}
  \end{center}
\end{figure}

\subsection{Multiple Forming Control Modes}
\label{sec:three_types_def}

While extensive research on grid-forming inverters is ongoing, there is no universally accepted definition of grid-forming. In this work, we follow the definition in \cite{li2022revisiting}, which is based on the structural duality perspective, to introduce three types of forming control modes, as displayed in Fig.~\ref{fig:forming-illustration}. The first two modes are well known, while the last is novel, which inspires new solutions to address current limiting and transient stability challenges in grid-forming inverters under grid faults.

\begin{definition}
\label{def:voltage-forming}
\textit{Grid-Forming Mode:} A voltage-forming current-following mode whose voltage angle and magnitude are independently controlled and whose current depends on an exogenous grid, as illustrated in Fig.~\ref{fig:forming-illustration}(a).
\end{definition}

\begin{definition}
\label{def:current-forming}
\textit{Grid-Following Mode:} A current-forming voltage-following mode whose current angle and magnitude are independently controlled and whose voltage depends on an exogenous grid, as shown in Fig.~\ref{fig:forming-illustration}(b).
\end{definition}

\begin{definition}
\label{def:cross-forming}
\textit{Cross-Forming Mode:}\footnote{From a combinatorial point of view, there exist other cross-forming modes, such as controlling the current angle and the voltage amplitude, but they may not be relevant to the grid-forming control objectives under grid faults.} A cross-forming mode whose voltage angle and current magnitude are independently controlled and whose voltage magnitude and current angle depend on an exogenous grid, as illustrated in Fig.~\ref{fig:forming-illustration}(c).
\end{definition}

The comparison of these three forming control modes is shown in Table~\ref{tab:gfm-operation-mode}. Since power grids provide voltage source behaviors, the concept of grid-forming and grid-following is more precisely \textit{voltage-forming} and \textit{voltage-following}, respectively \cite{li2022revisiting}. In particular, a voltage-forming source and a current-forming source share a dual relationship \cite{li2022revisiting}. Voltage-forming control modes typically include droop control, virtual synchronous machine (VSM), dispatchable virtual oscillator control (dVOC), and SM-matching, with forming characteristics in voltage angle and magnitude. In contrast, current-forming modes feature forming characteristics in current angle and magnitude. Typical examples are phase-locked loop (PLL)-based current control and the control schemes that are structurally dual to voltage-forming controls \cite{li2022revisiting,xin2021dual}.

\begin{table*}
\centering
\caption{Three Forming Control Modes, Characteristics, and FRT Services Provision Capabilities}
\begin{threeparttable}
\arrayrulecolor{black!10}
\begin{tabular}{llll}
\arrayrulecolor{black}
\hline\hline
 & Voltage-forming mode  & Current-forming mode  & Cross-forming mode \\
\hline
\makecell[l]{Illustrations} & \makecell[l]{\includegraphics[width=4.4cm]{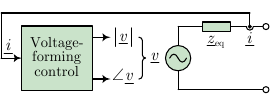}} & \makecell[l]{\includegraphics[width=4.4cm]{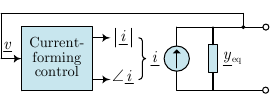}} & \makecell[l]{\includegraphics[width=4.4cm]{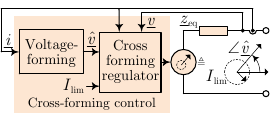}}\\
\arrayrulecolor{black!10}
\hline
Examples & \makecell[l]{Droop, VSM, \\ dVOC, complex droop control, \\ SM-matching, and dual-port control\tnote{1}} & \makecell[l]{PLL-based current control, and \\ the dual counterparts of voltage \\ forming controls} & \makecell[l]{Two types; see Section~\ref{sec:cross-forming-controls}} \\
\hline
References & \makecell[l]{\cite{rosso2021gridforming,dorfler2023control}} & \makecell[l]{\cite{he2019resynchronization,huang2019transient}} & This article \\
\hline
Voltage behavior & {\color{color_voltage_forming}Angle and magnitude forming} & Angle and magnitude following & {\color{color_cross_forming}Angle forming}, magnitude following \\
\hline
Current behavior & Angle and magnitude following & {\color{color_current_forming}Angle and magnitude forming} & Angle following, {\color{color_cross_forming}magnitude forming} \\
\hline
Synchronization & {Voltage angular frequency} & {Current angular frequency} & {Voltage angular frequency} \\
\hline
Fault current injection & Yes, natural response & Yes, but need setpoint adjust & Yes, natural response \\
\hline
Phase jump power delivery & Yes, natural response & Yes, but need setpoint adjust & Yes, natural response \\
\hline
Negative-sequence services\tnote{2} & Yes, flexible & Yes, flexible & Yes, flexible \\
\hline
Current limiting & No, additional remedies required\tnote{3} & Yes, inherent & Yes, inherent \\
\arrayrulecolor{black}
\hline \hline
\end{tabular}
\label{tab:gfm-operation-mode}
\begin{tablenotes}
    \item[1] A comprehensive review of grid-forming (voltage-forming) controls is provided in Appendix~\ref{appendix:standard-gfm}.
    \item[2] A comprehensive review and rigorous proof of multiple control modes for negative-sequence components are presented in Appendix~\ref{appendix:negative-sequence-current}.
    \item[3] A categorized review of typical existing current-limiting strategies for the voltage-forming control mode is presented in Appendix~\ref{appendix:current-limiting}.
\end{tablenotes}
\end{threeparttable}
\end{table*}

The control objectives outlined in Section~\ref{sec:control-objectives} require us to enforce both current limiting (necessary) and voltage forming (as much as possible) under grid faults. The objectives cannot be achieved simply using either voltage-forming mode or current-forming mode, since the former requires additional remedies to limit current, while the latter is opposed to voltage forming. To naturally fulfill the control objectives, we propose the concept of cross-forming in Definition~\ref{def:cross-forming}. The cross-forming mode features the \textit{voltage angle forming} and \textit{current magnitude forming} characteristics, therefore preserving the voltage angle forming capability and enabling an inherent current magnitude limiting capability. The capability of the \textit{voltage magnitude forming}, established as under normal operating conditions, is lost in case of current saturation \cite{zhang2024control}. We formulate and prove this finding in Proposition~\ref{prop:voltage-following}.

\begin{proposition}
\label{prop:voltage-following}\textit{Voltage Magnitude Following Under Current Saturation:}
Consider the three-phase balanced\footnote{Proposition~\ref{prop:voltage-following} can be extended to unbalanced cases, where the Thevenin equivalent grid in Fig.~\ref{fig:voltage-following-proof}(a) models a positive-sequence aggregated circuit.} circuit system in Fig.~\ref{fig:voltage-following-proof}(a) and assume that the current magnitude of the inverter is saturated, $\abs{\pha i} = I_{\lim}$, (it can therefore only be a current-forming source or a cross-forming source). Assume that the circuit admits a steady state. The voltage magnitude $\abs{\pha v}$ of the inverter terminal in the steady state is determined by the exogenous conditions, i.e., it depends on the grid voltage $\pha{v}_{\mathrm{g}}$ and cannot be independently specified by the source itself.
\end{proposition}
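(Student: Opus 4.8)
The plan is to reduce the steady-state circuit to a single scalar equation in the unknown $\abs{\pha{v}}$ and to exhibit that equation's coefficients as functions of exogenous data alone. First I would write the steady-state phasor KVL for the Thevenin equivalent of Fig.~\ref{fig:voltage-following-proof}(a), which is algebraic and takes the form $\pha{v}-\pha{v}_{\mathrm{g}}=\pha{Z}_{\mathrm{g}}\,\pha{i}$, where $\pha{Z}_{\mathrm{g}}$ is the fixed grid impedance evaluated at the synchronous frequency. A degrees-of-freedom count already motivates the claim: the port variables $(\abs{\pha{v}},\angle(\pha{v}),\abs{\pha{i}},\angle(\pha{i}))$ carry four real degrees of freedom, the complex KVL relation removes two, current saturation fixes $\abs{\pha{i}}=I_{\lim}$, and each admissible mode fixes exactly one further quantity---$\angle(\pha{i})$ for current-forming or $\angle(\pha{v})$ for cross-forming---but never $\abs{\pha{v}}$. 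Hence $\abs{\pha{v}}$ lies among the dependent variables in both modes.

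The key step is to eliminate the only genuinely free internal quantity, the current phase $\angle(\pha{i})$, by taking the modulus of the KVL relation. Since $\abs{\pha{i}}=I_{\lim}$, this gives $\abs{\pha{v}-\pha{v}_{\mathrm{g}}}=\abs{\pha{Z}_{\mathrm{g}}}\,I_{\lim}$, so $\pha{v}$ is pinned to a circle centered at $\pha{v}_{\mathrm{g}}$ with radius $\abs{\pha{Z}_{\mathrm{g}}}\,I_{\lim}$ fixed entirely by the grid; note that the sign convention of Fig.~\ref{fig:voltage-following-proof}(a) is immaterial here, as only magnitudes enter. Expanding the squared modulus in polar form yields the scalar quadratic
\[
\abs{\pha{v}}^{2}-2\,\abs{\pha{v}}\,\abs{\pha{v}_{\mathrm{g}}}\cos\!\big(\angle(\pha{v})-\angle(\pha{v}_{\mathrm{g}})\big)+\abs{\pha{v}_{\mathrm{g}}}^{2}-\abs{\pha{Z}_{\mathrm{g}}}^{2}I_{\lim}^{2}=0,
\]
whose coefficients depend only on $\pha{v}_{\mathrm{g}}$, $\pha{Z}_{\mathrm{g}}$, $I_{\lim}$, and the controlled voltage angle $\angle(\pha{v})$. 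Solving for $\abs{\pha{v}}$ then expresses it explicitly through exogenous data and set-points, leaving the source no freedom to assign it.

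To close, I would separate the two admissible modes. In current-forming mode $\pha{i}$ is fully prescribed, so KVL determines $\pha{v}$ outright and $\abs{\pha{v}}$ with it, the quadratic being merely consistent. In cross-forming mode $\angle(\pha{v})$ is prescribed and the quadratic returns $\abs{\pha{v}}$ directly. The main obstacle---beyond the bookkeeping---is treating the solution set cleanly: the quadratic admits up to two roots, so I would invoke the standing assumption that a steady state exists (equivalently, that the projection of KVL orthogonal to $\pha{v}$ is solvable for a real current angle, i.e.\ that the discriminant is nonnegative) and retain the physically meaningful nonnegative root. Even if two nonnegative roots arise, the conclusion stands: $\abs{\pha{v}}$ is confined to a discrete set fixed by $\pha{v}_{\mathrm{g}}$, $\pha{Z}_{\mathrm{g}}$, $I_{\lim}$, and the controlled angle, hence cannot be specified independently by the source. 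The unbalanced extension in the footnote follows by reading every phasor as a positive-sequence quantity of the aggregated circuit.
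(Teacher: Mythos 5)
Your proposal is correct and follows essentially the same route as the paper: the paper's proof is exactly your argument rendered geometrically, with the modulus of KVL giving the circle $\abs{\pha v - \pha{v}_{\mathrm{g}}} = I_{\lim}\abs{\pha z}$ centered at the tip of $\pha{v}_{\mathrm{g}}$, the prescribed voltage-angle ray intersecting it in up to two points [Fig.~\ref{fig:voltage-following-proof}(c)], and the current-forming case dismissed as trivial. Your explicit law-of-cosines quadratic and the degrees-of-freedom count are just the algebraic form of the same picture, so no substantive difference remains.
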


\begin{figure}
  \begin{center}
  \includegraphics{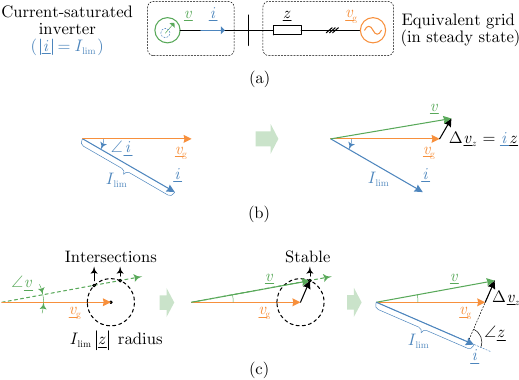}
  \caption{(a) A generic circuit representing a current-saturated inverter ($\abs{\pha i} = I_{\lim}$) connected to a grid. In both (b) and (c), where $\angle{\pha i}$ and $\angle{\pha v}$ are specified, respectively, the voltage magnitude $\abs{\pha v}$ passively follows the circuit law.}
  \label{fig:voltage-following-proof}
  \end{center}
\end{figure}

\begin{proof}
The proof for the case of a current-forming source is trivial. As illustrated in Fig.~\ref{fig:voltage-following-proof}(b), the voltage drop, $\Delta \pha{v}_{z} = \pha v - \pha{v}_{\mathrm{g}}$, on the grid impedance is determined as $\Delta \pha{v}_{z} = \pha {i}\, \pha{z}$. Then, the voltage $\pha v$ is determined by the triangle in Fig.~\ref{fig:voltage-following-proof}(b).

In the case of a cross-forming source, since the current magnitude is specified, $\abs{\Delta \pha{v}_{z}} = \lvert \pha {i}\,\pha{z} \rvert = I_{\lim} \vert \pha z \vert$ is also determined. Consider the voltage equation in the steady state,
\begin{equation}
\label{eq:voltage-equation-law}
    \pha v = \pha{v}_{\mathrm{g}} + \Delta \pha{v}_{z} = \pha{v}_{\mathrm{g}} + \pha i\,\pha z.
\end{equation}
If the circuit admits a steady state, we can see from Fig.~\ref{fig:voltage-following-proof}(c) that the green dashed arrow, indicating the specified $\angle{\pha v}$, must cross the dashed circle, which has a radius $\abs{\Delta \pha{v}_{z}} = I_{\lim} \vert \pha z \vert$ and is centered at the vertex of $\pha{v}_{\mathrm{g}}$. This would imply that there exists $\pha v$ and $\Delta \pha{v}_{z}$ satisfying \eqref{eq:voltage-equation-law}. Then, the direction of $\pha i$ results from $\Delta \pha{v}_{z}$. There may exist two intersections, between which the one with a larger voltage magnitude (corresponding to inductive reactive current provision) will prove to be stable and the other is unstable, depending on a specific control; see Section \ref{sec:cross-forming-controls} later. For either intersection, the voltage $\pha v$ of the source follows from the given grid conditions. Hence, it cannot be independently specified by the source.
\end{proof}

Proposition~\ref{prop:voltage-following} reveals that for a current-saturated voltage-forming inverter (irrespective of the control method used), the terminal voltage magnitude can no longer be controlled independently (this phenomenon is described as ``voltage decline" in \cite{zhang2024control}). In other words, a current-saturated voltage-forming inverter loses the capability to impose the terminal voltage and, instead, exhibits a voltage magnitude following behavior to respect the circuit law. However, this voltage magnitude following behavior does not imply a challenge for grid-forming inverters in fulfilling grid code requirements. This is because maintaining a constant voltage phasor under a grid disturbance is necessary only for the virtual internal voltage and only for a short period before current saturation occurs, as required by grid codes \cite{gb2024,aemo2023voluntary,entso2024}. After current saturation, the focus shifts to supplying reactive or active current rather than maintaining the voltage phasor. Moreover, even when the voltage magnitude forming capability is lost, the voltage angle forming capability can still be preserved. This is crucial to provide an autonomous voltage angle reference for the fault reactive current provision during low-voltage ride-through, phase jump active power provision during phase jump ride-through, or active inertia power provision during frequency drifts, as required in grid codes \cite{gb2024,aemo2023voluntary,entso2024}. In this sense, Proposition~\ref{prop:voltage-following} motivates the notion of cross-forming behavior, as explained in Section~\ref{sec:cross-forming-controls} below.

\subsection{Capabilities to Serve Grid Code Requirements}
\label{sec:gfm-capability}

We now examine the inherent capabilities of the three control modes to provide FRT services. These services include fast fault current injection, phase jump active power delivery, and sink for imbalances---three key requirements commonly specified in grid codes concerning grid fault scenarios \cite{ghimire2024functional}.

\textit{Voltage-Forming Mode:}
This mode can naturally provide fast fault (reactive) current injection and phase jump (active) power as it acts as a voltage-forming source. This is illustrated in Fig.~\ref{fig:fault-services}(a). The reactive current $i_Q$ increases naturally when the grid voltage $\vert \pha{v}_{\mathrm{g}} \vert \downarrow$ dips, providing a fast reactive current to counteract the voltage dip. The active current $i_P$ increases naturally when the grid phase $\angle \pha{v}_{\mathrm{g}} \downarrow$ jumps backward, providing fast active power to counteract the phase jump. Since the voltage-forming mode requires the output current to naturally respond to disturbances, it does not provide any current-limiting capabilities on its own. Numerous remedies have been developed to address this infamous current-limiting challenge; see Appendix~\ref{appendix:current-limiting} for a categorized review and discussions about their limitations.

\begin{figure}
  \begin{center}
  \includegraphics{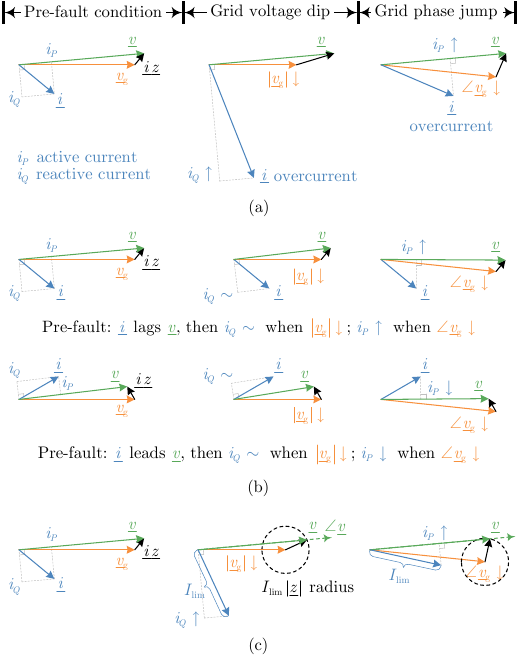}
  \caption{Provisions of fault reactive current $i_Q$ and phase jump active current $i_P$ at the moment of a grid voltage dip and phase jump, respectively, resulting from the circuit equation of $\pha v = \pha{v}_{\mathrm{g}} + \pha i\,\pha z$. (a) Voltage-forming mode ($\pha{v}$ fixed), (b) Current-forming mode ($\pha{i}$ fixed), and (c) Cross-forming mode ($\angle \pha{v}$ fixed and $\lvert \pha{i} \rvert$ limited to $I_{\mathrm{lim}}$).}
  \label{fig:fault-services}
  \end{center}
\end{figure}

\textit{Current-Forming Mode:}
The response of this mode to a voltage dip $\vert \pha{v}_{\mathrm{g}} \vert \downarrow$ or phase jump $\angle \pha{v}_{\mathrm{g}} \downarrow$ is depicted in Fig.~\ref{fig:fault-services}(b). This mode cannot provide any voltage-forming source behaviors independently. When the grid voltage dips, if the current reference is fixed, the reactive current $i_Q$ will remain almost the same as the pre-fault condition (assuming that the current-forming mode is also used before the fault). To provide a required reactive current, one needs to adjust the current reference \cite{xin2021dual,schweizer2022grid}. When the grid phase jumps, the active current $i_P$ may increase or decrease, depending on the pre-fault condition. In other words, the natural response of $i_P$ may not satisfy grid code specifications. Likewise, one needs to adjust the current reference to ensure a compliant active power provision \cite{schweizer2022grid}. On the other hand, the merit of the current-forming mode lies in its inherent current-limiting capability.

\textit{Cross-Forming Mode:}
Since the voltage angle is controlled autonomously in the cross-forming mode, this control mode provides natural reactive current $i_Q$ under a grid voltage dip and active current $i_P$ under a phase jump, as illustrated in Fig.~\ref{fig:fault-services}(c). This occurs without the need for voltage angle detection, as required in PLL-based grid-following control. Hence, the voltage angle forming characteristics facilitate both fault reactive current injection and phase jump active power provision. Meanwhile, the current magnitude forming characteristics inherently enable current limiting. It is noted that the provision of $i_Q$ and $i_P$ is not as large as in the (current-unlimited) voltage-forming mode, since the current magnitude is limited within $I_{\lim}$ in the cross-forming mode. This reflects the inherent capabilities of a current-limited grid-forming inverter to provide FRT services. Concerning response speed, the cross-forming mode is comparable to the voltage-forming mode, as it builds on a similar implementation (see the control design in the next section). Therefore, the cross-forming mode can meet grid code requirements for fast response speed.

Regarding negative-sequence services under asymmetrical grid faults, typical requirements include twice-fundamental-frequency power oscillation suppression \cite{jia2017review} and voltage imbalance mitigation \cite{ieee2800}. Our review and rigorous proof in Appendix~\ref{appendix:negative-sequence-current} indicates that the achievement of negative-sequence services is not difficult, and the control of negative-sequence components can be readily implemented in any of the control architectures, remaining flexible for all three control modes.

\section{Cross-Forming Control}
\label{sec:cross-forming-controls}

This section aims to develop simple and feasible control designs for the cross-forming mode. We first consider three-phase balanced conditions and then extend the results to unbalanced conditions. Our control designs will also apply to single-phase inverters (in which orthogonal signal generators are required for controlling single-phase signals in a two-axis reference frame \cite{rezazadeh2023single}).

A generic block diagram of the cross-forming control architecture is shown in Fig.~\ref{fig:desired-equivalent-circuit}(a), which comprises four modules. The standard voltage-forming reference module aims to provide a voltage-forming reference $\pha {\hat v}$, e.g., droop control or dVOC. The current limiter and current tracking control modules are also standard modules. The cross-forming regulator is a novel and crucial module, which takes the voltage reference $\pha {\hat v}$ to generate a current reference $\pha {\hat i}$. The cross-forming regulation is activated only during current saturation. This regulator operates as a standard virtual admittance control both before current saturation and after saturation recovery, maintaining the original voltage-forming mode and thus also meeting grid code requirements for current-unsaturated grid-forming operation. We first present a desired equivalent circuit of cross-forming behaviors and then use it to design the control implementations of the cross-forming regulator.

\subsection{Desired Equivalent Circuit of Cross-Forming Mode}

A desired equivalent circuit of the cross-forming control is shown in Fig.~\ref{fig:desired-equivalent-circuit}(b), where the output current has a specified magnitude $I_{\lim}$, and the virtual internal voltage has a specified angle $\angle \pha {\hat v}$. Moreover, a (constant) virtual impedance $\pha z_{\mathrm{v}}$ can be added to enhance control flexibility and performance, the existence of which aligns with the grid code specifications for grid-forming inverters \cite{gb2024,aemo2023voluntary,entso2024}. Given that the magnitude of the virtual internal voltage cannot be independently specified by the source itself (cf. Proposition~\ref{prop:voltage-following}) and that the virtual internal voltage is required to have the same angle as $\angle \pha{\hat v}$ to preserve voltage angle forming, we denote the virtual internal voltage of the desired circuit in Fig.~\ref{fig:desired-equivalent-circuit}(b) as
\begin{equation}
\label{eq:virtual-internal-voltage}
    \pha{\hat v}_{\lambda} \coloneqq \lambda \pha{\hat v},
\end{equation}
where $\lambda \in \mathbb{R}_{>0}$ denotes the ratio between the virtual internal voltage $\pha{\hat v}_{\lambda}$ and the reference voltage $\pha {\hat v}$ (given by the voltage-forming reference module). The voltage equation of the desired equivalent circuit in Fig.~\ref{fig:desired-equivalent-circuit}(b) is thus given as
\begin{equation}
\label{eq:desired-voltage-equation}
    \pha{\hat v}_{\lambda} = \lambda \pha{\hat v} = \pha v + \pha z_{\mathrm{v}} \pha i, \quad \vert \pha i \vert = I_{\lim}.
\end{equation}
Since the voltage reference $\pha{\hat v}$ and the current magnitude $I_{\lim}$ are independent given variables, the voltage angle forming is preserved in the virtual internal voltage, and the current-limiting behavior is enforced. In contrast to these independent variables, the internal voltage magnitude $\lvert \pha{\hat v}_{\lambda} \rvert$ (and thus $\lambda$) as well as the resulting current angle $\angle {\pha{\hat i}}$ are dependent variables, i.e., control outputs. These dependent variables passively follow the circuit law, as revealed in Proposition~\ref{prop:voltage-following}. The cross-forming signal causality is illustrated in Fig.~\ref{eq:desired-voltage-equation}(c).

\begin{figure}
  \begin{center}
  \includegraphics{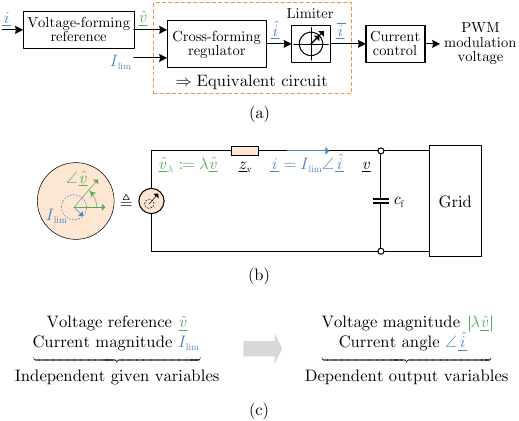}
  \caption{(a) Cross-forming control architecture, (b) desired equivalent circuit, and (c) cross-forming signal causality for inverters to perform voltage angle forming and current magnitude forming behaviors under current saturation.}
  \label{fig:desired-equivalent-circuit}
  \end{center}
\end{figure}

\begin{remark}
\textit{Differences From Prior Equivalent Circuits:}
We highlight that the desired equivalent circuit proposed in this work differs from previous results in the literature. Typically, prior current-limiting strategies use an adaptive virtual impedance or a current limiter with a virtual admittance. These strategies lead to a variable (specifically, current-dependent) virtual impedance in their equivalent circuits, as noted in Appendix~\ref{appendix:current-limiting}. In contrast, the virtual impedance resulting from our cross-forming control remains constant, while the virtual internal voltage magnitude becomes current-dependent, following the exogenous circuit's response to adapt to grid faults of different severity. The constant virtual impedance enables the formation of a constant equivalent power network, allowing us to recover an \textit{equivalent normal/canonical form} similar to the normal form of voltage-forming systems. Hence, aside from its charming simplicity with a constant impedance, our equivalent circuit facilitates transient stability analysis, since we can extend the existing transient stability results from current-unsaturated to current-saturated conditions, a benefit that remains unestablished with previous current-limiting strategies.
\end{remark}

\subsection{Cross-Forming Control Designs}

We propose two simple, viable, and robust feedback control implementations of the cross-forming regulator to ensure that inverters fulfill the desired voltage equation \eqref{eq:desired-voltage-equation}. As indicated in the desired voltage equation, the cross-forming regulator aims to match the angle of the virtual internal voltage angle with the angle of the voltage reference, i.e., $\angle \pha{\hat v}_{\lambda} = \angle \pha{\hat v}$, and generate a constant virtual impedance $\pha{z}_{\mathrm{v}}$. The cross-forming regulator thus differs from the classical inner voltage-tracking controller. To facilitate the design of the cross-forming regulator, we rewrite the desired circuit equation in \eqref{eq:desired-voltage-equation} as
\begin{equation}
\label{eq:desired-voltage-equation-re1}
    \pha { i} = \frac{1}{\pha z_{\mathrm{v}}} \left(\pha{\hat v}_{\lambda} - \pha v \right).
\end{equation}
We further define the degree of saturation (DoS, also known as current-limiting factor \cite{sadeghkhani2017current}) as the ratio between the saturated current reference $\pha {\bar{i}}$ and the unsaturated current reference $\pha{\hat i}$, i.e.,
\begin{gather}
\label{eq:dos-definition}
    \mathrm{DoS\, (balanced)\text{:}}\ \mu \coloneqq \frac{\, \pha {\bar{i}}\, }{\pha {\hat{i}}} \overset{\mathrm{(saturated)}}{=} \frac{I_{\lim}}{\lvert \pha{\hat i} \rvert},
\end{gather}
where $\mu \in (0,\, 1] \subset \mathbb{R}$ and $\pha {\bar{i}}$ and $\pha {\hat{i}}$ have the same angle. Based on the definition of $\mu$ and assuming that the output current $\pha i$ tracks the saturated current reference $\pha {\bar{i}}$, i.e., $\pha i = \pha {\bar{i}}$, the circuit equation in \eqref{eq:desired-voltage-equation-re1} is reformulated as
\begin{equation}
\label{eq:desired-voltage-equation-re2}
    \pha {\hat i} = \frac{1}{\pha z_{\mathrm{v}}} \left(\frac{\pha{\hat v}_{\lambda}}{\mu} - \frac{\pha v}{\mu} \right).
\end{equation}
Based on \eqref{eq:desired-voltage-equation-re1} or \eqref{eq:desired-voltage-equation-re2}, there are two distinct implementations of the cross-forming regulator. Both implementations have different pros and cons, and their relative merits will be discussed later in Remark~\ref{rem:comparison}.

\begin{figure}
  \begin{center}
  \includegraphics{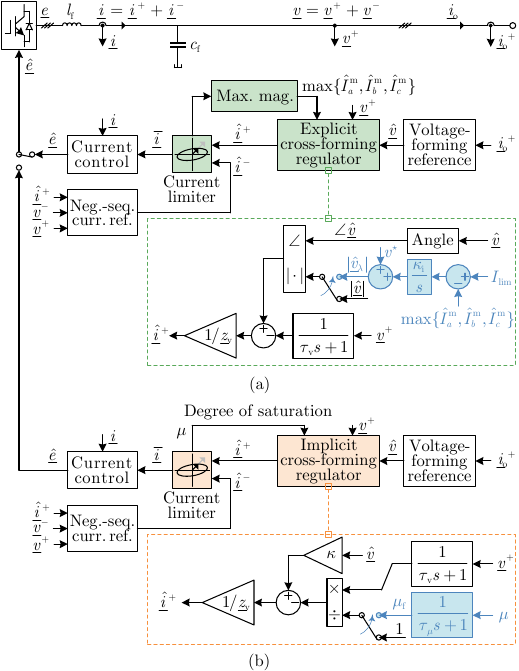}
  \caption{Proposed cross-forming implementations for grid-forming inverters operating against grid faults. (a) Explicit implementation in \eqref{eq:explicit-extended}. (b) Implicit implementation in \eqref{eq:implicit-extended}. With the incorporation of current information feedback (highlighted in blue color), cross-forming control distinguishes itself from classical virtual admittance control while remaining backward compatible, enabling simple switching between the two control modes.}
  \label{fig:cross-forming-controls}
  \end{center}
\end{figure}

\subsubsection{Explicit Cross-Forming Implementation}

In the first implementation, based on \eqref{eq:desired-voltage-equation-re1}, we first aim to design a suitable virtual internal voltage magnitude reference $\lvert \pha{\hat v}_{\lambda} \rvert$ so that the reference current magnitude $\lvert \pha{\hat i} \rvert$ (and thus the current magnitude $\lvert \pha{i} \rvert$) tracks the prescribed limit $I_{\lim}$. As given in \eqref{eq:explicit-cross-forming-a} below, we use an integral regulator to enforce the current limit and generate $\lvert \pha{\hat v}_{\lambda} \rvert$. Furthermore, we use the angle reference $\angle \pha{\hat v}$ given by the voltage-forming reference module to generate the virtual internal voltage vector reference as $\pha{\hat v}_{\lambda} = \lvert \pha{\hat v}_{\lambda} \rvert \angle \pha{\hat v}$, such that the angle forming characteristic is preserved. Based on this, the desired voltage equation in \eqref{eq:desired-voltage-equation-re1} is fulfilled using the virtual admittance control in \eqref{eq:explicit-cross-forming-b},
\begin{subequations}
\label{eq:explicit-cross-forming}
\begin{align}
\label{eq:explicit-cross-forming-a}
    \lvert \pha{\hat v}_{\lambda} \rvert &= v^{\star} + \frac{\kappa_{\mathrm{i}}}{s} \bigl(I_{\lim} - \lvert \pha{\hat i} \rvert \bigr),\\
\label{eq:explicit-cross-forming-b}
    \pha {\hat{i}} &= \frac{1}{\pha z_{\mathrm{v}}}\left(\lvert \pha{\hat v}_{\lambda} \rvert \angle \pha{\hat v} - \pha v \right),
\end{align}    
\end{subequations}
where $v^{\star}$ denotes the initial value of the voltage magnitude $\lvert \pha{\hat v}_{\lambda} \rvert$. The integrator aims to track the given current limit when overcurrent occurs, which is achieved by explicitly reducing the virtual internal voltage magnitude $\pha{\hat v}_{\lambda}$ whenever the current reference exceeds the limit. In other words, the current limiting is accomplished by dropping the voltage level without altering the original angle reference. In a current-saturated steady state, the current settles at the limit, and the virtual internal voltage magnitude settles at the value determined by the exogenous grid conditions. The control diagram is shown in Fig.~\ref{fig:cross-forming-controls}(a) (already extended to unbalanced cases, shown later). It is noted that current limiting is mainly tackled by the limit tracking in \eqref{eq:explicit-cross-forming-a}. The current limiter is not mandatory, but it can still be enabled to limit fast overcurrent transients. Moreover, if only the current limiter is used for current limiting, we would need additional feedback control to achieve the desired equivalent circuit, as will be shown in the implicit design below.

We note that the explicit cross-forming control allows for alternatives to \eqref{eq:explicit-cross-forming}. For instance, one can use more advanced controls, e.g., nonlinear or predictive control, to replace the linear integral control in \eqref{eq:explicit-cross-forming-a}. Alternatively, one can directly use the measured current $\abs{\pha i}$ as the feedback in \eqref{eq:explicit-cross-forming-a}, bypass \eqref{eq:explicit-cross-forming-b}, and then treat the voltage vector reference as the modulation voltage without using a current limiter and a current controller. In doing so, it is possible to tune the regulator in \eqref{eq:explicit-cross-forming-a} to be sufficiently fast to suppress overcurrents while preserving good small-signal stability performance \cite{liu2022current}. This is analogous to the classical single-loop voltage-magnitude control presented in \cite{liu2022current}, where, in contrast, a proportional current feedback control is employed to emulate a virtual resistor to perform current limiting.

\subsubsection{Implicit Cross-Forming Implementation}
The second implementation is based on \eqref{eq:desired-voltage-equation-re2}. Since $\vert \pha{\hat v}_{\lambda} \vert$ is an unknown dependent variable, \eqref{eq:desired-voltage-equation-re2} cannot be implemented directly. By substituting the virtual (unknown) variable ${\pha{\hat v}_{\lambda}}/{\mu}$ in \eqref{eq:desired-voltage-equation-re2} with the given voltage vector reference $\pha{\hat v}$ and a control gain $\kappa$, we present an implicit cross-forming regulator as
\begin{subequations}
\label{eq:implicit-cross-forming}
\begin{align}
\label{eq:implicit-cross-forming-a}
    \mathrm{DoS\, (balanced)\text{:}}\ \mu \overset{\mathrm{(saturated)}}&{=} \frac{I_{\lim}}{\lvert \pha{\hat i} \rvert},\\
\label{eq:implicit-cross-forming-b}
    \pha {\hat{i}} &= \frac{1}{\pha z_{\mathrm{v}}}\left(\kappa \pha{\hat v} - \frac{\pha v}{\mu} \right),
\end{align}
\end{subequations}
where $\kappa \in \mathbb{R}_{>0}$ is an independent control gain. Using the implicit cross-forming regulator in \eqref{eq:implicit-cross-forming}, we arrive at the following dependence of the virtual internal voltage in the desired circuit,
\begin{equation}
\label{eq:internal-voltage}
    \pha{\hat v}_{\lambda} = \kappa \mu \pha{\hat {v}} = \pha z_{\mathrm{v}} \pha {{i}} + {\pha v},\quad \Rightarrow \quad \lambda = \kappa \mu,
\end{equation}
which suggests that the internal voltage $\pha{\hat v}_{\lambda}$ preserves the same angle as the voltage reference $\pha{\hat {v}}$, thus preserving the angle forming behavior. In addition, since the magnitude of the virtual internal voltage is governed by the DoS $\mu$ indirectly, it is implicitly made to follow the circuit law. For example, if we choose $\kappa = 1$ simply, it will follow from \eqref{eq:internal-voltage} that $\lambda = \mu$, which implies that $\mu$ will serve the same role as $\lambda$ in the virtual internal voltage $\pha{\hat v}_{\lambda}$ in \eqref{eq:virtual-internal-voltage}.

\begin{figure}
  \begin{center}
  \includegraphics{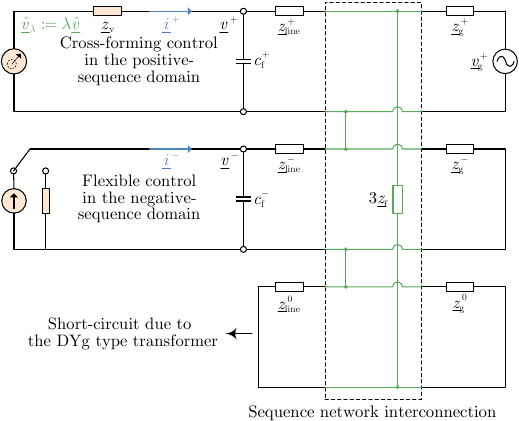}
  \caption{Sequence-domain equivalent circuit for cross-forming control under an asymmetrical grid fault (a single line-to-ground fault is exemplified \cite{he2022synchronization}).}
  \label{fig:equivalent-circuit-sequence}
  \end{center}
\end{figure}

The diagram of the implicit control is shown in Fig.~\ref{fig:cross-forming-controls}(b) (already extended to unbalanced cases). The proportional-like feedback control in \eqref{eq:implicit-cross-forming-b} subsumes a classical virtual admittance voltage control as given by \eqref{eq:virtual-admittance} in Appendix~\ref{appendix:standard-gfm}, with the addition of the control gain $\kappa$ and the auxiliary factor $1/\mu$ scaling up the voltage feedback. The feedback DoS signal $\mu$ can be easily extracted from the current limiter. A low-pass filter (LPF) is added to the feedback of $\mu$ to make the cross-forming regulating response slower than the current control, and to avoid the appearance of an algebraic loop associated with $\mu$. Moreover, an LPF (for implementation in $dq$ coordinates) or a band-pass filter (for implementation in $\alpha\beta$ coordinates) is added to the voltage feedback to enhance the small-signal stability of the virtual admittance control \cite{wu2022smallsignal}, particularly in the case of a large $X/R$ ratio in $\pha z_{\mathrm{v}}$.

We remark that the implicit cross-forming control also allows for alternatives to \eqref{eq:implicit-cross-forming}. For example, one alternative is $\pha {\hat{i}} = \kappa \frac{\pha{\hat v}}{{\pha v}/\pha i + \pha z_{\mathrm{v}}}$,
which is approximately equivalent to \eqref{eq:implicit-cross-forming}. However, it is not compatible with the standard virtual admittance control architecture.

\subsubsection{Extensions to Unbalanced Conditions}
Both implementations of cross-forming controls apply to not only balanced but also unbalanced conditions, as shown in Fig.~\ref{fig:cross-forming-controls}(a) and (b). Particularly, regarding asymmetrical grid faults, the desired equivalent circuit in \eqref{eq:desired-voltage-equation} refers to the positive-sequence-domain circuit. Therefore, the variables involved in the above design refer to corresponding positive-sequence components. Specifically, concerning asymmetrical grid faults, the explicit cross-forming regulator in \eqref{eq:explicit-cross-forming} extends to
\begin{subequations}
\label{eq:explicit-extended}
\begin{align}
\label{eq:explicit-extended-a}
    \lvert \pha{\hat v}_{\lambda} \rvert &= v^{\star} + \frac{\kappa_{\mathrm{i}}}{s} \bigl(I_{\lim} - \max \{\hat{I}_{a}^{\mathrm{m}},\hat{I}_{b}^{\mathrm{m}},\hat{I}_{c}^{\mathrm{m}} \} \bigr),\\
\label{eq:explicit-extended-b}
    \pha {\hat{i}}^{+} &= \frac{1}{\pha z_{\mathrm{v}}}\left(\lvert \pha{\hat v}_{\lambda} \rvert \angle \pha{\hat v} - \pha v^{+} \right),
\end{align}
\end{subequations}
with phase current magnitude references $\hat{I}_{x}^{\mathrm{m}}$, $x \in \{a,b,c\}$. We refer the reader to \eqref{eq:phase-current-mag} or \eqref{eq:phase-current-mag-dq} in Appendix~\ref{appendix:current-limiting} for how to calculate $\hat{I}_{x}^{\mathrm{m}}$ with the positive- and negative-sequence current references in $\alpha\beta$ or $dq$ coordinates, respectively. Moreover, the implicit cross-forming regulator in \eqref{eq:implicit-cross-forming} extends to
\begin{subequations}
\label{eq:implicit-extended}
\begin{align}
\label{eq:implicit-extended-a}
    \mathrm{DoS\, (unbalanced)\text{:}}\ \mu \overset{\mathrm{(saturated)}}&{=} \frac{I_{\lim}}{\max \{\hat{I}_{a}^{\mathrm{m}},\hat{I}_{b}^{\mathrm{m}},\hat{I}_{c}^{\mathrm{m}} \}},\\
\label{eq:implicit-extended-b}
    \pha {\hat{i}}^{+} &= \frac{1}{\pha z_{\mathrm{v}}}\left(\kappa \pha{\hat v} - \frac{\pha v^{+}}{\mu} \right),
\end{align}
\end{subequations}
where the DoS variable is extended from \eqref{eq:implicit-cross-forming-a} to \eqref{eq:implicit-extended-a}, with the phase current magnitude references extracted from the current limiter; see \eqref{eq:phase-current-mag} and \eqref{eq:phase-current-mag-dq}.

Consider a single line-to-ground fault as an example. The sequence-domain equivalent circuit is shown in Fig.~\ref{fig:equivalent-circuit-sequence}, which refines the generic circuit in Fig.~\ref{fig:phase-sequence-circuit}(b). In the positive-sequence domain, the cross-forming control creates a cross-forming source (with voltage angle forming and voltage magnitude following behaviors). Therefore, it provides the required fault reactive current and phase jump active power in the positive-sequence domain. In the negative-sequence domain, multiple flexible control options are available, as detailed in Appendix~\ref{appendix:negative-sequence-current}. When an asymmetrical fault occurs, the cross-forming regulator in \eqref{eq:explicit-extended} or \eqref{eq:implicit-extended} reduces the voltage magnitude of the positive-sequence cross-forming source, $\lvert \pha{\hat v}_{\lambda}\rvert$ or $\kappa \mu \lvert \pha{\hat v} \rvert$, ensuring that the maximum phase current magnitude, $\max \{{I}_{a}^{\mathrm{m}},{I}_{b}^{\mathrm{m}},{I}_{c}^{\mathrm{m}} \}$, remains at the specified limit.

\begin{figure*}
  \begin{center}
  \includegraphics{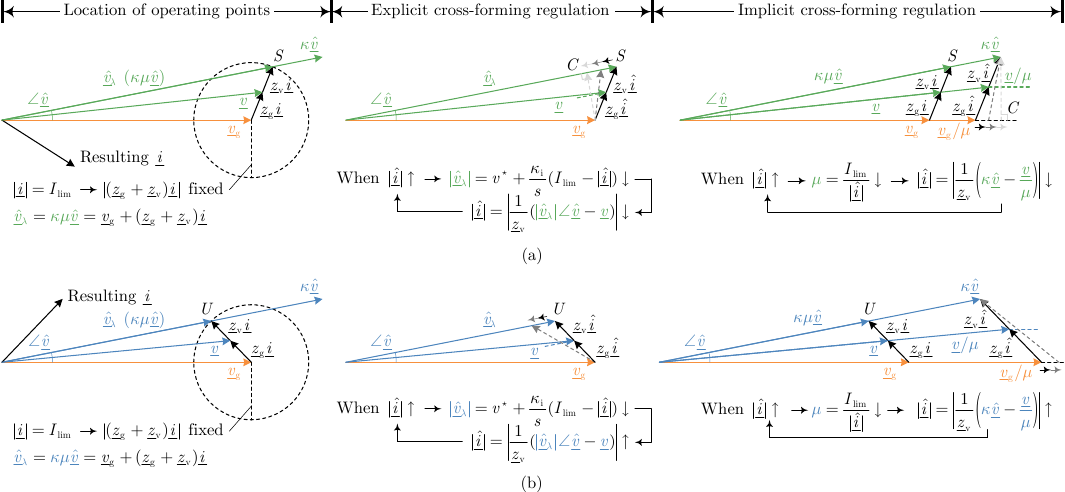}
  \caption{Stability analysis of operating points resulting from cross-forming regulation. (a) The operating point $S$ is stable due to negative feedback regulation. (b) The operating point $U$ is unstable due to positive feedback regulation.}
  \label{fig:stability-analysis}
  \end{center}
\end{figure*}

\subsubsection{Parameter Tuning}
\label{sec:parameter-tuning}
The parameter tuning of both cross-forming regulators is simple. First, the virtual impedance $\pha z_{\mathrm{v}}$ should be tuned such that the lumped impedance voltage drop $\abs{\Delta \pha{v}_{z}}$ is large enough to ensure the existence of feasible operating points during current saturation (particularly in response to phase jumps), as shown in Fig.~\ref{fig:voltage-following-proof}(c). However, it should not be too large, to avoid causing a high virtual voltage drop and reactive power consumption. Furthermore, the choice of the $X/R$ impacts both small-signal and transient stability \cite{taoufik2022variable}. Next, the bandwidth of both cross-forming regulators should be tuned to be slower than the inner current loop. For the explicit one in \eqref{eq:explicit-cross-forming}, the integral gain $\kappa_{\mathrm{i}}$ yields a time constant $\lvert \pha{z}_{\mathrm{v}} \rvert / \kappa_{\mathrm{i}}$. For the implicit one, its bandwidth is determined by the time constant $\tau_{\mu}$. The selection of the LPF time constants $\tau_{\mathrm{\mu}}$ and $\tau_{\mathrm{v}}$ should ensure a clear time-scale separation between the cross-forming regulator and the inner current loop. We recommend performing a quantitative small-signal analysis across typical operating points in tuning these parameters to guarantee a well-damped dynamic response. Ill-tuned parameters may lead to suboptimal performance or even instability \cite{wu2022smallsignal,taoufik2022variable}. Finally, the initial voltage magnitude $v^{\star}$ in \eqref{eq:explicit-cross-forming-a} can take the nominal value simply. The feedforward gain $\kappa \geq 1$ is preferable to render $\mu < 1$, implying that the inverter stays at the current limit and thus fully utilizes the overcurrent capability. The value of $\kappa$ does not affect the system's steady-state operating point provided that the current is saturated.

\subsubsection{Stability of Cross-Forming Regulators}
As can be seen in Fig.~\ref{fig:cross-forming-controls}, a feedback control loop is introduced in both cross-forming implementations. The stable operation of both regulators is based on negative feedback of this control loop. To show this, we consider a cross-forming controlled inverter, which is connected to a Thevenin equivalent grid with impedance $\pha z_{\mathrm{g}}$ and voltage $\pha{v}_{\mathrm{g}}$. It follows from \eqref{eq:internal-voltage} that, in the steady state,
\begin{equation}
    \pha{\hat v}_{\lambda} = \kappa \mu \pha{\hat {v}} = \pha z_{\mathrm{v}} \pha {{i}} + {\pha v} = (\pha z_{\mathrm{v}} + \pha z_{\mathrm{g}}) \pha {{i}} + \pha{v}_{\mathrm{g}}.
\end{equation}
Given the voltage reference $\pha{\hat v}$ and the current magnitude limit $I_{\lim}$, there exist two possible operating points, c.f. Fig.~\ref{fig:voltage-following-proof}(c). Fig.~\ref{fig:stability-analysis} details the location of the two operating points, i.e., the two intersection points $S$ and $U$. Typically, the operating point $S$ is stable, while the other is unstable. The stability around the operating point can be analyzed qualitatively by investigating the response to a disturbance leading to an increase in $\lvert \pha {\hat{i}} \rvert$, which is illustrated in Fig.~\ref{fig:stability-analysis}.
\begin{itemize}
    \item For explicit cross-forming regulation: An increase in $\lvert \pha {\hat{i}} \rvert$ will cause $\lvert \pha{\hat v}_{\lambda} \rvert = v^{\star} + \frac{\kappa_{\mathrm{i}}}{s} \bigl(I_{\lim} - \lvert \pha{\hat i} \rvert \bigr)$ in \eqref{eq:explicit-cross-forming-a} to decrease. For this case, the regulation in \eqref{eq:explicit-cross-forming-b} should be such that $\lvert \pha {\hat{i}} \rvert$ decreases. This is only possible if the projection of the vector ${\pha v}_{\mathrm{g}}$ onto the direction of $\pha{\hat v}_{\lambda}$ is smaller than $\lvert \pha{\hat v}_{\lambda} \rvert$, as shown in Fig.~\ref{fig:stability-analysis}(a), since $\lvert \pha {\hat{i}} \rvert = \bigl \lvert \frac{1}{\pha z_{\mathrm{v}}} \bigr\rvert \bigl \lvert \lvert \pha{\hat v}_{\lambda} \rvert \angle \pha{\hat v} - \pha v \bigr \rvert = \bigl \lvert \frac{1}{\pha z_{\mathrm{v}} + \pha z_{\mathrm{g}}} \bigr \rvert \lvert \pha{v}_{\mathrm{g}} - \pha v \rvert$. As $\lvert \pha{\hat v}_{\lambda} \rvert$ decreases, the critical case appears when the operating point reaches $C$, where the projection equals $\lvert \pha{\hat v}_{\lambda} \rvert$. The operating point $U$ in Fig.~\ref{fig:stability-analysis}(b) is unstable since the projection is larger than $\lvert \pha{\hat v}_{\lambda} \rvert$, leading to positive feedback regulation.
    \item For implicit cross-forming regulation: An increase in $\lvert \pha {\hat{i}} \rvert$ will cause $\mu = \frac{I_{\lim}}{\lvert \pha{\hat i} \rvert}$ in \eqref{eq:implicit-cross-forming-a} to decrease. We require that the feedback regulation in \eqref{eq:implicit-cross-forming-b} be such that the decrease in $\mu$ would reduce $\lvert \pha {\hat{i}} \rvert$. Since $\lvert \pha {\hat{i}} \rvert = \bigl \lvert \frac{1}{\pha z_{\mathrm{v}}} \bigr \rvert \bigl \lvert \kappa \pha{\hat v} - \frac{\pha v}{\mu} \bigr \rvert = \bigl \lvert \frac{1}{\pha z_{\mathrm{v}} + \pha z_{\mathrm{g}}} \bigr \rvert \bigl \lvert \kappa \pha{\hat v} - \frac{\pha v_{\mathrm{g}}}{\mu} \bigr \rvert$, this is only possible if the projection of the vector $\kappa \pha{\hat v} $ onto the direction of ${\pha v_{\mathrm{g}}}$ is greater than $\bigl \lvert \frac{\pha v_{\mathrm{g}}}{\mu} \bigr \rvert$. Likewise, the critical case appears at $C$ in Fig.~\ref{fig:stability-analysis}(a), and the operating point $U$ in Fig.~\ref{fig:stability-analysis}(b) is unstable due to positive feedback regulation.
\end{itemize}
Moreover, corresponding to the stable operating point $S$, the resulting output current vector lags the internal voltage vector, implying reactive current injection and therefore satisfying grid code requirements. As indicated in Fig.~\ref{fig:stability-analysis}(a), the provision of reactive current stems from the preserved voltage angle forming $\angle \pha{\hat v}$ and the inductive property of $\pha{z}_{\mathrm{v}}$ and $\pha{z}_{\mathrm{g}}$. This suggests that the voltage angle forming matters for not only active power regulation but also fault reactive current injection.

We note that if $\angle \pha{\hat v}$ is large or $\lvert (\pha z_{\mathrm{v}} + \pha z_{\mathrm{g}}) \pha {{i}} \rvert$) is small to the point where both $S$ and $U$ are located in the left half plane of the dashed circle, both operating points will be unstable for the implicit cross-forming regulator. If $\lvert (\pha z_{\mathrm{v}} + \pha z_{\mathrm{g}}) \pha {{i}} \rvert$ is so small that $\angle \pha{\hat v}$ is beyond the tangent direction of the dashed circle, there will be no feasible operating points. Such situations should be avoided. Our future research will explore how to ensure the existence of operating points under certain parametric conditions and quantify their stability region.

\begin{remark}
\label{rem:recovery}
\textit{Switching Between Voltage-Forming and Cross-Forming Modes:}
When a grid-forming inverter is free of current saturation, it can maintain the voltage-forming mode. The cross-forming mode is activated only when grid faults occur and lead the current to saturate. The control should switch back to the original voltage-forming mode after grid fault recovery or when the current reference exits saturation. Since the voltage-forming control architecture with virtual admittance control is already subsumed in both cross-forming implementations, the switching between the two control modes is straightforward. For example, switching from the explicit cross-forming regulator to the virtual admittance control can be accomplished by disabling the integrator in \eqref{eq:explicit-cross-forming-a} and replacing $v^{\ast}$ with the original reference $\lvert \pha{\hat v} \rvert$. Setting $\kappa = 1$ the implicit cross-forming regulator in \eqref{eq:implicit-cross-forming} reduces to the virtual admittance control in case of current unsaturation, as $\pha {\hat{i}} = \pha {{i}}$ implies $\mu = 1$. In case of mode switching failures, the control will either remain in the voltage-forming mode or stay in the cross-forming mode, where the inverter remains under control and the current is limited. However, stability and performance may not meet expectations. A complete state machine should be in place to properly deal with the switching between different control modes and the subsequent operation in case of switching failures. Moreover, reliable grid fault detection can help improve the robustness of mode transitions.
\end{remark}

\begin{remark}
\label{rem:comparison}
\textit{Comparison and Selection Between Both Implementations:} Both the explicit and implicit cross-forming implementations exhibit similarly superior dynamic performance (which will be validated later), provided that their parameters are well-tuned. However, each has distinct advantages and disadvantages in the following aspects, explaining why we develop two different types of implementations.
\begin{itemize}
    \item The implementation of the implicit cross-forming regulator is closer to the original virtual admittance control with $\mu = 1$ by default.
    \item The tuning of the implicit cross-forming regulator is simpler, where $\kappa = 1$ is a straightforward choice.
    \item The implicit implementation can act faster since no additional control dynamics are introduced, apart from the LPF. By contrast, the explicit one has limited control bandwidth due to involving additional integral dynamics in tracking the current limit.
    \item For the implicit cross-forming regulator, there exist cases where both operating points are unstable. For the explicit one, provided that operating points exist, the one with the larger voltage must be stable.
\end{itemize}
\end{remark}

\begin{figure}
  \begin{center}
  \includegraphics{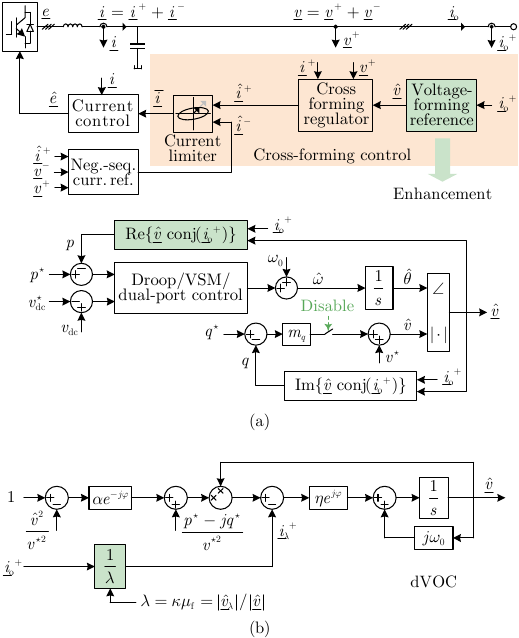}
  \caption{Enhanced voltage-forming reference module embedded in the cross-forming control, with (a) droop control, VSM, AC-DC dual-port control, etc. in polar coordinates and (b) dVOC in $\alpha\beta$ coordinates.}
  \label{fig:enhanced-voltage-forming}
  \end{center}
\end{figure}

\subsection{Enhanced Voltage-Forming References}

The voltage-forming reference is part of the cross-forming control, as noted in Fig.~\ref{fig:enhanced-voltage-forming}. In this section, we enhance the voltage-forming reference module to improve the transient stability of the cross-forming control. Based on that, we show later in Section~\ref{sec:transient-stability} that the proposed enhancement renders the closed-loop system structurally conform to the normal/canonical form of the standard voltage-forming system. Thus, we can extend the existing transient stability results in the literature to current-saturated conditions.

Three types of voltage-forming controls have been reviewed in Appendix~\ref{appendix:standard-gfm}: 1) single-input single-output linear controls, e.g., droop control and VSM; 2) multivariable nonlinear controls, e.g., complex droop control and dVOC; and 3) AC-DC dual-port controls. The enhancement of these voltage-forming controls is based on the coordinates they employ. The most typical coordinates are the $\hat v$ and $\hat \theta$ polar coordinates, and the other typical ones are $\hat v_{\alpha}$ and $\hat v_{\beta}$ rectangular coordinates, as used by the recently prevalent dVOC. We show different enhancements concerning different coordinates of voltage-forming controls.

\begin{table*}
\centering
\caption{Benchmarking of the Proposed Cross-Forming Strategies Against Existing Current-Limiting Strategies.}
\arrayrulecolor{black!10}
\begin{tabular}{llll}
\arrayrulecolor{black}
\hline\hline
& Voltage-forming strategies & Current-forming strategies & Cross-forming strategies \\
\hline
\makecell[l]{Current \\ limiting \\ strategies} & 
\makecell[l]{
\tabitem Type-A: Adaptive virtual impedance \\
\tabitemm \cite{paquette2015virtual,qoria2020current} \\ 
\tabitem Type-B: Current limiter with virtual \\
\tabitemm  admittance \cite{rosso2021implementation,kkuni2024effects,fan2022equivalent,zhang2023current,saffar2023impacts}} & 
\makecell[l]{
 Current reference direct \\ 
 specification \cite{huang2019transient,rokrok2022transient,liu2023dynamic,li2023transient} \\ or in a drooped way \cite{xin2021dual,schweizer2022grid}} & 
\makecell[l]{
\tabitem Explicit cross-forming  \\ 
\tabitem Implicit cross-forming} \\
\arrayrulecolor{black!10}\hline
\makecell[l]{Forming behaviors} & 
\makecell[l]{
\tabitem Internal voltage forming \\ 
\tabitem Current magnitude constrained} & 
\makecell[l]{
\tabitem Current forming \\ 
\tabitem Voltage following (undesired)} & 
\makecell[l]{
\tabitem Voltage angle forming \\ 
\tabitem Current magnitude forming} \\
\hline
\makecell[l]{Control switching} & No switching needed & Yes and incompatible & Yes but {\color{color_cross_forming}compatible} \\
\hline
\makecell[l]{Tuning complexity} & \makecell[l]{\tabitem {\color{color_voltage_forming}Complicated if using Type-A} \\ \tabitem Simple if using Type-B} & \makecell[l]{{\color{color_current_forming}Simple for current limiting, but} \\ {\color{color_current_forming}complicated for FRT services}} & {\color{color_cross_forming}Simple} \\
\hline
\makecell[l]{Current-limiting \\ speed} & 
\makecell[l]{\tabitem {\color{color_voltage_forming}Slow if using Type-A} \\ \tabitem Fast if using Type-B} & Fast & {\color{color_cross_forming}Fast} \\
\hline
\makecell[l]{Overcurrent \\ utilization} & \makecell[l]{\tabitem {\color{color_voltage_forming}No, for Type-A, limited within $[I_{\mathrm{th}},\, I_{\mathrm{lim}}]$} \\ \tabitem Yes, for Type-B, limited at $I_{\mathrm{lim}}$} & Yes, limited at $I_{\mathrm{lim}}$ & {\color{color_cross_forming}Yes, limited at $I_{\mathrm{lim}}$} \\
\hline
\makecell[l]{FRT $i_Q$ provision} & 
\makecell[l]{ $i_Q$ naturally provided, with high priority \\ by reducing $p^{\star}$} & 
\makecell[l]{\tabitem Adjust $i_Q^{\star}$ if using PLL, but slow \\ \tabitem Reduce $p^{\star}$ if using frequency droop \\ \tabitemm \cite{huang2019transient}, but $i_Q$ provision may be slow} & 
\makecell[l]{ $i_Q$ naturally provided, with \\ high priority by reducing $p^{\star}$} \\ 
\hline
\makecell[l]{Phase jump $i_P$ \\ provision} & $i_P$ naturally provided & \makecell[l]{$i_P$ may not be compliant due \\ to behaving as a current source} & $i_P$ naturally provided \\
\hline
\makecell[l]{Resulting impedance} & {\color{color_voltage_forming}Current-dependent for both Type-A and -B} & Optional & {\color{color_cross_forming}Constant} \\
\hline
\makecell[l]{Transient stability \\ enhancements } & 
\makecell[l]{ Numerous strategies available, e.g., \\ 
\tabitem Virtual power feedback \cite{kkuni2024effects} \\ 
\tabitem Alternating virtual inertia \cite{alipoor2014power} \\ 
\tabitem Mode-adaptive control \cite{wu2020mode} \\
\tabitem Power setpoint adjusting \cite{taul2020current} \\
\tabitem Transient active power control \cite{wang2024active}} & 
\makecell[l]{\tabitem Q-axis voltage feedback \cite{huang2019transient} \\ 
\tabitem Current reference angle adjusting \\ 
\tabitemm \cite{rokrok2022transient,liu2023dynamic,li2023transient}, but may conflict with \\ 
\tabitemm FRT services requirements } & 
\makecell[l]{
\tabitem Enhanced voltage-forming \\ \tabitemm references} \\
\hline
\makecell[l]{Transient stability \\ analysis} & 
\makecell[l]{{\color{color_voltage_forming} Difficult since the resulting virtual impedance} \\ {\color{color_voltage_forming} is current-dependent} } & \makecell[l]{{\color{color_current_forming}Relatively difficult due to involving} \\ {\color{color_current_forming}control architecture switching}} & 
\makecell[l]{{\color{color_cross_forming}Equivalent normal forms allow} \\ {\color{color_cross_forming}extending existing methods}} \\
\arrayrulecolor{black}
\hline \hline
\end{tabular}
\label{tab:comparison-current-limiting-strategies}
\end{table*}

\subsubsection{Enhanced Voltage-Forming Reference in Polar Coordinates} For a voltage-forming control in polar coordinates, we enhance it by utilizing virtual active power feedback as
\begin{equation}
\label{eq:power-feedback}
    p = \Re{\hat{\pha v}\, \mathrm{conj}(\pha{i}_{\mathrm{o}}^{+})},
\end{equation}
where $\hat{\pha v} = \hat v \angle \hat \theta$ is the voltage vector reference, $\mathrm{conj}(\cdot)$ represents a conjugate operation, and $\pha{i}_{\mathrm{o}}^{+}$ is the measured positive-sequence current (saturated). In the power computation in \eqref{eq:power-feedback}, we use the \textit{voltage reference} $\hat{\pha v}$ instead of the measured terminal voltage $\pha v^{+}$. This power feedback enhancement is depicted in Fig.~\ref{fig:enhanced-voltage-forming}(a). The enhancement enables us to arrive at a standard form of power-angle relationship, thus recovering an equivalent normal form (shown later in Section~\ref{sec:transient-stability-polar}). We note that the concept of virtual power feedback, such as that based on an unsaturated \textit{current reference}, has been used in \cite{kkuni2024effects} (specifically for the ``current limiter with virtual impedance" strategy) to recover a standard power-angle relationship. Our virtual power feedback uses a \textit{voltage reference}, as cross-forming control renders the actual voltage magnitude to follow accordingly. Although virtual power may not accurately reflect actual power, its use can enhance transient stability, support dynamic frequency regulation, and likewise achieve steady-state active power sharing. Moreover, since the terminal voltage magnitude is not independently controlled during current saturation, it is preferable to disable the original voltage magnitude droop control. Instead, we can directly specify a fixed voltage magnitude reference \cite{zhang2024control} to further improve transient stability. Disabling the voltage droop control does not impact the parallel operation of multiple inverters, since the current-saturated cross-forming sources behave like voltage magnitude-following current sources, with their reactive current and power determined by the current limit and network impedance.

\subsubsection{Enhanced Voltage-Forming Reference in Rectangular Coordinates} For a voltage-forming dVOC control in $\alpha\beta$ coordinates, we enhance it by scaling the current feedback as
\begin{equation}
\label{eq:current-feedback}
    \pha{i}_{\lambda}^{+} = \frac{\pha{i}_{\mathrm{o}}^{+}}{\lambda},
\end{equation}
where $\lambda$ takes $\lvert \pha{\hat v}_{\lambda} \rvert / \lvert \pha{\hat v} \rvert$ for the explicit cross-forming control, with $\lvert \pha{\hat v}_{\lambda} \rvert$ given from \eqref{eq:explicit-cross-forming-a}, whereas $\lambda$ takes $\kappa \mu_{\mathrm{f}}$ as given from \eqref{eq:internal-voltage} for the implicit cross-forming control. The enhanced dVOC is depicted in Fig.~\ref{fig:enhanced-voltage-forming}(b). In this manner, we can correspondingly recover an equivalent normal form of dVOC (shown later in Section~\ref{sec:transient-stability-rect}).

Similar to the recovery of the virtual admittance control as discussed in Remark~\ref{rem:recovery}, when the current exits saturation after grid fault recovery, we can easily recover the normal voltage-forming reference from the enhanced one in Fig.~\ref{fig:enhanced-voltage-forming}(a) by restarting the voltage droop control, and likewise we can restore the normal dVOC from the enhanced dVOC in Fig.~\ref{fig:enhanced-voltage-forming}(b) by resetting $\lambda = 1$.

\subsection{Benchmarking Against Existing Strategies}
\label{sec:benchmarking}

We present a comparative analysis of the proposed cross-forming control versus typical existing strategies regarding the current limiting in grid-forming inverters. Table~\ref{tab:comparison-current-limiting-strategies} summarizes the key features and performance metrics of these strategies, categorizing them into voltage-forming-based, current-forming-based, and cross-forming-based strategies.

The current-limiting strategies previously investigated for grid-forming inverters encompass two main types: voltage-forming-based and current-forming-based strategies. As reviewed in Appendix~\ref{appendix:current-limiting}, voltage-forming-based strategies include two types: Type-A, characterized by adaptive virtual impedance mechanisms (also known as threshold virtual impedance) \cite{paquette2015virtual,qoria2020current}, and Type-B, characterized by a current limiter cascaded with a virtual admittance \cite{rosso2021implementation,kkuni2024effects,fan2022equivalent,zhang2023current,saffar2023impacts}. Conversely, current-forming strategies typically specify current references directly in the current-forming control setup \cite{rokrok2022transient,liu2023dynamic,li2023transient}. The developed cross-forming strategies can be implemented in an explicit or implicit manner. The merits of both the voltage-forming and current-forming strategies are preserved as the proposed cross-forming strategy enables forming characteristics for voltage angle and current magnitude.

A critical aspect of the difference among the strategies is the switching between control architectures. Voltage-forming-based strategies do not necessitate control switching, while current-forming-based strategies typically require us to switch the control mode. In contrast, cross-forming-based strategies feature backward compatibility of control architecture (more specifically, \textit{compatibility} means that it is straightforward to recover a standard virtual admittance control by reducing the cross-forming control architecture). Thus, we can avoid a complete switch between control architectures. Moreover, cross-forming strategies feature simple parameter tuning similar to Type-B voltage-forming strategies. In contrast, the tuning of Type-A voltage-forming strategies is complicated since it involves the consideration of the worst case \cite{paquette2015virtual} and it is also not straightforward to take a negative-sequence current into account in the case of asymmetrical grid faults.

In addition, we evaluate the speed of current limiting, indicating that Type-A voltage-forming-based strategies demonstrate a relatively slow response since a virtual impedance control is typically located outside the voltage control loop. The other strategies exhibit fast rapidity because of the use of a current limiter. Moreover, Type-A voltage-forming-based strategies cannot guarantee full overcurrent utilization, while the other ones fully utilize the overcurrent limit. Furthermore, we investigate FRT services (fault reactive current $i_Q$ and phase jump active current $i_P$ provision), highlighting the differing responses between strategies. In particular, cross-forming-based strategies offer natural $i_Q$ and $i_P$ provisions within the current limit, similar to voltage-forming-based strategies. In contrast, current-forming-based strategies may not be able to provide a natural response as fast as a voltage source to fulfill the requirements of FRT services. These differences in dynamic response performance are evident in principle, which have also been validated through simulations.

Last but not least, we discuss the resulting equivalent circuits and transient stability, emphasizing the challenges posed by a current-dependent virtual impedance in voltage-forming-based strategies \cite{fan2022equivalent,kkuni2024effects,zhang2023current,saffar2023impacts} and switching systems in current-forming-based strategies \cite{huang2019transient,rokrok2022transient,liu2023dynamic,li2023transient,wang2023transient}. Particularly, the current-dependent virtual impedance is influenced by grid fault severity and is thus unpredictable, bringing significant difficulties in transient stability analysis and insufficient robustness to unforeseen fault disturbances. In contrast, cross-forming-based strategies result in a constant equivalent impedance, demonstrating compatibility with the existing analysis methods, as presented in the next section.

\section{Equivalent Normal Forms and Transient Stability Analysis Extension}
\label{sec:transient-stability}

The transient stability of voltage-forming systems is rooted in the closed-loop relationship between the voltage-forming control and the grid network. Numerous transient stability results have been reported in the literature, but these results are primarily concentrated on the normal form of voltage-forming systems. The \textit{normal form} means that the system has normal structure and nominal dynamics, where the current is not saturated, e.g., when operating under normal grid conditions. In this work, we extend the existing stability analysis methods to the case of current saturation. To achieve this, we first consider the cross-forming control along with the enhanced voltage-forming reference to recover an equivalent normal form. Then, we extend the existing stability results concerning the normal form to the system under current saturation.

\begin{figure}
  \begin{center}
  \includegraphics{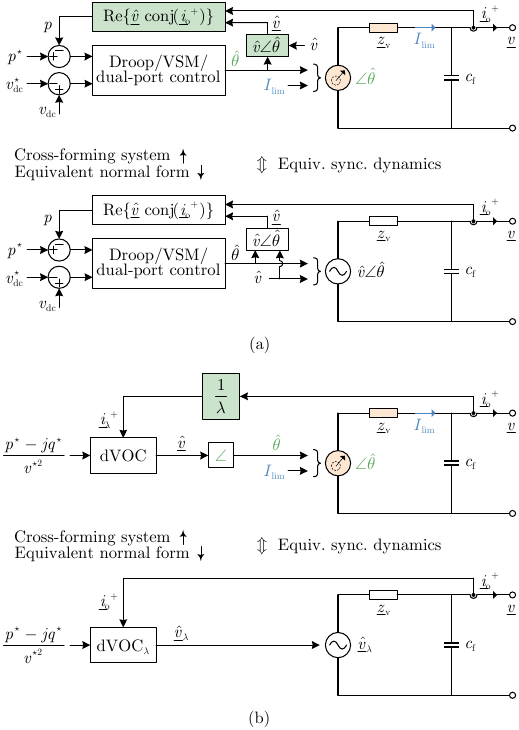}
  \caption{Cross-forming controlled closed-loop systems, resulting from the cross-forming regulator and the enhanced voltage-forming reference, maintain equivalent synchronization dynamics to the normal-form system, allowing us to extend the existing transient stability results to current-saturated conditions. (a) For droop control, VSM, AC-DC dual-port control, etc. in polar coordinates. (b) For dVOC in $\alpha\beta$ coordinates.}
  \label{fig:resulting-equivalent-circuit}
  \end{center}
\end{figure}

\subsection{Equivalent Normal Form in Polar Coordinates}
\label{sec:transient-stability-polar}

Consider the cross-forming control architecture that embeds an enhanced voltage-forming reference in polar coordinates. The cross-forming system shown in Fig.~\ref{fig:resulting-equivalent-circuit}(a) depicts a closed-loop connection between the enhanced voltage-forming reference and the equivalent circuit given earlier in Fig.~\ref{fig:desired-equivalent-circuit}(b). The equivalent circuit incorporates a cross-forming source that results from the cross-forming control. Fig.~\ref{fig:resulting-equivalent-circuit}(a) also shows the normal form representation of the system. The transient stability of the normal-form system has been widely investigated with the angle forming control in a closed loop with the so-called power-angle relationship, where the voltage magnitude is typically assumed to be constant. We show below that the cross-forming system maintains equivalent closed-loop nonlinear synchronization dynamics (i.e., transient stability characteristics) to the normal-form system. To show this, we simply need to show that the active power feedback (i.e., the power-angle relationship) in Fig.~\ref{fig:resulting-equivalent-circuit}(a) is equivalent, given that the angle forming control dynamics remain consistent.

Consider a typical assumption that the grid network is dominantly inductive, i.e., $\pha{z}_{\mathrm{v}} = j x_{\mathrm{v}}$. The virtual internal voltage is denoted by $\pha{\hat v}_{\lambda}$ as in the equivalent circuit in Fig.~\ref{fig:desired-equivalent-circuit}(b). We consider that the inverter is connected to a Thevenin equivalent grid with voltage $\pha{v}_{\mathrm{g}} \coloneqq v_{\mathrm{g}} \angle \theta_{\mathrm{g}}$ and grid impedance $\pha z_{\mathrm{g}} \coloneqq j x_{\mathrm{g}}$. The power feedback $p$ of the cross-forming system with saturated current $\pha{i}_{\mathrm{o}}^{+}$ is expanded from \eqref{eq:power-feedback} to (neglecting $c_{\mathrm{f}}$)
\begin{equation}
\label{eq:power-angle-equation}
\begin{aligned}
    p &= \Re{\hat{\pha v}\, \mathrm{conj}(\pha{i}_{\mathrm{o}}^{+})} \\
    &= \Re \biggl\{\hat v \angle \hat{\theta}\, \mathrm{conj}\biggl(\frac{\lvert \pha{\hat v}_{\lambda} \rvert \angle \hat{\theta} - v_{\mathrm{g}} \angle \theta_{\mathrm{g}}}{j x_{\mathrm{v}} + j x_{\mathrm{g}}} \biggr) \biggr\} \\
    &= \Re \biggl\{j \frac{\hat v \abs{\pha{\hat v}_\lambda} - \hat v v_{\mathrm{g}} \angle (\hat{\theta} - \theta_{\mathrm{g}})}{x_{\mathrm{v}} + x_{\mathrm{g}}} \biggr\} \\
    & = \frac{\hat v v_{\mathrm{g}}}{x_{\mathrm{v}} + x_{\mathrm{g}}} \sin{ (\hat{\theta} - \theta_{\mathrm{g}})},
\end{aligned}
\end{equation}
where it is seen that the current-saturated power-angle relationship conforms to the \textit{normal form} of a sine function. Hence, the cross-forming system shares equivalent transient stability characteristics to the normal-form system. If we further consider $\hat v =  v^{\star}$ [i.e., disabling the voltage droop control as in Fig.~\ref{fig:enhanced-voltage-forming}(a)], the power-angle relationship will be independent of the voltage dynamics. We note that the result in \eqref{eq:power-angle-equation} can be extended to a more general case where the network is not necessarily inductive but has a uniform $X/R$ ratio. Moreover, the result may be extended to multi-inverter networks, in which the coefficients of the sine terms in the active power flow equation will rely on the virtual internal voltage magnitude of the current-saturated inverters and the terminal voltage magnitude of the current-unsaturated inverters.

We note that the voltage dynamics of the cross-forming system and that of the normal-form system are of course different, since the former features voltage magnitude following (mainly determined by the exogenous circuit, cf. Proposition~\ref{prop:voltage-following}) while the latter features voltage forming. Hence, we only focus on the synchronization dynamics of the equivalent normal form while the voltage dynamics are not relevant.

\subsection{Equivalent Normal Form in Rectangular Coordinates}
\label{sec:transient-stability-rect}

Consider the cross-forming control architecture that embeds an enhanced voltage-forming reference in rectangular coordinates [specifically, the enhanced dVOC using \eqref{eq:current-feedback}]. The cross-forming system under current saturation is shown in Fig.~\ref{fig:resulting-equivalent-circuit}(b), which depicts a closed-loop connection between the enhanced dVOC using \eqref{eq:current-feedback} and an equivalent circuit. The normal form representation of the system is also shown in Fig.~\ref{fig:resulting-equivalent-circuit}(b). The transient stability of the normal form of dVOC has been explored based on the dVOC dynamics in a closed loop with the network voltage-current relationship, where the dynamics in both rectangular coordinates are included \cite{he2023quantitative,he2024passivity,colombino2019global}. We indicate in the following that the cross-forming system is comparable to the normal-form system in terms of the closed-loop nonlinear synchronization dynamics during current saturation.

Consider the enhanced dVOC in Fig.~\ref{fig:enhanced-voltage-forming}(b),
\begin{equation}
\label{eq:enhanced-dvoc}
    \dot{\pha{\hat v}} = j\omega_0\pha{\hat v} + \eta e^{j\varphi} \left( \frac{p^{\star} - jq^{\star}}{v^{\star 2}} \pha{\hat v} - \frac{\pha{i}_\mathrm{o}^{+}}{\lambda} \right) + \eta \alpha \frac{{v^{\star 2} - \hat v^2}}{v^{\star 2}} \pha{\hat v},
\end{equation}
which further gives rise to the equivalent \textit{normal form} as
\begin{equation}
\label{eq:equiv-dvoc}
    \dot{\pha{\hat v}}_{\lambda} = j\omega_0\pha{\hat v}_{\lambda} + \eta e^{j\varphi} \left( \frac{p^{\star} - jq^{\star}}{v^{\star 2}} \pha{\hat v}_{\lambda} - \pha{i}_\mathrm{o}^{+} \right) + \eta \alpha \frac{{v^{\star 2}_{\lambda} - \hat v^2_{\lambda}}}{v^{\star 2}_{\lambda}} \pha{\hat v}_{\lambda},
\end{equation}
where $\pha{\hat v}_{\lambda} = \lambda \pha{\hat v}$ and $v^{\star}_{\lambda} \coloneqq \lambda v^{\star}$, and the dynamics of $\lambda$ is ignored for ease of analysis. This indicates that \eqref{eq:equiv-dvoc}, shaped by the enhanced dVOC, describes the virtual internal voltage dynamics. It can be seen that the dynamics in \eqref{eq:equiv-dvoc} structurally conform to the normal form of dVOC. Thus, the cross-forming system shares the same transient stability characteristics (mainly in synchronization) as the normal-form system.

Likewise, we only care about the frequency synchronization dynamics of \eqref{eq:equiv-dvoc}. More specifically, the representation of the equivalent voltage dynamics in \eqref{eq:equiv-dvoc} is only aimed to analyze stability; it does not imply that the internal voltage $\pha{\hat v}_{\lambda}$ can be imposed independently, since the internal voltage magnitude $\lambda \abs{\pha{\hat {v}}}$ passively follows the circuit law, cf. Proposition~\ref{prop:voltage-following}.

\subsection{Transient Stability Results Extended to Current Saturation}
\label{sec:transient-stability-results}

Based on the equivalent normal form obtained earlier, we now extend the typical existing transient stability results to the case of current saturation. We mainly show the extension concerning a single-inverter infinite-bus system for ease of understanding. We refer the reader to \cite{schiffer2014conditions,colombino2019global,he2024passivity,choopani2020newmulti} for more comprehensive results for multi-inverter systems.

\begin{figure}
  \begin{center}
  \includegraphics{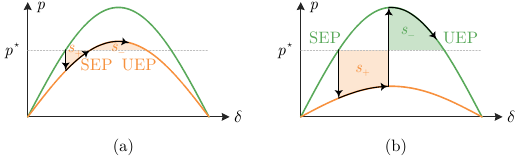}
  \caption{Transient stability of the cross-forming system in Fig.~\ref{fig:resulting-equivalent-circuit}(a) during (a) the fault-on period and (b) the post-fault period. The fault-on power-angle curve (orange) remains sinusoidal, as given in \eqref{eq:power-angle-equation}.}
  \label{fig:power-angle-curves}
  \end{center}
\end{figure}

As noted earlier, the transient stability of a voltage-forming control in polar coordinates is analyzed with the so-called power-angle relationship \cite{shuai2019transient}. For the equivalent normal form of the cross-forming system, Fig.~\ref{fig:power-angle-curves} illustrates typical power-angle curves as in \eqref{eq:power-angle-equation} with a normal grid voltage, a slight voltage dip in Fig.~\ref{fig:power-angle-curves}(a), and a severe voltage dip in Fig.~\ref{fig:power-angle-curves}(b). It can be seen that the power curves share the general sinusoidal characteristic. If stable equilibrium points (SEPs) exist during a fault, i.e., $p^{\star}$ intersects with the fault-on power-angle curve, as in Fig.~\ref{fig:power-angle-curves}(a), the system may achieve (fault-on) transient stability during the grid voltage dip. More specifically, for a dominantly first-order angle forming control (e.g., droop control), the transient stability can be maintained, provided that the equilibrium point is present. For a dominantly second-order angle forming control (e.g., VSMs), the transient stability during the grid voltage dip is determined by the difference between the accelerating area $s_{+}$ and the decelerating area $s_{-}$; see Fig.~\ref{fig:power-angle-curves}(a). If there are no equilibrium points during the voltage dip, as in Fig.~\ref{fig:power-angle-curves}(b), the system cannot achieve synchronization but can achieve (post-fault) synchronization after grid voltage recovery. Likewise, the transient stability is determined by the sufficiency of the decelerating area $s_{-}$ to counteract the accelerating area $s_{+}$. The equal-area criterion, a simple stability criterion inherited from conventional power systems, has been widely used to analyze and improve the transient stability of grid-forming inverters. Alternatively, one may resort to the energy function method for assessing transient stability, for example, a typical energy function of VSMs is given as \cite{kabalan2017largesignal}
\begin{equation}
\label{eq:energy-function}
    V(\omega,\delta) = \underbrace{\frac{1}{2} T_{J} \omega^2}_{\mathrm{kinetic\ energy}} \underbrace{- p_{\max} (\cos{\delta} - \cos{\delta_0}) - p^{\star} (\delta - \delta_0)}_{\mathrm{potential\ energy}},
\end{equation}
with inertia time constant $T_J$, power transfer limit $p_{\max}$, and stable equilibrium point $\delta_0$. For a VSM, the kinetic energy term represents the virtual energy stored in the virtual inertial mass $T_J$ rotating in the forming frequency $\omega$, and the potential energy term represents the virtual energy stored in the angle displacement $\delta$ relative to the equilibrium point $\delta_0$. The kinetic and potential energy are exchanged during acceleration and deceleration, as shown in Fig.~\ref{fig:power-angle-curves}. This energy conversion is driven by the power imbalance between the power setpoint $p^{\star}$ and the output $p$, occurring virtually in the VSM controller, unlike the real energy conversion in a synchronous machine (with a physical rotating mass) driven by mechanical and electromagnetic power; see \cite{fu2021large} for further details on the correspondence between grid-forming inverters and synchronous machines. The energy function in \eqref{eq:energy-function} and the information on the unstable equilibrium point (UEP) can be used to approximate the stability region and identify the critical clearance time for grid faults\cite{kundur1994power}.

It should be noted that both the equal-area criterion shown in Fig.~\ref{fig:power-angle-curves} and the energy function in \eqref{eq:energy-function} neglect damping effects, therefore being conservative. More advanced and less conservative variants are available in the literature; e.g., see the results in \cite{li2023iterative,moon2000estimating}. In actual applications, e.g., grid-forming inverters connecting renewable energy sources such as wind or photovoltaic to the grid, the active power setpoint $p^{\star}$ of the inverter, as well as the power from the side of the primary energy source, should be reduced during grid voltage dips to respect the power transfer limit of the transmission line and prioritize FRT services such as fault reactive current provision.

The transient stability of dVOC is typically analyzed in rectangular coordinates \cite{he2023quantitative,he2024passivity,colombino2019global}. A standard Lyapunov stability analysis was presented in our earlier work \cite{he2023quantitative}. For the equivalent normal form recovered in \eqref{eq:equiv-dvoc} connected to an infinite bus, a sufficient condition for transient stability is \cite{he2023quantitative},
\begin{equation}
\label{eq:stability-condition}
    \Re \bigl\{e^{j\varphi} \frac{p^{\star} - jq^{\star}}{{v}^{\star 2}} \bigr\} + \alpha < \frac{\alpha}{2} \frac{\hat v_{{\lambda}\mathrm{s}}^2}{v_{\lambda}^{\star 2}}  + \Re \bigl\{e^{j\varphi} \pha{y} \bigr\},
\end{equation}
where $\hat v_{{\lambda}\mathrm{s}}^2$ is the steady-state magnitude of the internal voltage $\pha{\hat v}_{\lambda}$, and $\pha y = 1/(\pha z_{\mathrm{v}} + \pha z_{\mathrm{g}})$ denotes the admittance of the lumped impedance seen from the virtual internal voltage to the infinite bus. This stability condition provides quantitative insights for parameter tuning, system operation, etc. Our recent results in \cite{desai2023saturation} show further extensions to multi-inverter systems.


\section{Simulation and Experimental Validations}
\label{sec:validations}

We present case studies to validate the performance of the proposed cross-forming control. For the validations in this work, we consider VSM as the voltage-forming reference. For validations where dVOC is considered as the voltage-forming reference, readers are referred to our recent work \cite{desai2023saturation}.  In Case Study~I, we validate the performance of both cross-forming implementations (explicit and implicit) under a symmetrical grid fault. In Case Study~II, we compare the performance of the cross-forming control with three typical strategies. In Case Study~III, we show the result of the implicit cross-forming control under an asymmetrical grid fault, where four different negative-sequence control modes are encompassed. In Case Study~IV, the performance of the cross-forming control is validated in multi-inverter grid-connected and islanded systems. Finally, experimental validations are presented.

\begin{figure}
  \begin{center}
  \includegraphics{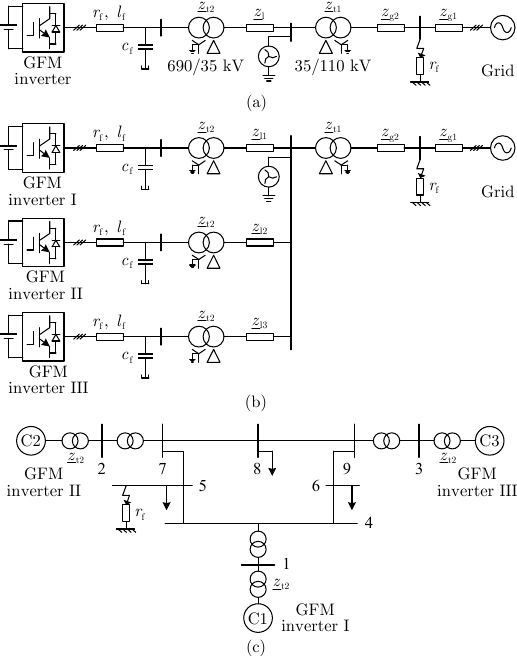}
  \caption{Illustration of the system models in Case Studies. (a) A single-inverter system in Case Studies I--III. (b) A multi-inverter system in Case Study~IVa. (c) IEEE 9-bus system in Case Study~IVb.}
  \label{fig:case-study-system}
  \end{center}
\end{figure}

\begin{table}
\centering
\caption{Parameters in Simulation Studies and Experiments}
\begin{tabular}[t]{lll}
\arrayrulecolor{black}
\hline \hline
Symbol                 & Description       & Value               \\
\hline
\multicolumn{3}{l}{(a) Parameters in simulation Case Studies I--IV} \\
\hline
$S_N$       & Nominal capacity  & \makecell[tl]{I--III:\,$200$\,MVA \\ IVa:\,$200/3$\,MVA\,for\,each \\ IVb:\,$247.5$,\,$192$,\,$128$\,MVA} \\
$\omega_0$       & Fundamental frequency & $100\pi$\,rad/s       \\
$\pha z_{\mathrm{g1}}$ & Grid impedance  & $0.01 + j0.1$\,pu\\
$\pha z_{\mathrm{g2}}$ & Grid impedance  & $0.003 + j0.03$\,pu \\
$\pha z_{\mathrm{l}}$  & Line impedance  & $0.01 + j0.05$\,pu \\
$\pha z_{\mathrm{l1}}$  & Line impedance  & $0.01 + j0.05$\,pu \\
$\pha z_{\mathrm{l2}}$  & Line impedance  & $0.02 + j0.10$\,pu \\
$\pha z_{\mathrm{l3}}$  & Line impedance  & $0.03 + j0.15$\,pu \\
$\pha z_{\mathrm{t1}}$ & Transformer impedance & $0.16/30 + j0.16$\,pu \\
$\pha z_{\mathrm{t2}}$ & Transformer impedance & $0.06/30 + j0.06$\,pu \\
$r_{\mathrm{f}}$ & Fault grounding resistance & $1.0\,\Omega$ \\
$l_\mathrm{f}$      & Filter inductance & $0.05$\,pu \\
$r_\mathrm{f}$      & Filter inductance & $0.05/10$\,pu\\
$c_\mathrm{f}$      & Filter inductance & $0.05$\,pu\\
$p^{\star}$ & Active power setpoint & \makecell[tl]{I:\,$0.2$\,pu \\ II:\,$1.0$\,pu \\ III:\,$0.2$\,pu \\ IVa,b:\,$0.5$,\,$0.7$,\,$0.9$\,pu} \\
$q^{\star}$ & Reactive power setpoint & \makecell[tl]{I--III: $0.0$\,pu \\ IVa,b: $0.5$,\,$0.3$,\,$0.1$\,pu} \\
$v^{\star}$ & Voltage setpoint & \makecell[tl]{I--IVa:\,$1.0$\,pu \\ IVb:\,$1.1$\,pu} \\
$T_J$ & Inertia time constant & $5$\,s \\
$D$ & Damping coefficient & $25$ \\
$m_q$ & Reactive power droop gain & $0.2$ \\

$\pha {z}_\mathrm{v}$      & Virtual impedance & \makecell[tl]{I--IVa:\,$j0.2$\,pu \\ IVb:\,$0.2e^{j5\pi/12}$\,pu} \\
$\tau_{\mathrm{v}}$ & LPF time const. in $\pha v$ feedback & $0.01$\,s \\
$\tau_{\mu}$ & LPF time const. in $\mu$ feedback & $0.01$\,s \\
$\kappa$ & Cross-forming feedforward gain & $1$ \\
$\kappa_{\mathrm{i}}$ & Cross-forming integral gain & $50$ \\
$\kappa_{\mathrm{vi}}$ & Feedback gain in adaptive VI & $0.91$ \\
$\sigma_{\mathrm{vi}}$ & $X/R$ ratio in adaptive VI & $10$ \\

$k^{-}$ & $K$-factor in neg.-seq mode IV & $6$ \\
$I_{\lim}$ & Current limit & $1.1$\,pu \\
\hline
\multicolumn{3}{l}{(b) Parameters in experiments} \\
\hline
$U_{\mathrm{rms}}$       & Line-to-line voltage level     & $200$\,V \\
$S_N$       & Nominal capacity  & $1$\,kVA                  \\
$\omega_0$       & Fundamental frequency & $100\pi$\,rad/s       \\
$L_\mathrm{f}$      & Filter inductance & $1.5$\,mH ($0.012$\,pu) \\
$R_\mathrm{f}$      & Filter inductance & $1.0\,\Omega$ ($0.025$\,pu)\\
$C_\mathrm{f}$      & Filter inductance & $3.5\,\mu$F ($0.044$\,pu)\\
$G_\mathrm{load}$    & Resistor load conductance & $0.124$\,pu \\
$\pha {z}_\mathrm{g}$      & Grid impedance (emulated) & $j0.2$\,pu\\
$p^{\star}$ & Active power setpoint & $0.1$\,pu \\
$q^{\star}$ & Reactive power setpoint & $0.0$\,pu \\
$v^{\star}$ & Voltage setpoint & $1.1$\,pu \\
$\pha {z}_\mathrm{v}$ & Virtual impedance & $j0.1,\,j0.6$\,pu \\
$f_{\mathrm{sw}}$ & Switching frequency & $32$\,kHz \\
$f_{\mathrm{s}}$ & Sampling and control frequency & $8$\,kHz \\
\hline \hline
\end{tabular}
\label{tab:system-parameters}
\end{table}

The system models used in the simulations are depicted in Fig.~\ref{fig:case-study-system}, where a single-inverter system, a multi-inverter grid-connected system, and the IEEE 9-bus system with three inverters are included. Grid faults are simulated on the high-voltage network through a grounding resistor. Grid-forming inverters are configured with cascaded control loops, as shown in Fig.~\ref{fig:generic-control-diagram} in Appendix~\ref{appendix:standard-gfm}. Since the inner voltage controller is arranged as a static virtual admittance element, in contrast to conventional PI/PR dynamic regulators, the controller parameter tuning is simple, allowing a wide range of parameter value choices. The guidelines outlined in Section~\ref{sec:parameter-tuning} have been adopted to choose specific parameters, which are summarized in Table~\ref{tab:system-parameters}. In particular, we choose the virtual impedance $\pha{z}_{\mathrm{v}}$ typically small to avoid a large virtual voltage drop and reactive power consumption \cite{zhang2023current}; purely inductive to test the worst case where the virtual impedance does not provide a resistive damping effect \cite{taoufik2022variable}. In addition, the time constant of the LPFs in the cross-forming regulator is set to $0.01$ or $0.02$\,s, resulting in a bandwidth of around $10$\,Hz \cite{wu2022smallsignal}, separated from the bandwidth of the inner current control loop.

\begin{figure*}
  \begin{center}
  \includegraphics{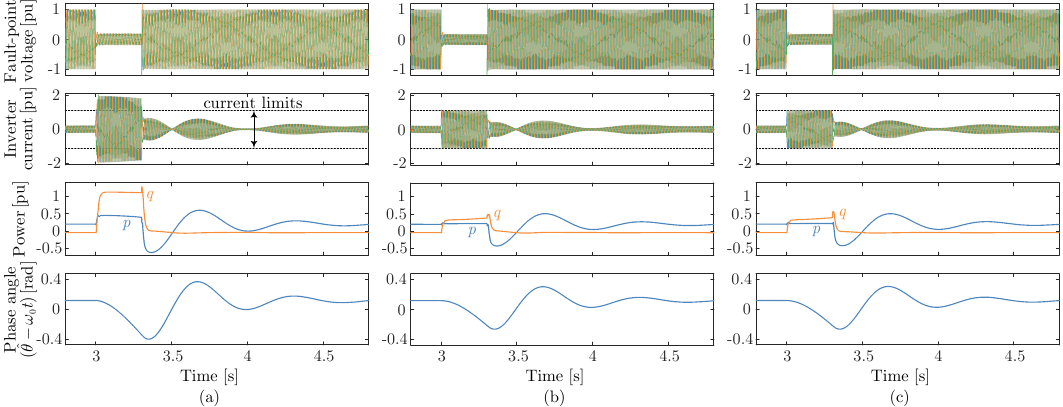}
  \caption{Result of Case Study~I under a symmetrical grid fault: (a) Without using any current-limiting strategies; (b) With the explicit cross-forming control under the grid fault. (c) With the implicit cross-forming control under the grid fault. Note that the post-fault oscillatory recovery is a result of the inertial response of the VSM, and the oscillation can be avoided by employing inertia-less voltage-forming controls, e.g., droop control.}
  \label{fig:case-study-I}
  \end{center}
\end{figure*}

\subsection{Case Study~I: Symmetrical Grid Faults}

The result of Case Study~I is shown in Fig.~\ref{fig:case-study-I}. A symmetrical short-circuit fault occurs at $3$\,s and is cleared at $3.3$\,s. The fault occurs between $\pha z_{\mathrm{g1}}$ and $\pha z_{\mathrm{g2}}$ in the system of Fig.~\ref{fig:case-study-system}(a). The inverter without using any current-limiting strategies immediately suffers from overcurrent, as can be seen in Fig.~\ref{fig:case-study-I}(a). In Fig.~\ref{fig:case-study-I}(b) and \ref{fig:case-study-I}(c), in contrast, both the explicit and implicit cross-forming controls successfully limit the inverter current to the prescribed limit $1.1$\,pu rapidly due to their inherent current magnitude forming capability. Since the current is saturated, the power injection is consequently reduced compared to the unlimited scenario, indicating the actual capability of the inverter to provide FRT services under current limits. It can be seen that the increase in reactive power injection is fast after the grid fault occurs, such that the reactive power/current response can satisfy the fault current specifications in grid codes. This is attributed to the angle forming functionality of the cross-forming control. In other words, the angle of the virtual internal voltage vector is controlled and remains the same as the pre-fault value at the moment of the fault occurrence (and then slowly exhibits an inertial response); see the phase angle $(\hat{\theta} - \omega_0 t)$ in Fig.~\ref{fig:case-study-I}. The reactive current component in the fault current naturally increases at the fault moment, as explained in Fig.~\ref{fig:fault-services}(c). To provide more reactive power/current during the fault, one can reduce the power angle by lowering the active power setpoint. It can also be observed from Fig.~\ref{fig:case-study-I} that the phase angle gradually decreases before the fault clearance, which is because the (Thevenin) equivalent grid voltage phasor undergoes a backward phase jump at the time of the short-circuit fault.

Upon detecting grid voltage recovery, the control reverts to the normal voltage-forming mode. Subsequently, the active power and phase angle exhibit an inertial response before stabilizing at their pre-fault steady-state values. If the fault lasts longer and the power setpoint is not too large, the inverter will synchronize with the faulty grid and settle to a steady state. To obtain transient stability guarantees for either the fault-on stage or the post-fault stage, one can apply the extended method in Section~\ref{sec:transient-stability-results}. This is possible as the proposed cross-forming regulator along with the enhanced voltage-forming reference (VSM in this case) leads to a constant impedance in the equivalent circuit and an equivalent normal form of the closed-loop system.

\begin{figure}
  \begin{center}
  \includegraphics{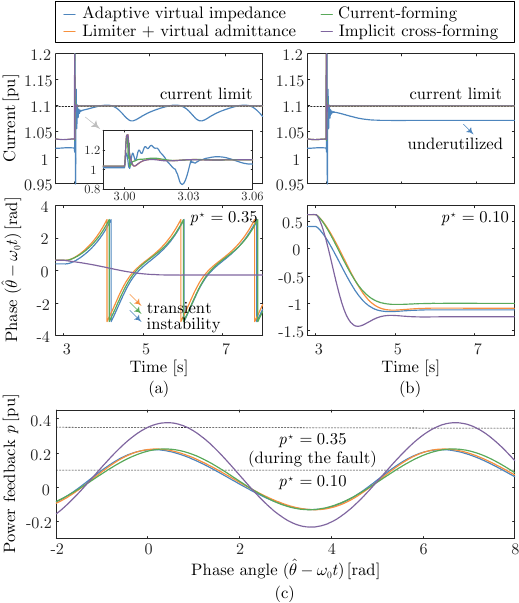}
  \caption{Result of Case Study~II: Current-limiting and stability performance comparison of the proposed cross-forming strategy against three typical strategies. (a) and (b) are the results with $p^{\star} = 0.35$ and $0.10$\,pu during the fault, respectively. (c) Power-angle curves during the fault.}
  \label{fig:case-study-II}
  \end{center}
\end{figure}

\subsection{Case Study~II: Comparison With Existing Strategies}

In Case Study~II, a permanent symmetrical short-circuit fault is simulated at $3$\,s. The fault occurs between $\pha z_{\mathrm{g1}}$ and $\pha z_{\mathrm{g2}}$ in the system of Fig.~\ref{fig:case-study-system}(a). The active power setpoint is set to $1.0$\,pu to test the fault current limitation under a heavily loaded pre-fault condition. However, it is reduced to a low level during the fault to prioritize reactive current injection and facilitate transient stability performance. Two different power setpoint levels for the fault period, $0.35$ and $0.10$\,pu, are considered to compare the differences in transient stability between multiple strategies. The result is displayed in Fig.~\ref{fig:case-study-II}. The benchmarking analysis in Section~\ref{sec:benchmarking} is corroborated with the help of this simulation. More specifically, we observe that the limiter + virtual admittance strategy, the current-forming strategy, and the proposed cross-forming strategy capably rapidly limit the current to the prescribed value. The adaptive virtual impedance is relatively slow as it suffers from the limitation of the voltage control bandwidth. Moreover, due to the use of proportional feedback control, the adaptive virtual impedance cannot fully utilize the overcurrent limit, i.e., the current is limited under $I_{\mathrm{lim}}$, as observed from Fig.~\ref{fig:case-study-II}(a) and (b).

We depict the power-angle curves for the control strategies during the fault in Fig.~\ref{fig:case-study-II}(c), i.e., phase portraits with recorded power and phase angle trajectories under unstable cases. Note that the power-angle concept is valid for the current-forming control since it uses a power droop, instead of a PLL, for synchronization \cite{huang2019transient}. From Fig.~\ref{fig:case-study-II}(c), we can see that the cross-forming control has a larger transient stability margin than the others since the enhancement of the power feedback as in \eqref{eq:power-feedback} results in a higher magnitude of the power-angle curve. Therefore, it can achieve transient stability even with a higher power setpoint $0.35$\,pu. In contrast, the existing strategies exhibit transient instability behaviors due to the loss of equilibrium points in this case, as observed from the divergent phase angle response in Fig.~\ref{fig:case-study-II}(a). With a lower power setpoint $0.10$\,pu during the fault, all the strategies show a stable response in Fig.~\ref{fig:case-study-II}(b). However, for these existing strategies, their inherent transient stability is less robust as the power-angle curve is reduced and thus the allowable range of power-angle variation becomes smaller. Using some transient stability enhancement methods with these strategies (see Table~\ref{tab:comparison-current-limiting-strategies}), it is possible to obtain stability. Nevertheless, the analysis of transient stability for the existing strategies continues to present significant challenges, stemming from the current-dependent equivalent impedance or the switching of control architectures in these traditional strategies. On the other hand, the proposed cross-forming control allows us to readily extend the pre-existing transient stability results. This is a significant advantage that has not yet been established with the existing current-limiting strategies. In short, the proposed cross-forming strategy is easier to implement and analyze.

\begin{figure*}
  \begin{center}
  \includegraphics{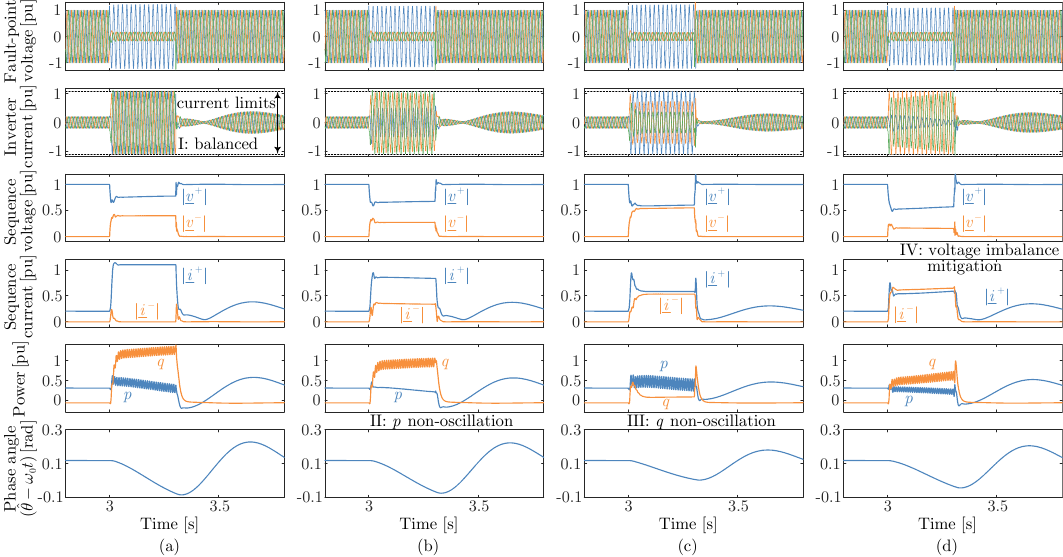}
  \caption{Result of Case Study~III under an asymmetrical grid fault: (a) Mode I -- Balanced current control; (b) Mode II -- Active power oscillation suppression; (c) Mode III -- Reactive power oscillation suppression; (d) Mode IV -- Negative-sequence voltage mitigation.}
  \label{fig:case-study-III}
  \end{center}
\end{figure*}

\subsection{Case Study~III: Asymmetrical Grid Faults}

In Case Study~III, an asymmetrical fault (double line-to-ground fault) occurs at $3$\,s and is cleared at $3.3$\,s. The fault occurs in between $\pha z_{\mathrm{g1}}$ and $\pha z_{\mathrm{g2}}$ in the system shown in Fig.~\ref{fig:case-study-system}(a). The result of Case Study~III is shown in Fig.~\ref{fig:case-study-III}. We observe that the fault-point voltage becomes unbalanced during the fault and the voltages of the faulty phases B and C decrease.

For the control of the negative sequence current, four control modes are presented in Fig.~\ref{fig:case-study-III}. In Fig.~\ref{fig:case-study-III}(a), the result for a balanced current mode is displayed. This control ensures that the inverter current remains balanced. For the result observed in Fig.~\ref{fig:case-study-III}(b)/(c), an active/reactive power oscillation suppression mode (Mode II/III; see Appendix~\ref{appendix:negative-sequence-current}) is employed. We observe that with the help of this control mode, the active/reactive power is non-oscillatory. Finally, in Fig.~\ref{fig:case-study-III}(d), the result for a negative-sequence voltage mitigation mode (Mode IV) is displayed. Using this control mode, the negative-sequence voltage is reduced by intentionally absorbing negative-sequence reactive current through a virtual inductance as in \eqref{eq:negative-sequence-current-spec4}. Thus, the negative-sequence service required by the grid codes is satisfied \cite{ieee2800}. However, the positive-sequence voltage is also reduced compared to the result in Fig.~\ref{fig:case-study-III}(a). This reduction is because the positive-sequence reactive current component is compromised and the sequence networks are coupled, similar to the grid-following case \cite{he2022synchronization}. The results in this case study validate that the proposed cross-forming control successfully limits the maximum phase current magnitude to the prescribed limit for asymmetrical faults and is compatible with any of the four negative-sequence control modes. The control modes are reviewed in more detail in Appendix~\ref{appendix:negative-sequence-current}.

\begin{figure}
  \begin{center}
  \includegraphics{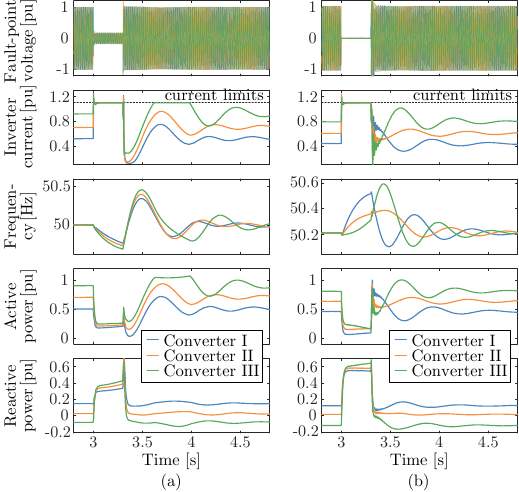}
  \caption{Result of Case Study~IV: (a) Three-inverter grid-connected system in Fig.~\ref{fig:case-study-system}(b); (b) IEEE 9-bus system with three inverters in Fig.~\ref{fig:case-study-system}(c).}
  \label{fig:case-study-IV}
  \end{center}
\end{figure}

\subsection{Case Study~IV: Multi-Inverter Scenarios}

The results for the multi-inverter scenarios are shown in Fig.~\ref{fig:case-study-IV}. Case Study~IVa represents the grid-connected system results and Case Study~IVb represents the IEEE 9-Bus system results.

\subsubsection{Case Study~IVa} A symmetrical short-circuit fault occurs at $3$\,s and is cleared at $3.3$\,s in the grid-connected system shown in Fig.~\ref{fig:case-study-system}(b). The three inverters have the same power ratings but different power setpoints and different line impedance values to the point of common coupling (PCC), as indicated in Table~\ref{tab:system-parameters}(a). The simulation result in Fig.~\ref{fig:case-study-IV} shows that the inverters successfully ride through the grid fault. The inverter currents remain limited at/within the prescribed value during and after the fault, and reactive power is injected during the fault. Since the three-phase currents are slightly distorted due to transients at the time of the occurrence and clearance of the fault, the 2-norm current magnitude expression $\lvert {i_\alpha + j i_\beta} \rvert$ cannot precisely represent the phase current peaks at these time instants. Therefore, even though the transient peak of the 2-norm current magnitude exceeds the limit, the actual phase current magnitudes ($\infty$-norm) remain within the limit. It is important to note that the active power setpoints are reduced to $0.2$\,pu during the fault to prioritize reactive power injection and alleviate active power imbalance to improve transient stability. The inverters achieve transient stability after grid fault recovery even if the current limit may be touched again (see the green waveform) during fault recovery.

\subsubsection{Case Study~IVb} A symmetrical short-circuit fault occurs at $3$\,s and is cleared at $3.3$\,s in the IEEE 9-bus system in Fig.~\ref{fig:case-study-system}(c). The three inverters have different power ratings and different power setpoints, as indicated in Table~\ref{tab:system-parameters}(a). The simulation result shows that the inverters successfully ride through the grid fault and their currents remain limited at the prescribed value during the severe grid fault. Similarly, the active power setpoints are reduced to $0.2$\,pu during the fault to improve post-fault transient stability. It is important to note that all the inverters enter the cross-forming operating mode during the grid fault and, as a result, their voltages concurrently follow the current injection based on the circuit law. Since the IEEE 9-bus system contains only constant impedance loads, the system voltage behavior depends on the current response. This is similar to the islanded operation of a current-forming inverter discussed in \cite{li2022revisiting}. However, if there are voltage-dependent nonlinear loads, the system may require voltage magnitude forming devices to respond to form the voltage magnitude. This is possible in practice if some grid-forming inverters are far from the fault location and thus do not enter current saturation during the fault (i.e., consistently preserving voltage magnitude-and-angle forming).

\begin{figure}
  \begin{center}
  \includegraphics{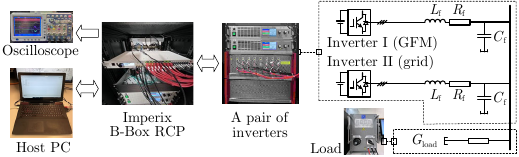}
  \caption{Experimental setup based on prototype inverters, where inverter II emulates grid voltage, grid impedance, and different grid faults.}
  \label{fig:exp-setup}
  \end{center}
\end{figure}

In all the case studies above, once the grid voltage recovers, the inverter exits current saturation based on fault clearance detection through terminal voltage measurement. A brief period of current re-saturation may occur during the transition back to the standard voltage-forming mode. However, the inverter does not need to re-enter the cross-forming mode, which is crucial to avoid repeated switching between the two modes. Following fault recovery, the system's transient stability significantly improves due to the restored voltage. As a result, even under the standard voltage-forming mode, the inverter can endure the brief current re-saturation period and stabilize, provided that the system state is near the steady state \cite{zhang2023current}.

\subsection{Experimental Results}

\begin{figure}
  \begin{center}
  \includegraphics{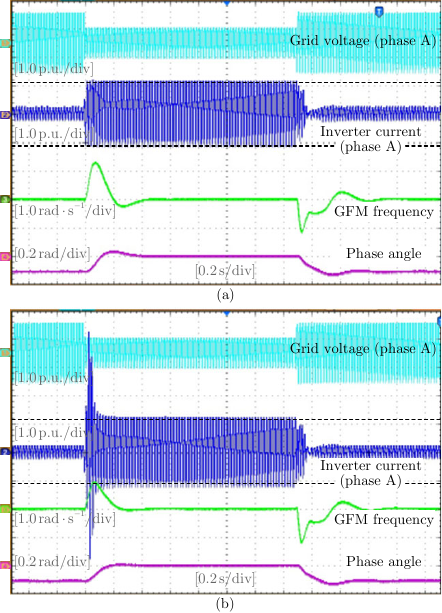}
  \caption{Experimental results under a symmetrical grid fault, where the grid voltage dips to $0.5$\,pu and phase jumps $+15$ degrees. (a) With the proposed current-limiting strategy. (b) Without using current limiting.}
  \label{fig:exp-voltage-dip}
  \end{center}
\end{figure}

\begin{figure}
  \begin{center}
  \includegraphics{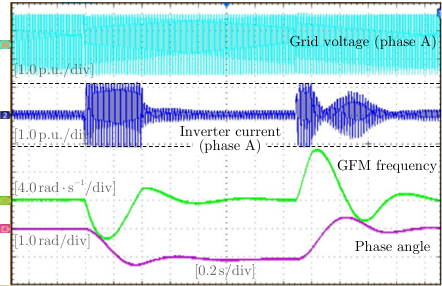}
  \caption{Experimental results under grid phase jump $-60$ degrees. Note again that the recovery is oscillatory due to the inertial response.}
  \label{fig:exp-phase-jump}
  \end{center}
\end{figure}

\begin{figure*}
  \begin{center}
  \includegraphics{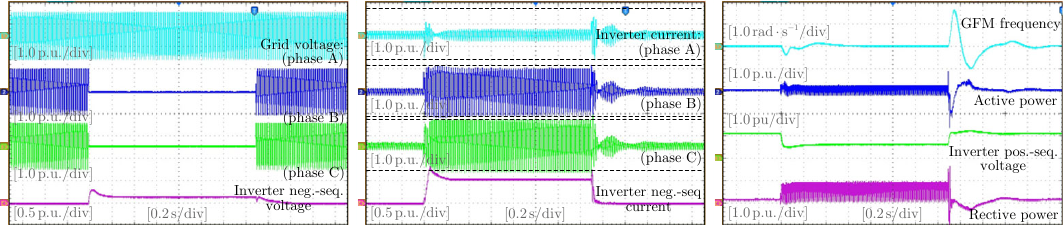}
  \caption{Experimental results under an asymmetrical grid fault, where phase A voltage remains while phase B and C voltages dip to zero.}
  \label{fig:exp-asym}
  \end{center}
\end{figure*}

The experimental setup depicted in Fig.~\ref{fig:exp-setup} is used to further validate the performance of the cross-forming control. The setup involves the interconnection of two parallel inverters, one employing a VSM-based voltage-forming reference and the other emulating grid voltage, grid impedance, and different grid faults. Since the DC voltage sources of both inverters are bi-directional, a resistive load $G_{\mathrm{load}}$ is connected to the AC bus to consume the power delivered by the inverters. Both inverters are controlled by an Imperix B-Box rapid control prototyping system. The parameters used in the experiments are given in Table~\ref{tab:system-parameters}(b).

Fig.~\ref{fig:exp-voltage-dip} shows the experimental results for a symmetrical disturbance. The voltage dips to $0.5$\,pu and the phase angle jumps by $+15$\,degrees. The current of the grid-forming inverter is limited to the prescribed value when using the proposed cross-forming strategy, as demonstrated in Fig.~\ref{fig:exp-voltage-dip}(a). In contrast, when no current limiting is employed, the inverter suffers from an immediate overcurrent peak and subsequent steady-state overcurrent during the fault, as shown in Fig.~\ref{fig:exp-voltage-dip}(b). Since the voltage dip lasts for $1.5$\,s, the cross-forming controlled inverter synchronizes with the grid during the fault. The inverter also achieves transient stability after grid fault recovery.

In Fig.~\ref{fig:exp-phase-jump}, the experimental results under a grid phase jump of $-60$ degrees (without any voltage magnitude changes) are shown. A higher virtual impedance $\pha z_{\mathrm{v}} = 0.6$\,pu is used to guarantee that there exist feasible operating points while operating the inverter in the cross-forming mode. More specifically, as illustrated in Fig.~\ref{fig:fault-services}(c), when the power angle between $\angle \pha v$ and $\angle \pha v_{\mathrm{g}}$ suddenly changes considerably, a large radius of $\vert \Delta \pha v_{z} \vert = \vert (\pha z_{\mathrm{g}} + \pha z_{\mathrm{v}}) \pha i \vert$ is required to ensure that the circle always intersects with the angle direction $\angle \pha v$ during the grid resynchronization. Otherwise, the cross-forming control will collapse due to the non-existence of a feasible operating point. From Fig.~\ref{fig:exp-phase-jump}, it can be seen that the overcurrent is limited during the fault and the inverter also synchronizes with the grid. Throughout the current-limiting process, there are always operating points because of the use of a large virtual impedance. When the grid resynchronization is achieved, the overcurrent state exists, and the inverter returns to the original steady state.

Finally, Fig.~\ref{fig:exp-asym} shows the experimental results under an asymmetrical grid voltage dip. Phase A voltage remains the same while phase B and C voltages dip to zero, emulating a bolted double line-to-ground fault. During the asymmetrical fault, phase C current stays at the limit, phase B current remains close to the limit, and phase A current remains small. We employ the negative-sequence voltage mitigation mode, which leads to a small negative-sequence voltage at the inverter terminal. Similar to the result under the symmetrical fault, synchronization is achieved during the fault stage and after the recovery stage.

\section{Conclusion}
\label{sec:conclusion}

We present the concept of cross-forming inverter control, particularly applicable for current-saturated operation during symmetrical or asymmetrical grid faults. With the philosophy to regulate the voltage angle even when the current output of the inverter is saturated, the cross-forming control consistently provides voltage angle forming services. Therefore, the cross-forming control integrates voltage angle forming and current magnitude forming characteristics, inherently satisfying grid-forming objectives/specifications under grid faults, including angle-/frequency-forming synchronization, FRT services provision, fault current limiting, etc. Based on the proposed control and the resulting equivalent circuit featuring a constant virtual impedance, we establish an equivalent normal form of the system. The equivalent normal form represents structurally identical synchronization dynamics to the normal system, thus enabling the extension of previously established transient stability results from unsaturated to saturated conditions. The extension makes transient stability analysis, assessment, and guarantees under current saturation tractable. It also facilitates the use of consistent modeling and analysis approaches across fault and normal conditions, resembling the practices in conventional power systems.

While we develop and validate two simple and feasible implementations of the cross-forming control, further research is necessary to explore more advanced variants. The explicit and implicit types indicate two technical methodologies. In addition, despite the categorized review and the benchmarking analysis in this work, we believe a new, dedicated, comprehensive study is necessary to fairly compare all available strategies, taking into account factors such as current-limiting performance, transient and small-signal stability, response capabilities to serve for grid code requirements, and implementation complexity. Furthermore, identifying the control capability boundaries of grid-forming inverters under current saturation is crucial for formulating clearer and more reasonable performance requirements for grid-forming inverters.

\appendices

\section{Voltage-Forming Controls: A Review}
\label{appendix:standard-gfm}

A typical control block diagram for a voltage-forming inverter is shown in Fig.~\ref{fig:generic-control-diagram}. In the following, we introduce the state-of-the-art techniques applied in each control module.

\begin{figure}
  \begin{center}
  \includegraphics{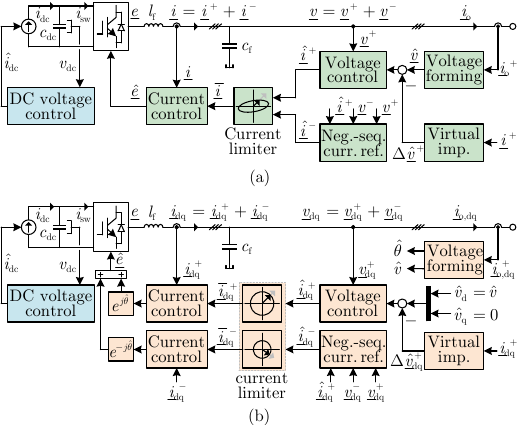}
  \caption{Typical control diagrams of voltage-forming inverters. (a) Implementation in the stationary reference frame. (b) Implementation in the rotational reference frame. Fig.~\ref{fig:neg-seq-curr-ref} displays the negative-sequence component control modes. Figs.~\ref{fig:current-limiter-albe} and \ref{fig:current-limiter-dq} illustrates the current limiters.}
  \label{fig:generic-control-diagram}
  \end{center}
\end{figure}

\subsection{Normal Forms of Voltage-Forming Controls}

Numerous voltage-forming control schemes have been developed in the literature. We categorize them into the following three major types considering similarities in their control architectures \cite{dorfler2023control}. In particular, these controls are termed ``\textit{normal forms}" since their initial designs aim to operate as normal voltage sources without considering current saturation. Moreover, since the voltage-forming controls are mostly applied only in the positive-sequence domain, we use $\pha{\hat{v}}^{+} = \pha{\hat v} = \hat{v} \angle \hat{\theta}$ to indicate the \textit{positive-sequence} voltage-forming reference throughout the paper.

\textit{Single-Input Single-Output Linear Type (Droop Control and Virtual Synchronous Machine):} Droop control and VSMs (and their variants) are prevalent voltage-forming control schemes. Droop control and VSMs are developed with the main consideration of the nominal operating point, where the network's power flows are reasonably approximated through decoupling and linearization. Therefore, both controls are single-input single-output (SISO), and linear in their structures. In particular, in a dominantly inductive network context, the normal form of droop control is given as
\begin{subequations}
    \begin{align}
        \dot {\hat\theta} = \hat \omega &= \omega_0 + m_p (p^{\star} - p) \\
        \hat v &= v^{\star} + m_q (q^{\star} - q)
    \end{align}
\end{subequations}
with power droop gains $m_p \in \mathbb{R}_{\geq 0}$ and $m_q  \in \mathbb{R}_{\geq 0}$, power setpoints $p^{\star}$ and $q^{\star}$, voltage setpoint $v^{\star}$, and nominal frequency $\omega_0$. The power feedback $p$ and $q$ should be taken from positive-sequence components (i.e., calculated with positive-sequence voltage and current components; the same applies hereinafter).

Similar to droop control, VSMs have been introduced to emulate a synchronous machine electromechanical model,
\begin{subequations}
    \begin{align}
        \dot {\hat \theta} &= \hat \omega \\
        T_{J} \dot {\hat \omega} &= -D(\hat \omega - \omega_0) + (p^{\star} - p) \\
        \hat v &= v^{\star} + m_q (q^{\star} - q)
    \end{align}
\end{subequations}
with virtual inertia time constant $T_{J}$ and damping gain $D$.

\textit{Multivariable and Nonlinear Type (Complex Droop Control and Dispatchable Virtual Oscillator Control):} Unlike the decoupled SISO mechanism used in droop control and VSMs, ``complex droop control" is structurally multivariable and nonlinear \cite{he2023quantitative}. Consequently, it can effectively manage the inherent coupling and nonlinearity of the network's active and reactive power flows. Furthermore, it performs well even when operating far from the nominal point. In polar coordinates, complex droop control reads as
\begin{subequations}
\label{eq:complex-droop-polar}
    \begin{align}
    \dot {\hat \theta} = \hat \omega &= \omega_0  + \eta \left( \frac{p^{\star}}{v^{\star 2}} - \frac{p}{\hat v^2} \right),\\
    \frac{\dot {\hat v}}{\hat v} = \hat \varepsilon &= \eta \left(\frac{q^{\star}}{v^{\star 2}} - \frac{q}{\hat v^2} \right) + \eta \alpha \frac{{v^{\star 2} - \hat v^2}}{v^{\star 2}}
    \end{align}
\end{subequations}
with power droop gain $\eta  \in \mathbb{R}_{\geq 0}$ and voltage magnitude droop gain $\alpha  \in \mathbb{R}_{\geq 0}$. The complex number with ${\dot {\hat v}}/{\hat v}$ as the real part and $\dot {\hat \theta}$ as the imaginary part, i.e., ${\dot {\hat v}}/{\hat v} + j\dot {\hat \theta}$, is known as complex frequency \cite{milano2022complex}. The complex frequency represents both the rate of change of the voltage magnitude, $\hat \varepsilon$, and the angular speed, $\hat \omega$. This explains why \eqref{eq:complex-droop-polar} is termed complex droop control (complex-power complex-frequency droop control).

In complex voltage vector coordinates, complex droop control is equivalently rewritten as dispatchable virtual oscillator control (dVOC) \cite{he2023quantitative}, 
\begin{equation}
\label{eq:dvoc}
    \dot{\pha{\hat v}} = j\omega_0\pha{\hat v} + j \eta \left( \frac{p^{\star} - jq^{\star}}{v^{\star 2}} \pha{\hat v} - \pha{i}_\mathrm{o}^{+} \right) + \eta \alpha \frac{{v^{\star 2} - \hat v^2}}{v^{\star 2}} \pha{\hat v},
\end{equation}
where $\pha{\hat v}$ is the voltage reference vector and $\pha{i}_\mathrm{o}^{+}$ the output current (positive-sequence) \cite{he2023quantitative}. It has been shown that complex droop control (i.e., dVOC) guarantees the global asymptotic stability of voltage-forming inverters in both islanded \cite{colombino2019global} and grid-connected scenarios \cite{he2023quantitative}.

\textit{AC-DC Dual-Port Type (Machine Matching and Dual-Port Control):} Most existing voltage-forming controls focus on AC grid forming while neglecting the DC-bus voltage dynamics and regulation. To overcome this limitation, a dual-port voltage-forming control has been developed as \cite{subotic2022power}
\begin{subequations}
\label{eq:dual-port-control}
    \begin{align}
        \label{eq:dual-port-control-1}
        \dot {\hat \theta} = \hat \omega &= \omega_0 + m_p (p^{\star} - p) + m_\mathrm{dc} (v_\mathrm{dc} - v_\mathrm{dc}^{\star})\\
        \hat v &= v^{\star} + m_q (q^{\star} - q),
    \end{align}
\end{subequations}
which regulates the AC frequency, the AC voltage magnitude, and the DC voltage. In a particular case, $m_p = 0$, the control in \eqref{eq:dual-port-control} simplifies to a machine-matching control \cite{huang2017virtual}, which directly links the DC voltage to the AC frequency. This connection exemplifies a widely recognized observation that the DC voltage, similar to the frequency of a synchronous machine, indicates the power imbalance in power inverters. Analogous to \eqref{eq:dual-port-control-1}, a hybrid-angle control is developed in \cite{tayyebi2022grid}, where a nonlinear angle forming term is used instead of the linear active power droop term, rendering a rigorous large-signal stability guarantee.

We remark that all of the voltage-forming controls shown above can be generalized to the case of resistive-inductive networks, where the power feedback should be rotated according to the network impedance angle \cite{juan2009adaptive}.

Aside from the outer voltage-forming module offering a voltage reference, inner control modules are necessary to ensure fast reference tracking, disturbance rejection, and overcurrent protection. As seen from Fig.~\ref{fig:generic-control-diagram}, the inner control loops can be implemented in the stationary or rotational reference frame with proportional-resonant (PR) or proportional-integral (PI) regulators, respectively. We display the control implementation in complex vector coordinates such as $\pha{v}=v_{\alpha} + jv_{\beta}$ in the stationary reference frame. The counterpart in rotational $dq$ coordinates can be obtained analogously.

\subsection{Voltage Tracking or Virtual Admittance Control}

\textit{Voltage Tracking Control:} The primary objective of the voltage control module is voltage reference tracking. A typical voltage tracking control is given as
\begin{equation}
\label{eq:voltage-tracking}
    \pha{\hat i}^{+} = \left(k_\mathrm{p}^{\mathrm{v}} + k_\mathrm{r}^{\mathrm{v}} \frac{2 \omega_{\mathrm{v}}s}{s^{2 }+2\omega_{\mathrm{c}}s+\omega_{0}^{2}}\right)(\pha{\hat v} - \pha{v}^{+}),
\end{equation}
which represents a practical PR regulator with a bandwidth of the resonant filter $\omega_{\mathrm{v}}$ and control gains $k_\mathrm{p}^{\mathrm{v}}$ and $k_\mathrm{r}^{\mathrm{v}}$ \cite{zmood2003stationary}.

\textit{Virtual Admittance Control:} Instead of using PR or PI tracking regulators, another typical voltage control uses a virtual admittance as a proportional-like regulator \cite{fan2022equivalent,zhang2023current,saffar2023impacts}. A typical implementation is as follows
\cite{rosso2021implementation,kkuni2024effects,zhang2023current,fan2022equivalent,saffar2023impacts},
\begin{equation}
\label{eq:virtual-admittance}
    \pha{\hat i}^{+} = \frac{1}{r_\mathrm{v} + l_\mathrm{v}s} \left(\pha{\hat v} - \pha{v}^{+} \right)\ \mathrm{or}\ \pha{\hat i}^{+} = \frac{1}{r_\mathrm{v} + j x_\mathrm{v}} \left(\pha{\hat v} - \pha{v}^{+} \right)
\end{equation}
with virtual resistance $r_\mathrm{v}$ and virtual inductance $l_\mathrm{v}$. The regulator $\frac{1}{r_\mathrm{v} + l_\mathrm{v}s}$ functions as a dynamic virtual admittance. Alternatively, one can choose a static virtual admittance $\frac{1}{r_\mathrm{v} + j x_\mathrm{v}}$ \cite{zhang2023current}, or even a real-valued proportional voltage control \cite{fan2022equivalent}. While the PR or PI regulator tracks the voltage reference with zero errors in a steady state, the virtual admittance control behaves as a proportional gain. The proportional gain allows tracking errors while being free of the integrator windup issue during current saturation, which will be elaborated on later.

\subsection{Current Tracking Control}

With a positive-sequence current reference as given in \eqref{eq:voltage-tracking} and a negative-sequence current reference specified by \eqref{eq:negative-sequence-current-spec-flexible} or \eqref{eq:negative-sequence-current-spec4} (shown in Appendix~\ref{appendix:negative-sequence-current} later), the composite current reference is given as
\begin{equation}
\label{eq:current-reference}
    \pha{\hat i} = \pha{\hat i}^{+} + \pha{\hat i}^{-}.
\end{equation}
Analogous to the voltage tracking control, a current tracking control in the stationary reference frame is given as
\begin{equation}
\label{eq:current-control}
    \pha{\hat e} = \left(k_\mathrm{p}^{\mathrm{c}} + k_\mathrm{r}^{\mathrm{c}} \frac{2 \omega_{\mathrm{c}}s}{s^{2 }+2\omega_{\mathrm{c}}s+\omega_{0}^{2}}\right)(\pha{\overline i} - \pha{i}),
\end{equation}
which uses a practical PR regulator with a bandwidth of the resonant filter $\omega_{\mathrm{c}}$ and control gains $k_\mathrm{p}^{\mathrm{c}}$ and $k_\mathrm{r}^{\mathrm{c}}$ \cite{zmood2003stationary}. In \eqref{eq:current-control}, $\pha{\overline i}$ denotes the current reference saturated by the current limiter (introduced in Appendix~\ref{appendix:current-limiting}). When the current reference is not saturated, $\pha{\overline i}$ equals $\pha{\hat i}$.

\section{Negative-Sequence Current Specifications: \\A Review of Four Modes and Rigorous Proofs}
\label{appendix:negative-sequence-current}

\begin{figure}
  \begin{center}
  \includegraphics{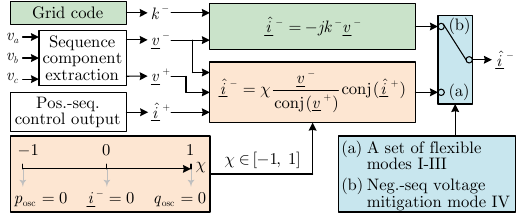}
  \caption{Multiple control modes for negative-sequence current components. (a) A set of flexible control modes to achieve similar objectives as for conventional grid-following devices (e.g., suppressing active power or reactive power oscillation \cite{jia2017review}); (b) Negative-sequence voltage mitigation mode (to satisfy grid code requirements) \cite{nasr2023controlling}.}
  \label{fig:neg-seq-curr-ref}
  \end{center}
\end{figure}

For the specification of negative-sequence current references under unbalanced grid conditions, the following four modes can be typically employed.

\subsection{Mode I: Balanced Current Control} The current of inverters is balanced when negative-sequence current components are absent. To achieve this, the negative-sequence current reference can be specified as zero, i.e.,
\begin{equation}
\label{eq:negative-sequence-current-spec1}
    \pha{\hat i}^{-} = 0.
\end{equation}
Without providing a negative-sequence current component, the negative-sequence circuit on the inverter side is open-circuit.

\subsection{Mode II: Active Power Oscillation Suppression} There is an oscillating component at twice the fundamental frequency in active and reactive power whenever current and voltage contain both positive- and negative-sequence components \cite{zheng2018flexible}. The oscillation in either active or reactive power can be eliminated by properly specifying the negative-sequence current component. Specifically, the active power oscillation will be eliminated if and only if the negative-sequence current is given as follows \cite{zheng2018flexible},
\begin{equation}
\label{eq:negative-sequence-current-spec2}
    \pha{\hat i}^{-} = -\frac{\pha{v}^{-}}{\mathrm{conj}(\pha{v}^{+})} \mathrm{conj}(\pha{\hat i}^{+}),
\end{equation}
where $\mathrm{conj}()$ indicates a conjugate operation. In \eqref{eq:negative-sequence-current-spec2}, the specification of the negative-sequence current reference relies on the positive-sequence current reference. The rigorous proof of \eqref{eq:negative-sequence-current-spec2} is given in Proposition~\ref{prop-pqosc}. 

\subsection{Mode III: Reactive Power Oscillation Suppression} Similarly, the reactive power oscillation will be eliminated if and only if the negative-sequence current is given as \cite{zheng2018flexible},
\begin{equation}
\label{eq:negative-sequence-current-spec3}
    \pha{\hat i}^{-} = \frac{\pha{v}^{-}}{\mathrm{conj}(\pha{v}^{+})} \mathrm{conj}(\pha{\hat i}^{+}).
\end{equation}
The proof of \eqref{eq:negative-sequence-current-spec3} is also provided in Proposition~\ref{prop-pqosc}.

We note that the sum of \eqref{eq:negative-sequence-current-spec2} and \eqref{eq:negative-sequence-current-spec3} is zero, which reduces to \eqref{eq:negative-sequence-current-spec1}. Hence, these modes can be synthesized as a set of \textit{flexible modes},
\begin{equation}
\label{eq:negative-sequence-current-spec-flexible}
    \pha{\hat i}^{-} = \chi \frac{\pha{v}^{-}}{\mathrm{conj}(\pha{v}^{+})} \mathrm{conj}(\pha{\hat i}^{+}), \quad \chi \in [-1,\ 1],
\end{equation}
where $\chi$ denotes a tunable parameter for achieving different control objectives, as illustrated in Fig.~\ref{fig:neg-seq-curr-ref}. We note that the formulation in \eqref{eq:negative-sequence-current-spec-flexible} applies to both $\alpha\beta$ coordinates and $dq$ coordinates; see \cite{zheng2018flexible} for the derivation in $dq$ coordinates.

\subsection{Mode IV: Negative-Sequence Voltage Mitigation} Alternatively, the negative-sequence current component can also be specified to suppress the negative-sequence voltage magnitude and thus improve the voltage unbalance factor (VUF). This is a requirement of grid codes. In particular, IEEE Std. 2800-2022 \cite{ieee2800} requires that all inverter-based resources absorb negative-sequence reactive current in a proportion of the negative-sequence voltage. In respect thereof, the negative-sequence \textit{output} current reference can be specified as
\begin{equation}
\label{eq:negative-sequence-current-spec4}
    \pha{\hat i}^{-} = -j k^{-} \pha{v}^{-},
\end{equation}
where $k^{-}$ is known as $K$-factor \cite{vde2017technical}. The complex coefficient $j k^{-}$ relating the negative-sequence voltage to the \textit{input} current (i.e., $-\pha{\hat i}^{-} = j k^{-} \pha{v}^{-}$) is equivalent to a virtual susceptance in the negative-sequence circuit, or equivalently regarded as a virtual reactance $\frac{1}{j k^{-}} = j(-\omega)\frac{1}{\omega k^{-}}$ with a negative-sequence frequency $-\omega$ and equivalent inductance $\frac{1}{\omega k^{-}}$. Therefore, this control mode contributes to reducing the negative-sequence voltage. A resistance/conductance component or a low-pass/band-pass filter can be incorporated into the coefficient in \eqref{eq:negative-sequence-current-spec4} to enhance dynamic performance.

We note that the control objectives of Modes I to III can be entirely fulfilled even when the positive- and negative-sequence current references are scaled down in the same ratio by a current limiter. For Mode IV, the current scaling ratio will reduce the $K$-factor equivalently. Furthermore, we indicate that the zero-sequence current component is inherently zero since the low-voltage-side zero-sequence circuit remains open due to the $\Delta$-Y configuration of the step-up transformer \cite{goksu2013impact}.

\begin{remark}
\textit{Differences in Negative-Sequence Control Between Grid-Forming and Grid-Following Architectures:}
The previous control objectives have been applied to a grid-following architecture for unbalanced grid conditions, which have been widely documented in the literature; see \cite{jia2017review} for a review. Compared to the grid-following architecture, the negative-sequence current reference specification in the grid-forming architecture is largely different regarding Modes II and III. This is because, in the grid-following architecture, both positive- and negative-sequence current references can be flexibly specified to suppress power oscillation. In contrast, in the grid-forming architecture, only the negative-sequence current reference can be flexibly specified while the positive-sequence current reference is governed in priority by the grid-forming control. It is shown in \cite{pola2023fault} that the power oscillation suppression may also admit grid-following-like schemes, e.g., by manipulating power references and current references multiple times. However, the grid-following-like scheme is rather complicated compared to the scheme in \eqref{eq:negative-sequence-current-spec-flexible}.
\end{remark}

\subsection{A Rigorous Proof of Modes II and III}

We denote positive-sequence and negative-sequence voltage and current components in complex vectors as follows: $\pha{v}^{+} \coloneqq v^{+} e^{j \theta_{\mathrm{v}}^{+}}$, $\pha{v}^{-} \coloneqq v^{-} e^{j \theta_{\mathrm{v}}^{-}}$, $\pha{i}^{+} \coloneqq i^{+} e^{j \theta_{\mathrm{c}}^{+}}$, and $\pha{i}^{-} \coloneqq i^{-} e^{j \theta_{\mathrm{c}}^{-}}$. The complex power $\pha{s}$ is then given as
\begin{equation}
\label{eq:complex-power}
\begin{aligned}
    \pha{s} &= (\pha{v}^{+} + \pha{v}^{-}) \left(\mathrm{conj}(\pha{i}^{+}) + \mathrm{conj}(\pha{i}^{-}) \right)\\
    &= \underbrace{ \pha{v}^{+} \mathrm{conj}(\pha{i}^{+}) + \pha{v}^{-} \mathrm{conj}(\pha{i}^{-}) }_{\pha{s}_{\mathrm{dc}}} + \underbrace{ \pha{v}^{+} \mathrm{conj}(\pha{i}^{-}) + \pha{v}^{-} \mathrm{conj}(\pha{i}^{+}) }_{\pha{s}_{\mathrm{osc}}},
\end{aligned}    
\end{equation}
with a dc component $\pha{s}_{\mathrm{dc}}$ and an oscillating component $\pha{s}_{\mathrm{osc}}$. The oscillating component is of concern, in which the active and reactive power components are represented as
\begin{equation}
\label{eq:pq-osc}
\begin{aligned}
    p_{\mathrm{osc}} &= \underbrace{ \Re \{\pha{v}^{+} \mathrm{conj}(\pha{i}^{-}) \} }_{ v^{+}i^{-} \cos{(\theta_{\mathrm{v}}^{+} - \theta_{\mathrm{c}}^{-})} } + \underbrace{ \Re \{\pha{v}^{-} \mathrm{conj}(\pha{i}^{+}) \} }_{ v^{-}i^{+} \cos{(\theta_{\mathrm{v}}^{-} - \theta_{\mathrm{c}}^{+})}}, \\
    q_{\mathrm{osc}} &= \underbrace{ \Im \{\pha{v}^{+} \mathrm{conj}(\pha{i}^{-}) \} }_{ v^{+}i^{-} \sin{(\theta_{\mathrm{v}}^{+} - \theta_{\mathrm{c}}^{-})} } + \underbrace{ \Im \{\pha{v}^{-} \mathrm{conj}(\pha{i}^{+}) \} }_{v^{-}i^{+} \sin{(\theta_{\mathrm{v}}^{-} - \theta_{\mathrm{c}}^{+})} }. \\
\end{aligned}    
\end{equation}

\begin{lemma}
\label{lem-pqosc}
$p_{\mathrm{osc}} = 0$ holds if and only if it holds that 
\begin{subequations}
\label{eq:posc-zero}
\begin{align}
\label{eq:posc1-zero}
    \Re \{\pha{v}^{+} \mathrm{conj}(\pha{i}^{-}) \} &= - \Re \{\pha{v}^{-} \mathrm{conj}(\pha{i}^{+}) \}, \\
\label{eq:posc2-zero}
    \Im \{\pha{v}^{+} \mathrm{conj}(\pha{i}^{-}) \} &= \Im \{\pha{v}^{-} \mathrm{conj}(\pha{i}^{+}) \}.
\end{align}
\end{subequations}
Similarly, $q_{\mathrm{osc}} = 0$ holds if and only if it holds that 
\begin{subequations}
\label{eq:qosc-zero}
\begin{align}
\label{eq:qosc1-zero}
    \Re \{\pha{v}^{+} \mathrm{conj}(\pha{i}^{-}) \} &= \Re \{\pha{v}^{-} \mathrm{conj}(\pha{i}^{+}) \}, \\
\label{eq:qosc2-zero}
    \Im \{\pha{v}^{+} \mathrm{conj}(\pha{i}^{-}) \} &= -\Im \{\pha{v}^{-} \mathrm{conj}(\pha{i}^{+}) \}.
\end{align}
\end{subequations}
\end{lemma}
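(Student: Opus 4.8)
The plan is to treat the oscillating active and reactive powers as genuine time-domain signals at twice the fundamental frequency and to reduce the requirement ``the oscillation vanishes for all time'' to the vanishing of a single complex phasor, which then splits cleanly into the two real scalar equalities. The one point that must be made explicit at the outset is that \eqref{eq:complex-power} suppresses the rotating factors: since the positive- and negative-sequence space vectors counter-rotate, the two cross terms of $\pha{s}_{\mathrm{osc}}$ rotate oppositely at $\pm 2\omega$, so that ``$p_{\mathrm{osc}}=0$'' must be read as an identity in $t$, not a single scalar equation.

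First I would make the $2\omega$ dependence explicit. Writing the constant phasor products as $\pha{a} \coloneqq \pha{v}^{+}\mathrm{conj}(\pha{i}^{-})$ and $\pha{b} \coloneqq \pha{v}^{-}\mathrm{conj}(\pha{i}^{+})$, the instantaneous oscillating active power reads $p_{\mathrm{osc}}(t) = \Re\{\pha{a}\,e^{j2\omega t}\} + \Re\{\pha{b}\,e^{-j2\omega t}\}$. The key manipulation is to fold the backward-rotating term onto the forward-rotating one using $\Re\{z\} = \Re\{\mathrm{conj}(z)\}$, giving $\Re\{\pha{b}\,e^{-j2\omega t}\} = \Re\{\mathrm{conj}(\pha{b})\,e^{j2\omega t}\}$ and hence $p_{\mathrm{osc}}(t) = \Re\{(\pha{a} + \mathrm{conj}(\pha{b}))\,e^{j2\omega t}\}$. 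Because $\cos 2\omega t$ and $\sin 2\omega t$ are linearly independent functions of $t$, a single rotating phasor has an identically vanishing real part iff the phasor itself is zero; therefore $p_{\mathrm{osc}}\equiv 0 \iff \pha{a} + \mathrm{conj}(\pha{b}) = 0$. Separating this complex equation into real and imaginary parts yields exactly \eqref{eq:posc1-zero} and \eqref{eq:posc2-zero}, where the conjugation flips the sign on the imaginary part and thus produces the $-$ in the real-part equation and the $+$ in the imaginary-part equation.

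The reactive-power case is handled identically, replacing $\Re$ by $\Im$. The relevant identity is now $\Im\{z\} = -\Im\{\mathrm{conj}(z)\}$, so $\Im\{\pha{b}\,e^{-j2\omega t}\} = -\Im\{\mathrm{conj}(\pha{b})\,e^{j2\omega t}\}$ and thus $q_{\mathrm{osc}}(t) = \Im\{(\pha{a} - \mathrm{conj}(\pha{b}))\,e^{j2\omega t}\}$, whose identical vanishing is equivalent to $\pha{a} - \mathrm{conj}(\pha{b}) = 0$; splitting into real and imaginary parts gives \eqref{eq:qosc1-zero} and \eqref{eq:qosc2-zero}.

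The main obstacle -- really the only delicate point -- is the sign bookkeeping in this folding step: the opposite signs introduced by conjugation in the real-part versus imaginary-part identities are precisely what distinguish the $p_{\mathrm{osc}}=0$ conditions from the $q_{\mathrm{osc}}=0$ conditions, so one must attribute each sign correctly rather than collapsing the two cases into one. I would also state the ``identically zero iff phasor zero'' step carefully, justifying it through the linear independence of the two sinusoids so that the ``for all $t$'' quantifier is handled rigorously; this is what converts a single real identity into the required \emph{pair} of scalar conditions.
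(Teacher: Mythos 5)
Your proof is correct, and it takes a genuinely different route from the paper's. The paper works entirely in polar form: it expands $p_{\mathrm{osc}}$ and $q_{\mathrm{osc}}$ as $v^{+}i^{-}\cos(\theta_{\mathrm{v}}^{+}-\theta_{\mathrm{c}}^{-}) + v^{-}i^{+}\cos(\theta_{\mathrm{v}}^{-}-\theta_{\mathrm{c}}^{+})$ (resp.\ with sines), deduces from the ``for all $t$'' quantifier that the amplitudes must match and the cosines (resp.\ sines) must be opposite, and then obtains the \emph{second} scalar condition in each pair via a proof by contradiction (showing $\tan(\theta_{\mathrm{v}}^{+}-\theta_{\mathrm{c}}^{-})=\tan(\theta_{\mathrm{v}}^{-}-\theta_{\mathrm{c}}^{+})$ would force an identity between a function of $+2\omega t$ and one of $-2\omega t$, which is impossible). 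Your argument instead folds the counter-rotating term onto the forward-rotating one via conjugation, reducing each equivalence to the single complex condition $\pha{a}\pm\mathrm{conj}(\pha{b})=0$, whose real/imaginary split delivers both scalar equalities simultaneously; the linear independence of $\cos 2\omega t$ and $\sin 2\omega t$ plays exactly the role that the paper's angle-chasing does, but you avoid the case analysis and the contradiction step entirely, and the sign bookkeeping between the $p$ and $q$ cases is transparent. The only nitpick is the parenthetical attribution of signs at the end of your second paragraph: in $\pha{a}+\mathrm{conj}(\pha{b})=0$ the minus in \eqref{eq:posc1-zero} comes from moving $\Re\{\pha{b}\}$ across the equality, while the conjugation is what produces the plus in \eqref{eq:posc2-zero}; the stated conclusions are nonetheless the right ones. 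One could also note explicitly that the constant-phasor identity $\pha{a}=-\mathrm{conj}(\pha{b})$ is equivalent to the time-dependent conditions \eqref{eq:posc1-zero}--\eqref{eq:posc2-zero} holding at every instant, since multiplying by $e^{\pm j2\omega t}$ preserves the relation; this closes the small gap between your phasor formulation and the lemma's statement, which is written in terms of the rotating products.
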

\begin{proof}
The sufficiency is self-evident as shown in \eqref{eq:posc1-zero} and \eqref{eq:qosc1-zero}. We show below that $p_{\mathrm{osc}} = 0$ or $q_{\mathrm{osc}} = 0$ will also lead to \eqref{eq:posc2-zero} or \eqref{eq:qosc2-zero}, respectively.

The sine and cosine terms in \eqref{eq:pq-osc} are functions of time $t$ as they contain components of twice the fundamental frequency, i.e., $2 \omega t$. Therefore, $p_{\mathrm{osc}} = 0$, for any time $t$, leads to
\begin{equation}
\label{eq:posc-zero1}
    v^{+}i^{-} = v^{-}i^{+}, \quad \cos{(\theta_{\mathrm{v}}^{+} - \theta_{\mathrm{c}}^{-})} = -\cos{(\theta_{\mathrm{v}}^{-} - \theta_{\mathrm{c}}^{+})}.
\end{equation}
Similarly, $q_{\mathrm{osc}} = 0$, for any time $t$, leads to
\begin{equation}
\label{eq:qosc-zero1}
    v^{+}i^{-} = v^{-}i^{+}, \quad \sin{(\theta_{\mathrm{v}}^{+} - \theta_{\mathrm{c}}^{-})} = -\sin{(\theta_{\mathrm{v}}^{-} - \theta_{\mathrm{c}}^{+})}.
\end{equation}
We further show that $p_{\mathrm{osc}} = 0$ and $q_{\mathrm{osc}} = 0$, more specifically, \eqref{eq:posc-zero1} and \eqref{eq:qosc-zero1}, are mutually exclusive (cannot hold simultaneously). The proof by contradiction is as follows: \eqref{eq:posc-zero1} and \eqref{eq:qosc-zero1} will lead to $\tan{(\theta_{\mathrm{v}}^{+} - \theta_{\mathrm{c}}^{-})} = \tan{(\theta_{\mathrm{v}}^{-} - \theta_{\mathrm{c}}^{+})}$, and further, $\theta_{\mathrm{v}}^{+} - \theta_{\mathrm{c}}^{-} = \theta_{\mathrm{v}}^{-} - \theta_{\mathrm{c}}^{+} + k \pi,\ \forall k \in \mathbb{Z}$. This cannot hold for any time $t$ since the left-hand side is a function of $2 \omega t$ while the right-hand side is a function of $-2 \omega t$. Hence, an accompanying result of \eqref{eq:posc-zero1} is that $\sin{(\theta_{\mathrm{v}}^{+} - \theta_{\mathrm{c}}^{-})} = \sin{(\theta_{\mathrm{v}}^{-} - \theta_{\mathrm{c}}^{+})}$ while an accompanying result of \eqref{eq:qosc-zero1} is that $\cos{(\theta_{\mathrm{v}}^{+} - \theta_{\mathrm{c}}^{-})} = \cos{(\theta_{\mathrm{v}}^{-} - \theta_{\mathrm{c}}^{+})}$. This completes the proof of the necessity by recalling \eqref{eq:pq-osc}.
\end{proof}

\begin{proposition}
\label{prop-pqosc}
\textit{Necessary and Sufficient Conditions for Power Non-Oscillation:}
Given the positive-sequence current component $\pha{i}^{+}$, $p_{\mathrm{osc}} = 0$ holds if and only if it holds that
\begin{equation}
\label{eq:posc-zero-condition}
    \pha{i}^{-} = -\frac{\pha{v}^{-}}{\mathrm{conj}(\pha{v}^{+})} \mathrm{conj}(\pha{i}^{+});
\end{equation}
Moreover, $q_{\mathrm{osc}} = 0$ holds if and only if it holds that
\begin{equation}
\label{eq:qosc-zero-condition}
    \pha{i}^{-} = \frac{\pha{v}^{-}}{\mathrm{conj}(\pha{v}^{+})} \mathrm{conj}(\pha{i}^{+}).
\end{equation}
\end{proposition}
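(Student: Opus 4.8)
The plan is to build directly on Lemma~\ref{lem-pqosc}, which already characterizes $p_{\mathrm{osc}} = 0$ and $q_{\mathrm{osc}} = 0$ through paired constraints on the real and imaginary parts of $\pha{v}^{+} \mathrm{conj}(\pha{i}^{-})$ and $\pha{v}^{-} \mathrm{conj}(\pha{i}^{+})$. The central observation is that such a pair of real/imaginary constraints collapses into a single complex identity: for arbitrary complex numbers $a$ and $b$, the conditions $\Re\{a\} = -\Re\{b\}$ and $\Im\{a\} = \Im\{b\}$ together are equivalent to $a = -\mathrm{conj}(b)$, whereas $\Re\{a\} = \Re\{b\}$ and $\Im\{a\} = -\Im\{b\}$ together are equivalent to $a = \mathrm{conj}(b)$. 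I would apply this with $a = \pha{v}^{+} \mathrm{conj}(\pha{i}^{-})$ and $b = \pha{v}^{-} \mathrm{conj}(\pha{i}^{+})$ to recast the two-line conditions of the lemma as compact complex equalities.

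First I would treat the $p_{\mathrm{osc}} = 0$ case. Combining \eqref{eq:posc-zero} with the observation above shows that $p_{\mathrm{osc}} = 0$ holds if and only if $\pha{v}^{+} \mathrm{conj}(\pha{i}^{-}) = -\mathrm{conj}(\pha{v}^{-} \mathrm{conj}(\pha{i}^{+}))$. Expanding the conjugation on the right, $\mathrm{conj}(\pha{v}^{-} \mathrm{conj}(\pha{i}^{+})) = \mathrm{conj}(\pha{v}^{-}) \pha{i}^{+}$, so the equality becomes $\pha{v}^{+} \mathrm{conj}(\pha{i}^{-}) = -\mathrm{conj}(\pha{v}^{-}) \pha{i}^{+}$. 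Taking the conjugate of both sides yields $\mathrm{conj}(\pha{v}^{+}) \pha{i}^{-} = -\pha{v}^{-} \mathrm{conj}(\pha{i}^{+})$, and dividing through by $\mathrm{conj}(\pha{v}^{+})$ gives exactly \eqref{eq:posc-zero-condition}.

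The $q_{\mathrm{osc}} = 0$ case would proceed identically with the sign pattern flipped: \eqref{eq:qosc-zero} and the observation give $\pha{v}^{+} \mathrm{conj}(\pha{i}^{-}) = \mathrm{conj}(\pha{v}^{-}) \pha{i}^{+}$, and the same conjugate-and-rearrange step produces \eqref{eq:qosc-zero-condition}. Because each manipulation is a logical equivalence rather than a one-way implication, necessity and sufficiency are obtained simultaneously, so no separate converse argument is needed.

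I do not expect a genuine obstacle, since all of the analytic content is already carried by Lemma~\ref{lem-pqosc}; the residual work is only the bookkeeping of the conjugation identities. The single point deserving explicit mention is the implicit assumption $\pha{v}^{+} \neq 0$, which is required to divide by $\mathrm{conj}(\pha{v}^{+})$ and solve for $\pha{i}^{-}$; this holds in any grid-fault scenario that retains a nonzero positive-sequence voltage.
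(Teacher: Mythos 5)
Your proposal is correct and follows essentially the same route as the paper: invoke Lemma~\ref{lem-pqosc}, recast the paired real/imaginary-part conditions as the single complex identity $\pha{v}^{+}\,\mathrm{conj}(\pha{i}^{-}) = \mp\,\mathrm{conj}\!\left[\pha{v}^{-}\,\mathrm{conj}(\pha{i}^{+})\right]$, and rearrange by conjugation to isolate $\pha{i}^{-}$. Your explicit remark that $\pha{v}^{+} \neq 0$ is needed to divide by $\mathrm{conj}(\pha{v}^{+})$ is a small but worthwhile addition that the paper leaves implicit.
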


\begin{proof}
Consider $\pha{s}_{\mathrm{osc}} = \pha{v}^{+} \mathrm{conj}(\pha{i}^{-}) + \pha{v}^{-} \mathrm{conj}(\pha{i}^{+}) $ as in \eqref{eq:complex-power}. By leveraging Lemma~\ref{lem-pqosc}, it follows that $p_{\mathrm{osc}} = 0$ holds if and only if $\pha{v}^{+} \mathrm{conj}(\pha{i}^{-})$ and $\pha{v}^{-} \mathrm{conj}(\pha{i}^{+})$ have opposite real parts while the same imaginary parts. This is equivalent to the relationship that $\pha{v}^{+} \mathrm{conj}(\pha{i}^{-}) = -\mathrm{conj} \left[\pha{v}^{-} \mathrm{conj}(\pha{i}^{+})\right]$, which is further equivalent to \eqref{eq:posc-zero-condition}. In a similar vein, $q_{\mathrm{osc}} = 0$ holds if and only if $\pha{v}^{+} \mathrm{conj}(\pha{i}^{-})$ and $\pha{v}^{-} \mathrm{conj}(\pha{i}^{+})$ have the same real parts while opposite imaginary parts, as shown in Lemma~\ref{lem-pqosc}. This is equivalent to $\pha{v}^{+} \mathrm{conj}(\pha{i}^{-}) = \mathrm{conj} \left[\pha{v}^{-} \mathrm{conj}(\pha{i}^{+})\right]$ and further equivalent to \eqref{eq:qosc-zero-condition}.
\end{proof}

\section{Current-Limiting Strategies: \\A Categorized Review}
\label{appendix:current-limiting}

We consider typical existing current-limiting strategies and categorize them into three types to explicitly indicate three different technical routes to practical applications.

\subsection{Type A: Adaptive/Threshold Virtual Impedance Control}
\label{sec:virtual-impedance}

Virtual impedance control is motivated by the requirements of reshaping the network impedance characteristics to improve dynamic performance and power-sharing \cite{vijay2021adaptive} and by the need to limit the fault current during grid faults \cite{paquette2015virtual}. We introduce a fixed and an adaptive virtual impedance in the following.

A virtual impedance control module is explicitly employed to generate a voltage drop based on the current feedback \cite{paquette2015virtual},
\begin{equation}
\label{eq:virtual-impedance}
    \Delta \pha{\hat{v}}^{+} = \left ( r_{\mathrm{v}} + j x_{\mathrm{v}} \right) \pha{i}^{+},
\end{equation}
where $ r_{\mathrm{v}} + j x_{\mathrm{v}}$ denotes a virtual impedance. The voltage drop $\Delta \pha{\hat{v}}^{+}$ is then subtracted from the voltage reference $ \pha{\hat{v}}$.

A fixed virtual impedance cannot adapt to grid fault disturbances of different severity. To overcome this limitation, a current feedback-based adaptive virtual impedance has been proposed \cite{paquette2015virtual,qoria2020current}, which is arranged as follows,
\begin{equation}
\label{eq:adaptive-impedance}
    x_{\mathrm{v}} = \sigma_{\mathrm{vi}} r_{\mathrm{v}}, \  
    r_{\mathrm{v}} = 
    \begin{cases}
        0, & \abs{\pha{i}} \leq I_{\mathrm{th}}, \\
        \kappa_{\mathrm{vi}} \left(\abs{\pha{i}} - I_{\mathrm{th}}\right), & \abs{\pha{i}} > I_{\mathrm{th}},
    \end{cases}
\end{equation}
where $\sigma_{\mathrm{vi}}$ is a desired $X/R$ ratio, $\kappa_{\mathrm{vi}}$ is a proportional feedback gain, and $I_{\mathrm{th}}$ is a current-limiting threshold ($I_{\mathrm{th}} < I_{\lim}$). The choice of $\kappa_{\mathrm{vi}}$ is important to strictly limit the current magnitude under the maximum current. In the worst case, where a three-phase bolted fault is considered, such that $\pha{v} = 0$, the voltage-forming reference should be canceled by the virtual impedance voltage drop. Thus, $\kappa_{\mathrm{vi}}$ should satisfy that \cite{fan2022review}
\begin{equation}
\begin{aligned}
    \abs{\pha{\hat v}} \leq \abs{\pha{i}} \abs{r_{\mathrm{v}} + j x_{\mathrm{v}}} &= \abs{\pha{i}} \kappa_{\mathrm{vi}} \sqrt{\sigma_{\mathrm{vi}}^2 + 1} \bigl \vert \abs{\pha{i}} - I_{\mathrm{th}}\bigr \vert \\
    &\leq I_{\lim} \kappa_{\mathrm{vi}} \sqrt{\sigma_{\mathrm{vi}}^2 + 1} \left(I_{\lim} - I_{\mathrm{th}}\right), \\
    \Rightarrow \kappa_{\mathrm{vi}} &\geq \frac{\abs{\pha{\hat v}}}{I_{\lim} \sqrt{\sigma_{\mathrm{vi}}^2 + 1} \left(I_{\lim} - I_{\mathrm{th}}\right)}.
\end{aligned}
\end{equation}

The steady-state current settles in between the threshold $I_{\mathrm{th}}$ and the maximum current $I_{\lim}$ in most cases where the grid fault is not a bolted fault. This implies that the maximum overcurrent capability is underutilized. To overcome this problem, a separate proportional-integral control loop has been applied in \cite{rosso2021implementation} and \cite{nasr2023controlling} to correct or estimate the ongoing voltage drop across the virtual impedance. However, this necessitates further multi-loop interaction management and parameter tuning. Moreover, it has been indicated in \cite{taoufik2022variable} that the $X/R$ ratio, $\sigma_{\mathrm{vi}}$, may need to be adaptively adjusted to reach a compromise between the damping of the current response and the margin of transient stability.

The virtual impedance strategy has been extended to more general asymmetrical fault conditions \cite{baeckeland2022stationary,zhang2023simultaneous}. In respect thereof, the 2-norm $\abs{\pha{i}}$ cannot represent the maximum phase current magnitude. Instead, the current magnitude per phase should be detected, and the maximum phase current magnitude is used as the current feedback \cite{baeckeland2022stationary,zhang2023simultaneous}.

\subsection{Type B: Current Limiter With Virtual Admittance Control}
\label{sec:current-limiter}

Current limiters offer a more intuitive approach to current limiting than the indirect method of virtual impedance emulation. Various types of current limiters have been reported in the literature, with different implementations in different coordinates, e.g., $abc$ natural reference frame, $\alpha\beta$ stationary reference frame, and $dq$ synchronous
reference frame, as surveyed in \cite{zhang2021gridforming}. We recall the most commonly used current limiter in the following.

\subsubsection{Current Limiter for Balanced Conditions}
When the current reference is balanced, three-phase currents have the same magnitude, which is $\vert {\pha{\hat i}} \vert$. The magnitude-limited current reference, $\pha{\overline i}$, is then determined by a circular limiter as
\begin{gather}
\label{eq:current-limiter-bal}
    \pha{\overline i} =
    \begin{cases}
        \pha{\hat i}, & \lvert \pha{\hat i} \rvert \leq I_{\lim},\\
        \frac{I_{\lim}}{\lvert \pha{\hat i} \rvert} \pha{\hat i}, & \lvert \pha{\hat i} \rvert > I_{\lim}.
    \end{cases}
\end{gather}
The circular limiter is depicted in Fig.~\ref{fig:current-limiter-albe}(a). It also directly applies to a balanced current reference in $dq$ coordinates.

\subsubsection{Current Limiter for Unbalanced Conditions}
The magnitude limiting for an unbalanced current reference is not as direct as the balanced condition. The main difference is that three-phase currents have different magnitudes and the current vector, $\pha {\hat i} = \hat i_{\alpha} + j \hat i_{\beta}$, rotates according to an ellipse rather than a circle, as illustrated in Fig.~\ref{fig:current-limiter-albe}(b). The projection of the ellipse to each phase axis corresponds to the phase current. To avoid the overcurrent of any phase, the ellipse needs to be scaled down such that the maximum current magnitude is within the limit value. To do so, the magnitude of the phase currents needs to be identified, which is given as \cite{awal2023double}
\begin{equation}
\label{eq:phase-current-mag}
    \hat I_{x}^{\mathrm{m}} = \sqrt{\vert \pha{\hat i}^{+} \vert^2 + \vert \pha{\hat i}^{-} \vert^2 + 2 \Re \{\pha{\hat i}^{+} \pha{\hat i}^{-} e^{j2\lambda_x} \} },
\end{equation}
where $x \in \{a,b,c\}$ and $\lambda_x \in \{0, -2\pi/3, 2\pi/3\}$ respectively. Based on the phase current magnitudes, the elliptical current limiter is formulated as \cite{sadeghkhani2017current,zarei2019reniforcing}
\begin{gather}
\label{eq:current-limiter-unbal}
    \pha{\overline i} =
    \begin{cases}
        \pha{\hat i}, & \max \{\hat{I}_{a}^{\mathrm{m}},\hat{I}_{b}^{\mathrm{m}},\hat{I}_{c}^{\mathrm{m}} \} \leq I_{\lim},\\
        \frac{I_{\lim}}{\max \{\hat{I}_{a}^{\mathrm{m}},\hat{I}_{b}^{\mathrm{m}},\hat{I}_{c}^{\mathrm{m}} \}} \pha{\hat i}, & \max \{\hat{I}_{a}^{\mathrm{m}},\hat{I}_{b}^{\mathrm{m}},\hat{I}_{c}^{\mathrm{m}} \} > I_{\lim}.
    \end{cases}
\end{gather}
Since $\pha{\hat i} = \pha{\hat i}^{+} + \pha{\hat i}^{-}$, the positive- and negative-sequence current references are scaled down equally. We notice that the limiter in \eqref{eq:current-limiter-bal}, where $\lvert \pha{\hat i} \rvert = \hat{I}_{a}^{\mathrm{m}} = \hat{I}_{b}^{\mathrm{m}} = \hat{I}_{c}^{\mathrm{m}}$, is a special case of \eqref{eq:current-limiter-unbal}.

\begin{figure}
  \begin{center}
  \includegraphics{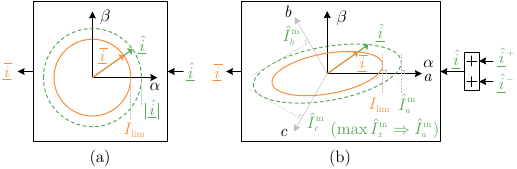}
  \caption{Illustration of current limiters in $\alpha\beta$ coordinates. (a) Circular current limiter for balanced conditions. (b) Elliptical current limiter for unbalanced conditions.}
  \label{fig:current-limiter-albe}
  \end{center}
\end{figure}

\begin{figure}
  \begin{center}
  \includegraphics{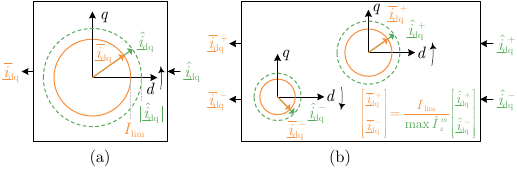}
  \caption{Illustration of current limiters in $dq$ coordinates. (a) Circular current limiter for balanced conditions. (b) Two circular current limiters (with equal scaling factors) for unbalanced conditions.}
  \label{fig:current-limiter-dq}
  \end{center}
\end{figure}

The operation of the current reference limiting in \eqref{eq:phase-current-mag} and \eqref{eq:current-limiter-unbal} can be extended to $dq$ coordinates, i.e.,
\begin{equation}
\label{eq:phase-current-mag-dq}
    \hat I_{x}^{\mathrm{m}} = \sqrt{\vert \pha{\hat i}_\mathrm{dq}^{+} \vert^2 + \vert \pha{\hat i}_\mathrm{dq}^{-} \vert^2 + 2 \Re \{\pha{\hat i}_\mathrm{dq}^{+} \pha{\hat i}_\mathrm{dq}^{-} e^{j2\lambda_x} \} },
\end{equation}
where $\pha{\hat i}_\mathrm{dq}^{+} \coloneqq e^{-j\hat \theta} \pha{\hat i}^{+}$ and $\pha{\hat i}_\mathrm{dq}^{-} \coloneqq e^{j\hat\theta} \pha{\hat i}^{-}$ imply that $\vert \pha{\hat i}^{+} \vert = \vert \pha{\hat i}_\mathrm{dq}^{+} \vert$, $\vert \pha{\hat i}^{-} \vert = \vert \pha{\hat i}_\mathrm{dq}^{-} \vert$, and $\pha{\hat i}_\mathrm{dq}^{+} \pha{\hat i}_\mathrm{dq}^{-} = \pha{\hat i}^{+} \pha{\hat i}^{-}$. Accordingly, the current limiter in $dq$ coordinates for unbalanced cases is given as
\begin{gather}
\label{eq:current-limiter-unbal-dq}
    \begin{bmatrix}
        \pha{\overline i}_\mathrm{dq}^{+} \\
        \pha{\overline i}_\mathrm{dq}^{-}
    \end{bmatrix} =
    \begin{cases}
        \begin{bmatrix}
            \pha{\hat i}_\mathrm{dq}^{+} \\
            \pha{\hat i}_\mathrm{dq}^{-}
        \end{bmatrix}, & \max \{\hat{I}_{a}^{\mathrm{m}},\hat{I}_{b}^{\mathrm{m}},\hat{I}_{c}^{\mathrm{m}} \} \leq I_{\lim},\\
        \frac{I_{\lim}}{\max \{\hat{I}_{a}^{\mathrm{m}},\hat{I}_{b}^{\mathrm{m}},\hat{I}_{c}^{\mathrm{m}} \}}
        \begin{bmatrix}
        \pha{\hat i}_\mathrm{dq}^{+} \\
        \pha{\hat i}_\mathrm{dq}^{-}
        \end{bmatrix}, & \max \{\hat{I}_{a}^{\mathrm{m}},\hat{I}_{b}^{\mathrm{m}},\hat{I}_{c}^{\mathrm{m}} \} > I_{\lim}.
    \end{cases}
\end{gather}
Since $\pha{\hat i}_\mathrm{dq}^{+}$ and $\pha{\hat i}_\mathrm{dq}^{-}$ are expressed in two different reference frames, the current limiter in \eqref{eq:current-limiter-unbal-dq} should be represented by two separate circular limiters, as illustrated in Fig.~\ref{fig:current-limiter-dq}.

\subsubsection{Virtual Admittance Serving for Anti-Windup}

When using integrator-included (PI or PR) voltage controllers, it is necessary to configure anti-windup along with a current limiter to avoid accumulating a significant control error in the integrator during current saturation. Generally, \textit{integrator clamping} (also known as conditional integration) and \textit{back-calculation} (also known as tracking integration) are two standard anti-windup methods used in industry for PI \cite{choi2009anti} and PR regulators \cite{ghoshal2010anti}.
\begin{itemize}
    \item The integrator clamping method disables the integrator whenever the output is saturated \cite{fan2022equivalent}. In addition, for a PR regulator, the resonant integrator should be reset to zero when disabled to avoid leaving a constant offset \cite{ghoshal2010anti}.
    \item The back-calculation method introduces a feedback loop from the portion of the output that exceeds the limiter \cite{ajala2021model,zhang2023active}, reshaping the PI regulator into a lead or lag filter, and the PR regulator a band-pass or band-stop filter.
\end{itemize}
Both anti-windup methods intentionally ``turn off" the integration during saturation, and afterward, the proportional regulator dominates the feedback control. In this sense, we link anti-windup with the virtual admittance control in \eqref{eq:virtual-admittance}, since a virtual admittance can be seen as a proportional-like regulator, which does not suffer from windup. Specifically, the virtual admittance control as in \eqref{eq:virtual-admittance} revises the voltage PR regulator in \eqref{eq:voltage-tracking} into an equivalent admittance, shown below again,
\begin{equation}
\label{eq:virtual-admittance-duplicate}
    \pha{\hat i}^{+} = \frac{1}{r_\mathrm{v} + l_\mathrm{v}s} \left(\pha{\hat v} - \pha{v}^{+} \right)\ \mathrm{or}\ \pha{\hat i}^{+} = \frac{1}{r_\mathrm{v} + j x_\mathrm{v}} \left(\pha{\hat v} - \pha{v}^{+} \right).
\end{equation}
Note that the virtual admittance control in \eqref{eq:virtual-admittance-duplicate} and the virtual impedance control in \eqref{eq:virtual-impedance} are equivalent in the steady state where the current is unsaturated, i.e., $\pha{i}^{+} = \pha{\hat i}^{+}$ and $\Delta \pha{\hat{v}}^{+} = \pha{\hat v} - \pha{v}^{+}$. However, when the current is saturated, both controls lead to different equivalent impedances; see Proposition~\ref{prop:unified-circuit}.

Other specific anti-windup schemes for voltage-forming inverters include: adjusting the outer-loop active and reactive power reference \cite{taul2020current}, limiting the outer-loop voltage and power reference \cite{chen2020use}, applying a virtual impedance to reduce the voltage reference \cite{zarei2019reniforcing}, moving the current limiter to the outer layer of the voltage-forming reference \cite{awal2023double,jiang2024current}, etc.

\subsection{Type C: Saturated-Current-Forming Control}

During current saturation, it is also possible to deactivate or bypass the voltage control and solely maintain the current control. Thus, this type of control falls into the scope of current-forming control, as described in Definition~\ref{def:current-forming} and Table~\ref{tab:gfm-operation-mode}. In this respect, the reference angle of the current vector control can still be generated by the original angle forming (frequency forming) loop, e.g., $\dot \theta = \omega_0 + k_p (p^{\ast}  - p)$, or switched to a PLL. Based on the latter, many recent studies have explored directly specifying current references (in $dq$ coordinates) during current saturation \cite{huang2019transient,rokrok2022transient,liu2023dynamic,li2023transient,wang2023transient}, where various stabilizing remedies have been developed, such as using the q-axis voltage feedback \cite{huang2019transient} and adjusting the current reference angle \cite{rokrok2022transient,liu2023dynamic,li2023transient}. Since this control falls into the current-forming type, the voltage-forming functionality cannot be provided independently. To achieve this, additional outer control loops are needed \cite{xin2021dual,schweizer2022grid}, but they may suffer from limited control bandwidth.

\subsection{Unified Equivalent Circuit for Type-A and -B Strategies}
\label{sec:unified-circuit}

\begin{figure}
  \begin{center}
  \includegraphics{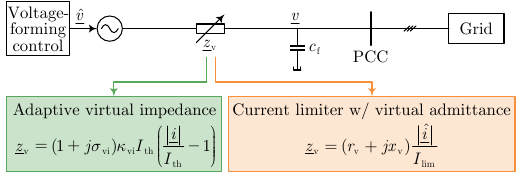}
  \caption{A unified equivalent circuit representation \cite{wu2024design} for both the adaptive virtual impedance control \cite{paquette2015virtual,qoria2020current} and the current limiter with virtual admittance control \cite{rosso2021implementation,kkuni2024effects,zhang2023current,fan2022equivalent,saffar2023impacts}, where the resulting equivalent impedance is current-dependent for both current-limiting strategies.}
  \label{fig:unified-circuit}
  \end{center}
\end{figure}

A unified equivalent circuit has been established to describe the output behavior of voltage-forming inverters with an activated adaptive virtual impedance or an activated current limiter with a virtual admittance \cite{wu2024design}. For a balanced condition, the unified circuit is reformulated precisely in the following Proposition~\ref{prop:unified-circuit}. The result can also be extended to an unbalanced condition, where the circuit will refer to the positive-sequence domain.

\begin{proposition}
\label{prop:unified-circuit}
\textit{Unified and Current-Dependent Equivalent Circuit:}
Consider a voltage-forming inverter under current saturation, where either the adaptive virtual impedance in \eqref{eq:adaptive-impedance} or the current limiter in \eqref{eq:current-limiter-bal} alongside the virtual admittance in \eqref{eq:virtual-admittance-duplicate} is activated. The output behavior of the voltage-forming inverter can be represented by a unified equivalent circuit in Fig.~\ref{fig:unified-circuit}, where the equivalent virtual impedance $\pha z_{\mathrm{v}}$ is dependent on the current $\pha{i}$ or its reference $\pha{\hat i}$, respectively, as follows,
\begin{align}
\label{eq:zv1}
    \pha z_{\mathrm{v}} &= (1 + j\sigma_{\mathrm{vi}}) \kappa_{\mathrm{vi}} I_{\mathrm{th}} \left( \frac{\abs{\pha{i}}}{I_{\mathrm{th}}} - 1\right), \\
\label{eq:zv2}
    \pha z_{\mathrm{v}} &= \left(r_\mathrm{v} + j x_\mathrm{v}\right) \frac{\vert\pha{\hat i} \vert}{I_{\mathrm{\lim}}},
\end{align}
and the currents $\abs{\pha{i}}$ and $\vert\pha{\hat i} \vert$ are dependent on both the state of the outer voltage-forming controller and the exogenous circuit.
\end{proposition}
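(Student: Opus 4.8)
The plan is to establish the claimed equivalent circuit by identifying, in each of the two strategies, the effective series impedance $\pha z_{\mathrm v}$ relating the voltage-forming reference $\pha{\hat v}$ to the measured terminal voltage $\pha v^{+}$ through the output current, i.e.\ the Thevenin relation $\pha{\hat v} = \pha v^{+} + \pha z_{\mathrm v}\,\pha i^{+}$ depicted in Fig.~\ref{fig:unified-circuit}. I would treat the two cases separately, since although they produce the same circuit topology, the current on which $\pha z_{\mathrm v}$ depends differs, and the supporting idealization (voltage tracking versus current tracking) differs accordingly.

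For the adaptive virtual impedance strategy, I would start from the voltage-drop definition \eqref{eq:virtual-impedance}, $\Delta\pha{\hat v}^{+} = (r_{\mathrm v} + j x_{\mathrm v})\pha i^{+}$, which is subtracted from $\pha{\hat v}$ to form the modified terminal-voltage reference; assuming the inner voltage loop tracks this reference, so that $\pha v^{+} = \pha{\hat v} - \Delta\pha{\hat v}^{+}$, one reads off $\pha z_{\mathrm v} = r_{\mathrm v} + j x_{\mathrm v}$ from $\pha{\hat v} = \pha v^{+} + \pha z_{\mathrm v}\,\pha i^{+}$. Under current saturation ($\abs{\pha i} > I_{\mathrm{th}}$) I then substitute the adaptive law \eqref{eq:adaptive-impedance}, namely $x_{\mathrm v} = \sigma_{\mathrm{vi}} r_{\mathrm v}$ and $r_{\mathrm v} = \kappa_{\mathrm{vi}}(\abs{\pha i} - I_{\mathrm{th}})$, giving $\pha z_{\mathrm v} = (1 + j\sigma_{\mathrm{vi}})\kappa_{\mathrm{vi}}(\abs{\pha i} - I_{\mathrm{th}})$; factoring $I_{\mathrm{th}}$ out of the bracket yields \eqref{eq:zv1} directly. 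This case is pure algebra once the active branch of the piecewise law is selected.

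For the current limiter with virtual admittance, I would begin from \eqref{eq:virtual-admittance-duplicate}, rearranged as $\pha{\hat v} - \pha v^{+} = (r_{\mathrm v} + j x_{\mathrm v})\pha{\hat i}^{+}$, which relates the voltage error to the \emph{unsaturated reference} rather than to the output current. Invoking the current-tracking assumption $\pha i^{+} = \pha{\overline i}$ together with the saturated branch of the circular limiter \eqref{eq:current-limiter-bal}, $\pha{\overline i} = (I_{\lim}/\abs{\pha{\hat i}^{+}})\,\pha{\hat i}^{+}$, which inverts to $\pha{\hat i}^{+} = (\abs{\pha{\hat i}^{+}}/I_{\lim})\,\pha i^{+}$ since the limiter rescales the magnitude while preserving the angle, I substitute into the rearranged admittance relation. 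Collecting the output current on the right-hand side gives $\pha z_{\mathrm v} = (r_{\mathrm v} + j x_{\mathrm v})\,\abs{\pha{\hat i}}/I_{\lim}$, which is \eqref{eq:zv2} with $\pha{\hat i} = \pha{\hat i}^{+}$ in the balanced case.

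The step deserving the most care — and the closest this proof comes to a genuine obstacle — is keeping track of which current $\pha z_{\mathrm v}$ depends on: the actual output magnitude $\abs{\pha i}$ enters Case A through direct current feedback, whereas the unsaturated reference magnitude $\abs{\pha{\hat i}}$ enters Case B through the limiter scaling factor. This is precisely why the current-tracking assumption and the angle-preserving property of the limiter are indispensable in Case B but not in Case A. As a consistency check confirming the \emph{unified} claim, both expressions collapse to a common physical impedance at the onset of saturation: \eqref{eq:zv1} vanishes as $\abs{\pha i}\to I_{\mathrm{th}}$ while \eqref{eq:zv2} tends to $r_{\mathrm v} + j x_{\mathrm v}$ as $\abs{\pha{\hat i}}\to I_{\lim}$, recovering in both cases the unsaturated steady-state equivalence between the virtual admittance and the virtual impedance noted immediately after \eqref{eq:virtual-admittance-duplicate}.
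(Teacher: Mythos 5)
Your proposal is correct and follows essentially the same route as the paper: case A is the direct substitution of the saturated branch of the adaptive law \eqref{eq:adaptive-impedance} into $r_{\mathrm v}+jx_{\mathrm v}$, and case B rewrites \eqref{eq:virtual-admittance-duplicate} via the limiter scaling factor $I_{\lim}/\lvert\pha{\hat i}\rvert$ together with the current-tracking assumption $\pha i = \pha{\overline i}$, exactly as the paper does. Two cosmetic remarks: the paper also spends one sentence justifying the proposition's final clause (that $\abs{\pha i}$ and $\lvert\pha{\hat i}\rvert$ depend on the outer controller state and the exogenous circuit), which you omit, and your closing consistency check slightly overstates matters --- the two limits at the onset of saturation are $0$ and $r_{\mathrm v}+jx_{\mathrm v}$ respectively, i.e.\ each recovers its own unsaturated behavior rather than a single ``common'' impedance.
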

\begin{proof}
First, we consider the adaptive virtual impedance in \eqref{eq:adaptive-impedance}, which is derived as
\begin{equation}
    \begin{aligned}
        \pha z_{\mathrm{v}} = r_{\mathrm{v}} + jx_{\mathrm{v}} &= (1 + j\sigma_{\mathrm{vi}}) \kappa_{\mathrm{vi}} \left( \abs{\pha{i}} - I_{\mathrm{th}} \right) \\
        &= (1 + j\sigma_{\mathrm{vi}}) \kappa_{\mathrm{vi}} I_{\mathrm{th}} \left( \frac{\abs{\pha{i}}}{I_{\mathrm{th}}} - 1\right).
    \end{aligned}
\end{equation}
This completes the proof of \eqref{eq:zv1}. Next, we consider the current limiter in \eqref{eq:current-limiter-bal} and the virtual admittance in \eqref{eq:virtual-admittance-duplicate} (with a static admittance) for balanced conditions and arrive at
\begin{equation}
    \begin{aligned}
        \pha{\hat v} - \pha{v} = \pha{\hat i} \left(r_\mathrm{v} + j x_\mathrm{v} \right) &= \frac{\lvert \pha{\hat i} \rvert}{I_{\lim}} \left(r_\mathrm{v} + j x_\mathrm{v} \right) \pha{\overline i} \\
        &= \underbrace{\frac{\lvert \pha{\hat i} \rvert}{I_{\lim}} \left(r_\mathrm{v} + j x_\mathrm{v} \right)}_{\pha z_{\mathrm{v}}} \pha{i},
    \end{aligned}
\end{equation}
where it is assumed that the output current $\pha{i}$ tracks the saturated reference $\pha{\overline i}$. The proof of \eqref{eq:zv2} is completed. Finally, since $\abs{\pha{i}}$ in \eqref{eq:zv1} tracks its reference $\vert\pha{\hat i} \vert$, and the current references $\vert\pha{\hat i} \vert$ in both types of strategies are regulated with the voltage-forming reference input $\pha {\hat v}$ and the voltage measurement feedback $\pha v$, the dependence of the currents on the voltage-forming controller state and the exogenous circuit is evident.
\end{proof}

The fact that the resulting virtual impedance for current limiting is current-dependent is reasonable since a larger equivalent impedance is required to limit a higher overcurrent. On the other hand, this implies that the equivalent virtual impedance is unpredictable, depending on the fault current which is influenced by the fault severity. It has been found in \cite{fan2022equivalent,zhang2023current,saffar2023impacts} that the current-dependent equivalent impedance leads to a distorted power-angle relationship. Moreover, the unpredictable virtual impedance brings about significant difficulties in transient stability analysis and insufficient robustness to unforeseen fault disturbances.

The unified equivalent circuit establishes a connection between the two independently developed current-limiting strategies. However, this does not suggest that both strategies provide the same performance. The most significant difference is that the saturated current, while using the adaptive virtual impedance, is in between $I_{\mathrm{th}}$ and $I_{\lim}$, whereas the saturated current while using the current limiter is $I_{\lim}$. More comparisons can be found in Table~\ref{tab:comparison-current-limiting-strategies}.

\bibliographystyle{IEEEtran}
\bibliography{IEEEabrv,Bibliography}

\vspace{-8mm}
\begin{IEEEbiography}
[{\includegraphics[width=1in,height=1.25in,clip,keepaspectratio]{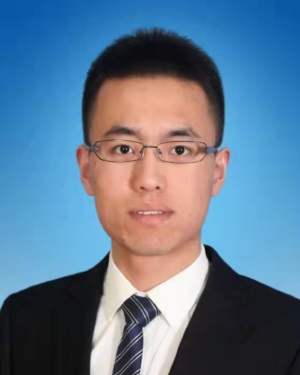}}]{Xiuqiang He (Member, IEEE)}
received the B.S. and Ph.D. degrees in control science and engineering from Tsinghua University, Beijing, China, in 2016 and 2021, respectively. From 2021 to 2024, he was a Postdoctoral Researcher with the Automatic Control Laboratory, ETH Zürich, Zürich, Switzerland, where he is currently a Senior Scientist. His research interests include power system dynamics, stability, and control, involving multidisciplinary expertise in automatic control, power systems, power electronics, and renewable energy generation. Dr. He was the recipient of the Beijing Outstanding Graduates Award and the Outstanding Doctoral Dissertation Award from Tsinghua University.
\end{IEEEbiography}

\vspace{-8mm}

\begin{IEEEbiography}[{\includegraphics[width=1in,height=1.25in,clip,keepaspectratio]{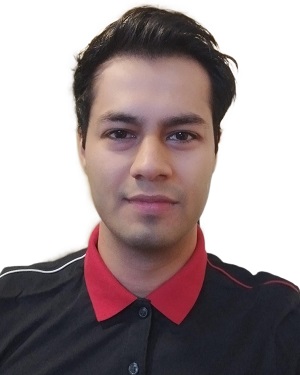}}]{Maitraya Avadhut Desai (Graduate Student Member, IEEE)} received the B.Tech. degree in electrical engineering from Veermata Jijabai Technological Institute, Mumbai, India, in 2021. He received the M.Sc. degree in electrical engineering and information technology from the Swiss Federal Institute of Technology (ETH), Zürich, Switzerland, where he has been working towards the Ph.D. degree in electrical engineering with the Power Systems Laboratory since February 2024. His research interests include modeling, control, and optimization of inverter-based power systems.
\end{IEEEbiography}

\vspace{-8mm}

\begin{IEEEbiography}
[{\includegraphics[width=1in,height=1.25in,clip,keepaspectratio]{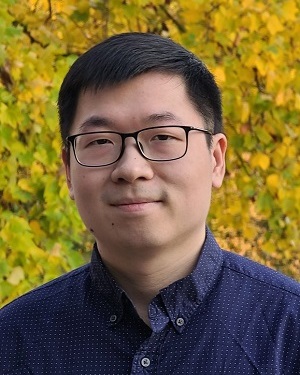}}]{Linbin Huang (Member, IEEE)}
received the B.Eng. and Ph.D. degrees in electrical engineering from Zhejiang University, Hangzhou, China, in 2015 and 2020, respectively. From 2020 to 2024, he was a Postdoctoral Researcher and then a Senior Scientist with the Automatic Control Laboratory at ETH Zürich, Zürich, Switzerland. He is currently a Professor with the College of Electrical Engineering, Zhejiang University, Hangzhou, China. His research interests include power system stability, optimal control of power electronics, and data-driven control.
\end{IEEEbiography}

\vspace{-8mm}

\begin{IEEEbiography}
[{\includegraphics[width=1in,height=1.25in,clip,keepaspectratio]{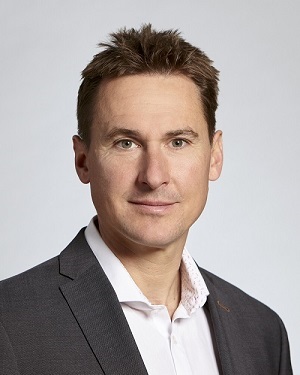}}]{Florian Dörfler (Senior Member, IEEE)}
received the Diploma degree in engineering cybernetics from the University of Stuttgart, Stuttgart, Germany, in 2008, and the Ph.D. degree in mechanical engineering from the University of California at Santa Barbara, Santa Barbara, CA, USA, in 2013. From 2013 to 2014, he was an Assistant Professor with the University of California Los Angeles. He has been serving as the Associate Head of the ETH Zürich Department of Information Technology and Electrical Engineering from 2021 until 2022. He is currently a Professor with the Automatic Control Laboratory, ETH Zürich, Zürich, Switzerland. His research interests include automatic control, system theory, and optimization. His particular foci are on network systems, data-driven settings, and applications to power systems. Dr. Dörfler was the recipient of the distinguished young research awards by IFAC (Manfred Thoma Medal 2020) and EUCA (European Control Award 2020). His students were winners or finalists for Best Student Paper awards at the European Control Conference (2013, 2019), the American Control Conference (2016,2024), the Conference on Decision and Control (2020), the PES General Meeting (2020), the PES PowerTech Conference (2017), the International Conference on Intelligent Transportation Systems (2021), the IEEE CSS Swiss Chapter Young Author Best Journal Paper Award (2022,2024), and the IFAC Conference on Nonlinear Model Predictive Control (2024). He is furthermore a recipient of the 2010 ACC Student Best Paper Award, the 2011 O. Hugo Schuck Best Paper Award, the 2012-2014 Automatica Best Paper Award, the 2016 IEEE Circuits and Systems Guillemin-Cauer Best Paper Award, the 2022 IEEE Transactions on Power Electronics Prize Paper Award, the 2024 Control Systems Magazine Outstanding Paper Award, and the 2015 UCSB ME Best PhD award. He is currently serving on the council of the European Control Association and as a Senior Editor of Automatica.
\end{IEEEbiography}

\end{document}